\documentclass[12pt]{article} 

\pdfoutput=1

\usepackage[margin=1in]{geometry} 

\usepackage{amsmath, amsthm, amssymb, mathtools}
\usepackage{thmtools}
\usepackage{thm-restate}
\usepackage[utf8]{inputenc}
\usepackage{enumerate}[shortlabels] 
\usepackage[
  bookmarks=true,
  bookmarksnumbered=true,
  bookmarksopen=true,
  pdfborder={0 0 0},
  breaklinks=true,
  colorlinks=true,
  linkcolor=black,
  citecolor=black,
  filecolor=black,
  urlcolor=black,
]{hyperref} 
\usepackage[table,x11names]{xcolor}
\usepackage{tikz} 
\usepackage{setspace}
\onehalfspacing

\usepackage[capitalise]{cleveref}

\theoremstyle{plain}
\newtheorem{theorem}{Theorem}
\numberwithin{theorem}{section}

\newtheorem{lemma}[theorem]{Lemma}
\newtheorem{proposition}[theorem]{Proposition}

\newtheorem*{open-question}{Open Question}

\theoremstyle{definition}
\newtheorem{definition}[theorem]{Definition}
\newtheorem{example}[theorem]{Example}
\newtheorem{remark}[theorem]{Remark}
\newtheorem{fact}[theorem]{Fact}

\newcommand{\Z}{\mathbb{Z}}
\newcommand{\R}{\mathbb{R}}
\newcommand{\N}{\mathbb{N}}

\newcommand{\mc}[1]{\mathcal{#1}}
\newcommand{\mcb}[1]{\boldsymbol{\mc{#1}}}

\newcommand{\Down}{\text{Down}}

\usepackage{algpseudocode} 
\usepackage{algorithm}
\usepackage{footnote} 
\makesavenoteenv{algorithm} 

\newcommand{\problem}{\bigl(N,G,(\succ_i)_{i\in N}\bigr)}
\newcommand{\extproblem}{\bigl(N,G,(\succ_i')_{i\in N}\bigr)}
\newcommand{\anotherproblem}{\bigl(N',G',(\succ_i')_{i\in N}\bigr)}

\newcommand{\subsuccnospace}{\tilde{\succ}}
\newcommand{\subsucc}{\ \subsuccnospace\ }
\newcommand{\subproblem}{\bigl(N, G', (\subsuccnospace_i)_{i \in N}\bigr)}
\renewcommand{\epsilon}{\varepsilon}

\usepackage[backend=biber,hyperref,doi,style=authoryear,url=false,
sorting=nyt,natbib=true,giveninits=true,uniquename=init]{biblatex}
\addbibresource{refs.bib}
\AtEveryBibitem{
\clearfield{note}
\clearfield{isbn}
\clearfield{issn}
\clearlist{publisher}
\clearfield{series}
\clearfield{eventtitle}
\clearlist{location}
}

\title{Stable Menus of Public Goods: A Matching Problem\thanks{Fish was supported by an NSF Graduate Research Fellowship and a Kempner Institute Graduate Fellowship. Gonczarowski's research was supported by the National Science Foundation (NSF-BSF grant No.\ 2343922), Harvard FAS Dean’s Competitive Fund for Promising Scholarship, and Harvard FAS Inequality in America Initiative. The authors thank Ulrike Schmidt-Kraepelin, Alex Meiburg, Noam Nisan, Alden Rogers, Ariel Procaccia, Marco Scarsini, Ran Shorrer, and participants at the 34th Stony Brook International Conference on Game Theory for helpful comments and insightful discussions.}}

\author{Sara Fish\thanks{School of Engineering and Applied Sciences, Harvard University | \emph{E-mail}: \mbox{\href{mailto:sfish@g.harvard.edu}{sfish@g.harvard.edu}}} \and Yannai A. Gonczarowski\thanks{Department of Economics and Department of Computer Science, Harvard University | \emph{E-mail}: \href{mailto:yannai@gonch.name}{yannai@gonch.name}.} \and Sergiu Hart\thanks{Federmann Center for the Study of Rationality, Bogen Department of Economics, and Einstein Institute of Mathematics, The Hebrew University of Jerusalem | \emph{E-mail}: \href{mailto:hart@huji.ac.il}{hart@huji.ac.il}.}}

\date{June 6, 2025}

\begin{document}

\begin{titlepage}
\maketitle 

\begin{abstract}
We study a matching problem between agents and public goods, in settings without monetary transfers. Since goods are public, they have no capacity constraints. There is no exogenously defined budget of goods to be provided. Rather, each provided good must justify its cost by being utilized by sufficiently many agents, leading to strong complementarities in the ``preferences'' of goods. Furthermore, goods that are in high demand given other already-provided goods must also be provided. The question of the existence of a stable solution (a menu of public goods to be provided) exhibits a rich combinatorial structure. We uncover sufficient conditions and necessary conditions for guaranteeing the existence of a stable solution, and derive both positive and negative results for strategyproof stable matching.
\end{abstract}

\thispagestyle{empty}
\end{titlepage}

\setcounter{tocdepth}{2}
\tableofcontents
\thispagestyle{empty}
\clearpage

\setcounter{page}{1}

\section{Introduction}\label{sec:introduction}

\textbf{Which courses should an online university offer?} Courses have no cap on participation, but were the university to offer a non-demanded course, it would not recoup the cost of the course. On the other hand, were it to fail to offer a very demanded course, students would object. \textbf{Where throughout a city would we expect to find ATMs (or vending machines)?} There is virtually no limit on the amount of users of a single machine, however if a machine is in a spot that is not sufficiently convenient for many people, this machine would not cover its cost. On the other hand, if a spot that is convenient to many people does not have a machine, some company would respond to the opportunity. Finally, \textbf{how should a group of hikers choose which trails to hike in small groups?} On the one hand, heading out in too small of a group might not be safe. On the other hand, if a critical mass of people would rather hike another trail not included in the current plan, they would change plans and head together to that trail instead. Each of these problems can be viewed as a matching problem, of agents---students, customers, or hikers---to public goods---courses, ATMs/vending machines, or trails.

In each of the above matching problems, there are \emph{no (upper) constraints on capacity} for (i.e., no scarcity of ``usage slots'' for) public goods. However, there is a firm minimum usage on each provided public good; that is, the ``preferences'' of public goods over individuals exhibit strong \emph{complementarities}. Each of these two features---no capacity constraints on the one hand, and complementarities on the other hand---makes this problem starkly different than much of the literature on stable matching or stable assignments. And yet, it is natural to ask in this scenario how one might formalize the stability of a given offering of public goods, whether such a stable offering of public goods is guaranteed to exist, and, if and when it is guaranteed, whether there are any strategyproof stable mechanisms. In this paper, we initiate the study of these questions.

\paragraph{Defining Stability.} We consider a setting with a finite number of public goods $G = \{1, \dots, g \}$ that might be provided, and a finite number of unit-demand agents, each with a strict preference list over the public goods. A menu of public goods is a set of public goods that are provided. If a specific menu is provided, then each agent consumes from their favorite public good from this menu. Such a menu is $t$-feasible if when provided, each public good from the menu is consumed by at least $t$ agents.\footnote{In the courses example, this translates to the inability to charge more for some courses over others. In the ATM example, we thus capture settings in which there are fixed fees for ATM usage. In the vending machine example, we thus capture settings in which the prices of products sold are similar across machines.} This is a desideratum that pushes for less public goods to be provided, and becomes more restrictive as $t$ grows. A menu is $u$-contestable if there exists a public good not on the menu that $u$ agents prefer to all public goods on the menu. A menu is $u$-\emph{{un}contestable} if it is not $u$-contestable.\footnote{In the courses example, one might imagine a complaint by $u$ students as something that requires the university's attention. In the ATM example, $u$-contestability might result in market forces adding a machine.} This is a desideratum that pushes for more public goods to be provided, and becomes more restrictive as $u$ shrinks. A menu is $(t,u)$-stable if it is both $t$-feasible and $u$-uncontestable. The notion of $(t,u)$-stability captures the idea that a public good covers its cost when it is used by at least $t$ agents, and that if $u$ or more agents demand an unprovided good, then market forces will lead to its offering.\footnote{Stability takes no stand on whether the public goods that are provided (or this hypothetical addition public good) are provided by the same provider or by different providers.}

\paragraph{Existence.} We start with the case of no setup costs or other frictions, i.e., with the case of $(t,t)$-stability.\footnote{As for the case of $t>u$, beyond being less economically motivated, we note that it is immediate to see that in this case, a $(t,u)$-stable solution is not guaranteed to exist; indeed, consider one public good and $u$ agents who desire it.} While for $g=2$ a $(t,t)$-stable solution always exists, we show by an example that with $g\ge3$ public goods, there are menu selection problems and corresponding $t$ such that no $(t,t)$-stable solution exists. Our first main question is for every $g$ and $t$, therefore, how small (compared to~$t$) can $u$ be so that every menu selection problem with $g$ goods has a $(t,u)$-stable solution. As it turns out, the answer is intimately connected to the ratio between $u\!-\!1$ and~$t\!-\!1$, as already hinted by the following two relatively straightforward bounds:\footnote{While this might suggest a reparametrization of the problem using $t':=t\!-\!1$ and $u':=u\!-\!1$, we avoid such a reparametrization as we find that the quantity $t'$---the maximum number of agents that is \emph{in}sufficient for sustaining the offering of a single public good---has a lesser direct economic meaning than $t$.}

\newcounter{lb-footnote}
\setcounter{lb-footnote}{0}

\begin{restatable}{proposition}{lowerboundtwo}\label{lower-bound-2}
For every $g\ge 3$ and $t,u \in \N$ such that $\frac{u-1}{t-1} < 2$,\footnote{\ifnum\value{lb-footnote}=0 \addtocounter{lb-footnote}{1}Note \else Recall \fi that $\frac{u-1}{t-1} < 2$ is equivalent to $u \le 2t-2$.} there exists a menu selection problem with $g$ public goods that has no $(t,u)$-stable menu.
\end{restatable}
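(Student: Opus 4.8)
The plan is to exhibit, for each $g \ge 3$ and each pair $t,u$ with $\tfrac{u-1}{t-1} < 2$ (equivalently $u \le 2t-2$; note this forces $t \ge 2$, as otherwise the ratio is undefined), a single menu selection problem in which every menu is either $t$-infeasible or $u$-contestable. I would first settle the case $g=3$ with a cyclically symmetric instance and then lift it to arbitrary $g$ by adding ``dummy'' goods that no agent ever wants.

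For $g=3$ with goods $a,b,c$, I would take three groups of agents, each of size $s := t-1$, with the rock--paper--scissors preferences $a \succ b \succ c$, $b \succ c \succ a$, and $c \succ a \succ b$, so that the instance has $3(t-1)$ agents in total. The heart of the argument is a short case analysis over the eight menus. Any menu of size $2$ or $3$ is $t$-infeasible: in the full menu each good is the favorite of exactly one group and so is consumed by only $s = t-1 < t$ agents, while in any two-good menu the group whose top good was dropped moves to its second choice, leaving one of the two provided goods consumed by only $s < t$ agents. Each singleton menu, say $\{a\}$, is feasible (it is consumed by all $3(t-1) \ge t$ agents once $t \ge 2$) but $u$-contestable: the two groups not topped by $a$ both prefer the dropped good $c$ to $a$, giving $2(t-1) = 2t-2 \ge u$ contesting agents. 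Finally the empty menu is $u$-contestable since all $3(t-1) \ge u$ agents prefer, say, $a$ to nothing.

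The only quantitative input is the choice $s = t-1$: it is the largest group size that keeps every good strictly under the feasibility threshold $t$, and it is exactly large enough that a pair of groups, of combined size $2(t-1) = 2t-2$, can contest a singleton precisely when $u \le 2t-2$; this is where the constant $2$ (and the hypothesis $\tfrac{u-1}{t-1} < 2$) enters, so pinning down this threshold is the main step to get right. To handle $g>3$, I would keep the same three groups and the same preferences on $a,b,c$, and have every agent rank all $g-3$ dummy goods below $a,b,c$ in a common fixed order. Then any menu that contains a dummy while also offering some ``real'' good leaves that dummy with zero consumers, hence $t$-infeasible; a menu of two or more dummies is likewise infeasible since all agents agree on the top dummy; a single-dummy menu is $u$-contestable by $a$; and for menus contained in $\{a,b,c\}$ the dummies add no new contestant (no agent prefers a dummy to any real good), so the $g=3$ analysis carries over verbatim. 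The main obstacle is thus not any single computation but organizing the case analysis so that the dummy goods neither accidentally create a feasible menu nor become needed as contestants, which the ``dummies ranked last'' device resolves cleanly.
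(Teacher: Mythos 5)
Your proposal is correct and matches the paper's own proof in essence: the same rock--paper--scissors (Condorcet-cycle) construction on three goods with three equal-sized groups chosen just below the feasibility threshold, followed by the same case analysis over menus of sizes $0,1,2,3$. The only differences are cosmetic: the paper takes group size $\bigl\lceil u/2 \bigr\rceil$ rather than $t-1$ and handles $g>3$ by leaving the extra goods unranked (using incomplete preferences), whereas your device of ranking the dummy goods last in a common order yields the marginally stronger conclusion that the construction works even within the domain of complete preferences.
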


\begin{restatable}{proposition}{upperboundgminusone}\label{upper-bound-g-minus-1}
For every $g\ge 2$ and $t,u \in \N$ such that $\frac{u-1}{t-1} \ge g\!-\!1$, every menu selection problem with $g$ public goods has a $(t,u)$-stable menu.
\end{restatable}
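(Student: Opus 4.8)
The plan is to construct a stable menu by a greedy \emph{peeling} procedure that starts from the full menu and discards under-demanded goods. Concretely, set $M_0 := G$; as long as the current menu $M_j$ is nonempty and fails $t$-feasibility, it must contain a good consumed by at most $t-1$ agents, and I remove one such good to form $M_{j+1} := M_j \setminus \{a\}$. This halts after at most $g$ steps at a menu $M^*$ that is $t$-feasible (possibly empty), since feasibility is exactly the stopping rule. Feasibility of the output is thus automatic, and the entire difficulty lies in establishing $u$-uncontestability, which is where the hypothesis $u-1 \ge (g-1)(t-1)$ gets used.

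First I would dispose of the degenerate case. Let $n$ be the number of agents. Note $u \ge t$, since $u-1 \ge (g-1)(t-1) \ge t-1$ using $g \ge 2$. If $n < t$, then no good can ever reach $t$ consumers, so the procedure peels everything and $M^* = \emptyset$; the empty menu is vacuously $t$-feasible, and since any contestant would need at least $u > n$ supporters, it is $u$-uncontestable. Hence I may assume $n \ge t$; in that regime the lone good of any single-good menu has all $n \ge t$ agents as consumers and is never removed, so the procedure never reaches the empty menu. Consequently $M^*$ is \emph{nonempty}, and the set $R := G \setminus M^*$ of removed goods satisfies $|R| \le g-1$.

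The heart of the argument is a charging scheme bounding, for each $b \notin M^*$, the number of agents preferring $b$ to every good of $M^*$. Fix such a $b$ (necessarily $b \in R$) and such an agent $i$, and track $i$'s favorite good along $M_0 \supseteq M_1 \supseteq \cdots \supseteq M^*$. It starts weakly above $b$ (as $b \in M_0 = G$) and ends strictly below $b$ (as $i$ prefers $b$ to all of $M^*$ while $b \notin M^*$); at the first step where it drops below $b$, the good just removed must be exactly the good $c_i \in R$ that $i$ was consuming immediately beforehand. I then charge $i$ to $c_i$. Because $c_i$ was removed, it had at most $t-1$ consumers at that moment, so at most $t-1$ agents are charged to any single good of $R$; summing over $R$ yields at most $|R|(t-1) \le (g-1)(t-1) = u-1 < u$ agents preferring $b$ to all of $M^*$. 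Thus $M^*$ is $u$-uncontestable, and being $t$-feasible, it is $(t,u)$-stable.

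The main obstacle is justifying the charging step, namely that the good removed at the moment $i$'s favorite first falls below $b$ is precisely $i$'s current favorite $c_i$. This rests on the monotonicity of each agent's favorite under good-removal (favorites only worsen), together with the observation that a single removal can depress $i$'s favorite below $b$ only by deleting the current favorite while no good weakly preferred to $b$ remains. Verifying this, and confirming that the agents charged to a common $c \in R$ are distinct consumers of $c$ at its removal time (so the per-good cap of $t-1$ is legitimate), is where the care is needed.
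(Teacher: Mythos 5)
Your peeling argument works whenever it terminates at a nonempty menu, but it contains one step that genuinely fails in the paper's model: the claim that once $n \ge t$, ``the lone good of any single-good menu has all $n \ge t$ agents as consumers and is never removed.'' That is true only for \emph{complete} preference lists. In the paper's model (\cref{def:cap}) agents may rank only their top $g'$ goods, and unranked goods fall below the outside option, so a singleton menu $\{k\}$ is consumed only by the agents who actually rank $k$. Hence the peel can reach $\emptyset$ even when $n \ge t$, in which case $|R| = g$ and your charging bound gives only $g(t\!-\!1)$, which need not be below $u$. This is not just a bookkeeping issue---the output can genuinely be unstable. Take $g=2$, $t=u=3$ (so $\frac{u-1}{t-1} = 1 \ge g\!-\!1$), two agents with list $2 \succ 1$, and one agent whose list is just $1$. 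In the full menu, good $1$ has one consumer and good $2$ has two, so both are legitimate removal candidates; if your procedure (which leaves the removal order unspecified) removes good $1$ first, it next removes good $2$ (two consumers $< t$), ending at $M^* = \emptyset$. But all three agents rank good $1$, so $|1 \succ \emptyset| = 3 \ge u$: the empty menu is contestable, even though $\{1\}$ is $(3,3)$-stable. So as written, your proof establishes the proposition only for complete preferences.

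The gap is local and fixable, and the fix essentially recreates the paper's second case. Note that any good $k$ with $|k| \ge t$ is never peeled, since its consumer set only grows as the menu shrinks; so if $M^* = \emptyset$, then $|k| < t$ for every $k \in G$. If moreover $\emptyset$ is contestable, pick $b$ with $|b \succ \emptyset| \ge u \ge t$; then $\{b\}$ is $t$-feasible, and any agent lobbying for $j \ne b$ against $\{b\}$ prefers $j$ to $b$, hence has an overall favorite different from $b$, giving a lobby of at most $\sum_{l \ne b} |l| \le (g\!-\!1)(t\!-\!1) < u$. With that patch, your argument is complete and is a genuinely different route from the paper's: the paper offers $O := \{k \in G : |k| \ge t\}$ in one shot, whereas your peel-and-charge argument shows that \emph{any} maximal peeling that stops short of $\emptyset$ is $(t,u)$-stable, and can output strictly larger (more inclusive) stable menus than the paper's construction.
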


Trying to improve either of these bounds reveals intricate mathematical structure that must be mapped out and understood. Our first main result is that the lower bound in \cref{lower-bound-2} is tight for all $g\le6$.

\begin{theorem}[See \cref{thm:positive-four,thm:positive-six}]\label{upper-bound-2}
For every $3\le g\le 6$ and $t,u \in \N$ such that $\frac{u-1}{t-1} \geq 2$, every menu selection problem with $g$ public goods has a $(t,u)$-stable menu.
\end{theorem}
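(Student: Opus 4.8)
The plan is to prove the two referenced results separately, with \cref{thm:positive-four} handling $g=4$ and \cref{thm:positive-six} handling $g\in\{5,6\}$; the case $g=3$ is already covered by \cref{upper-bound-g-minus-1}, since there $g-1=2$. Before anything else I would make two reductions. First, $(2t-1)$-uncontestability is the binding requirement: the hypothesis $\frac{u-1}{t-1}\ge 2$ forces $u\ge 2t-1$, and since $u$-uncontestability only weakens as $u$ grows, it suffices to produce a menu that is $t$-feasible and $(2t-1)$-uncontestable. (The case $t=1$ is immediate by taking the set of goods that are some agent's overall top choice, so assume $t\ge 2$.) Second, I would pass to demand language: for a menu $M$ write $d_M(a)$ for the number of agents whose favorite good in $M$ is $a$, and for $b\notin M$ write $c_M(b)$ for the number of agents who rank $b$ above every good in $M$; the goal is then a menu with $d_M(a)\ge t$ for all $a\in M$ and $c_M(b)\le 2(t-1)$ for all $b\notin M$. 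The empty menu is vacuously $t$-feasible, so the space of $t$-feasible menus is nonempty, and the entire content lies in bridging the factor-$2$ regime of this theorem and the factor-$(g-1)$ regime of \cref{upper-bound-g-minus-1}, which for $g\in\{4,5,6\}$ is a genuine gap.

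The core of the argument is a local-search/exchange procedure on menus. I would start from a maximal (by inclusion) $t$-feasible menu $M$. If $M$ is already $(2t-1)$-uncontestable we are done; otherwise some $b\notin M$ has $c_M(b)\ge 2t-1$. The key observation is that adding such a $b$ keeps $b$ itself feasible: each of the $\ge 2t-1\ge t$ contesting agents ranks $b$ above all of $M$ and hence first among $M\cup\{b\}$, so $d_{M\cup\{b\}}(b)\ge t$. Since $M$ was maximal feasible, $M\cup\{b\}$ must be infeasible, so adding $b$ pushed one or more offered goods below the threshold $t$, each having lost to $b$ exactly those former fans who prefer $b$ to it. I would then remove the goods that dropped below threshold and iterate. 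Two things must be controlled: (i) that removing an infeasible good does not create a fresh $(2t-1)$-contestation that cannot be repaired, and (ii) termination. For (i) the factor-$2$ slack is exactly what is needed: immediately after a good $a$ is removed one has $c(a)=d_M(a)<t\le 2t-1$, so $a$ is not yet contesting, and it is only subsequent removals that can raise $c(a)$; bounding these cascading interactions against $u-1=2(t-1)$ is the crux. For (ii) I would attach a monovariant — for instance the multiset of demand values compared lexicographically, or total satisfied demand tie-broken by menu size — and verify that each add/remove cycle strictly improves it.

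The step I expect to be the main obstacle is precisely the cascade in (i): a single addition can depress several offered goods at once, their removal can expose further contestations, and these effects can chain. Controlling the chain is where the hypothesis $g\le 6$ enters. With at most six goods, the pattern of which goods are offered, which sit above or below threshold, and which contest the current menu falls into finitely many configurations, so I would organize the proof as a case analysis on the menu size $|M|$ together with the contestation pattern among the $\le 6$ goods, checking in each configuration that the bound $c_M(b)\le 2(t-1)$ can be met simultaneously with feasibility. I anticipate the $g=4$ analysis to be comparatively short, serving as a warm-up, and the $g\in\{5,6\}$ analysis to carry the bulk of the combinatorial weight — and, in light of \cref{lower-bound-2}, to reveal why the factor-$2$ answer is sharp precisely in this range and why one should not expect this method to extend unchanged to $g=7$.
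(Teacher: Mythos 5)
Your reduction to $u=2t-1$ and your handling of $g=3$ via \cref{upper-bound-g-minus-1} are fine, and your add/remove procedure is essentially the paper's greedy algorithm (\cref{alg:greedy}). But the step you rely on for termination is provably impossible: no monovariant can certify that the add/remove dynamics terminate, because they genuinely cycle. \cref{example:c4-cycle} exhibits a $g=4$ instance with $u=2t-1$ on which the procedure runs forever through
\[ \emptyset \to \{1\} \to \{1,2\} \to \{2\} \to \{2,3\} \to \{3\} \to \{3,4\} \to \{4\} \to \{4,1\} \to \{1\} \to \cdots \]
even though stable menus ($\{1,3\}$ and $\{2,4\}$) exist. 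The paper's proof of \cref{thm:positive-four} therefore takes the opposite tack: it accepts that the algorithm may cycle, proves structural constraints on any cycle (every menu along it has size between $1$ and $g-2$, so for $g=4$ the cycle alternates between singletons and pairs), rules out cycle lengths $2$ and $3$, and then reads a stable menu \emph{off the cycle transcript} in the remaining length-$4$ case (\cref{lemma:12snake-len4}). Your sketch never confronts this; the ``crux'' you flag in (i) is real, but the resolution is cycle analysis, not a progress measure.

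The second gap is your plan for $g\in\{5,6\}$: a case analysis over ``configurations'' of which goods are offered, feasible, or contesting cannot suffice, because such discrete patterns do not determine a stable menu. The paper shows this concretely (\cref{app:greedy}): two agent cohorts for $g=5$, $u=2t-1$ produce the \emph{identical} cycle transcript yet have disjoint families of stable menus, so any argument that only tracks the combinatorial pattern of the search must fail --- the exact agent counts matter. This is why the paper proves \cref{thm:positive-six} in a computer-assisted way: it encodes a menu selection problem as a lattice point in $\Z_{\ge 0}^{|\mathcal{P}|}$, shows that existence of a stable menu for all instances is equivalent to a polyhedra covering condition (\cref{thm:polyhedra-reduction}), and verifies unsatisfiability of a counterexample with an SMT solver. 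Your intuition that the method should break at $g=7$ is consistent with the paper (\cref{lower-bound-2311}), but as written your proposal does not yield a proof for $g=4$, and for $g=5,6$ it pursues an approach the paper demonstrates to be information-theoretically inadequate.
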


We prove \cref{upper-bound-2} analytically for $g=3,4$. For $g=5,6$, we first use structural insights to reduce the question of an existence of a counterexample to a polyhedra covering problem that we encode as an SMT problem, and then computationally show that no counterexample exists via an SMT solver.

Surprisingly, running the same SMT problem for $g=7$ reveals a counterexample showing that \cref{lower-bound-2} is no longer tight when $g\ge7$.\footnote{Ascertaining the tightness of the lower bound in \cref{lower-bound-2311}, or alternatively further tightening it, whether for $g=7$ or for higher values of $g$, seems beyond the computational capabilities at our disposal. Solving the same problem encoded as an ILP turns out to be even more computationally demanding (based on performance comparisons on lower values of $g$).}

\begin{restatable}{theorem}{lowerboundcomplex}\label{lower-bound-2311}
For every $g\ge 7$ and $t,u \in \N$ such that $u \leq 23 \lfloor \frac{t-1}{11} \rfloor $, there exists a menu selection problem with $g$ public goods that has no $(t,u)$-stable menu. (When $t\!-\!1$ is divisible by 11, this condition is equivalent to $\frac{u-1}{t-1} < \frac{23}{11} \approx 2.1$.)
\end{restatable}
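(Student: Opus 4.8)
The plan is to exhibit a single menu selection problem with exactly $7$ public goods that has no $(t,\,23\lfloor\frac{t-1}{11}\rfloor)$-stable menu, and then to bootstrap this to all $g\ge 7$ and all $u\le 23\lfloor\frac{t-1}{11}\rfloor$ by two reductions. Write $s:=\lfloor\frac{t-1}{11}\rfloor$; since $u\le 23s$ and $u\ge 1$ we may assume $s\ge 1$ (otherwise the hypothesis $u\le 0$ is vacuous). First I would reduce to the extremal parameters. Because a menu that is $u$-uncontestable is automatically $u'$-uncontestable for every $u'\ge u$, it suffices to rule out $(t,23s)$-stable menus: a $(t,u)$-stable menu with $u\le 23s$ would be $t$-feasible and $23s$-uncontestable, i.e.\ $(t,23s)$-stable. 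Likewise, I would build the instance so that every agent group has size a multiple of $s$, which lets me factor $s$ out of every count: a good consumed by $\gamma\cdot s$ agents fails $t$-feasibility whenever $\gamma\le 11$ (then $\gamma s\le 11s\le t-1<t$, using $s=\lfloor\frac{t-1}{11}\rfloor$), and an off-menu good preferred to all of $M$ by $\delta\cdot s$ agents witnesses $23s$-contestability whenever $\delta\ge 23$ (then $\delta s\ge 23s=u$). Thus it is enough to produce integer ``coefficient'' preferences over $7$ goods such that, for every menu $M$, either some good of $M$ has consumption coefficient $\le 11$ or some good off $M$ has contestor coefficient $\ge 23$; everything else is independent of $t$.

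This last requirement is a finite, purely combinatorial condition---one constraint for each of the $2^7=128$ menus---and it is exactly the polyhedra-covering problem encoded as SMT that is alluded to just before the statement. Concretely, I would take the agent groups and their (scaled) multiplicities output by the SMT solver for $g=7$ and verify the $128$ per-menu alternatives directly: for each $M$, list the induced consumption coefficients $\gamma_h(M)$ of the goods in $M$ and the contestor coefficients $\delta_{h'}(M)$ of the goods outside $M$, and check that one of the thresholds ($\gamma\le 11$ or $\delta\ge 23$) is met. The degenerate menus are immediate: the empty menu is contested by any good, since every agent prefers every good to the empty bundle, and the instance has at least $23s$ agents in total.

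To pass from $g=7$ to arbitrary $g\ge 7$, I would pad the instance with $g-7$ ``dummy'' goods that every agent ranks strictly below all seven original goods. For any menu $M$ in the padded instance: if $M$ contains a real good, then every dummy in $M$ is consumed by nobody (each agent's favorite on-menu good is real), so $M$ is not $t$-feasible unless $M$ consists of real goods only, in which case $M$ is a menu of the original instance and inherits its failure of $t$-feasibility or its $23s$-contestability verbatim (the witnessing real contestor is still off $M$ and preferred by the same agents); and if $M$ contains no real good, then each of the at least $23s$ agents prefers any fixed real good to all of $M$, so $M$ is $23s$-contestable. Combining the three reductions yields non-existence of a $(t,u)$-stable menu for all $g\ge 7$ and all $u\le 23s$, as claimed.

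The main obstacle is discovery rather than verification: the whole argument hinges on finding the explicit $7$-good coefficient profile meeting all $128$ constraints simultaneously, and it is precisely here that the ratio $\frac{23}{11}$ (rather than the ``obvious'' $2$ from \cref{lower-bound-2}) emerges. I expect no clean closed-form for this profile; the realistic route is to encode the $128$ disjunctive constraints over the unknown group multiplicities (searching over the preference orders) as an SMT instance, let the solver produce a feasible point, and then transcribe and hand-check it. A secondary subtlety to get right is the feasibility bookkeeping: I only ever invoke the \emph{sufficient} condition $\gamma\le 11$ for infeasibility and never its converse, so the argument is insensitive to exactly where in the interval $[11s+1,\,11s+11]$ the value $t$ lies.
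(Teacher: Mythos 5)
Your reduction scaffolding is exactly right, and it mirrors the paper's: passing to the extremal parameters $t'=11s+1$, $u'=23s$ via monotonicity, scaling all group sizes by $s$ so that the thresholds become the integer coefficients $11$ and $23$, and padding up to general $g\ge 7$ (the paper simply leaves goods $8,\ldots,g$ unranked, since incomplete lists are allowed, but your ranked-below-everything padding works just as well). However, there is a genuine gap, and it is the entire mathematical content of the theorem: you never exhibit the $7$-good instance. The statement is an existence claim, and your proposal defers its witness to an unexecuted computation (``take the agent groups \ldots\ output by the SMT solver \ldots\ and verify the $128$ per-menu alternatives''). A proof cannot outsource its witness this way; until the coefficient profile is actually written down and checked, nothing has been proved, and nothing in your text constrains what that profile looks like or guarantees that one exists at the ratio $23/11$.

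For comparison, the paper's witness is an explicit cyclic construction on $70x$ agents, $x=\lfloor\frac{t-1}{11}\rfloor$: for each $k\in\{1,\ldots,7\}$ (indices mod $7$), it takes $5x$ agents with $k \succ k\!+\!1 \succ k\!+\!2$, $3x$ agents with $k \succ k\!+\!1 \succ k\!+\!3 \succ k\!+\!4$, $x$ agents with $k \succ k\!+\!3 \succ k\!+\!1 \succ k\!+\!4$, and $x$ agents with $k \succ k\!+\!5 \succ k\!+\!3 \succ k\!+\!1$. The verification is then not $128$ cases but ten: monotonicity reduces everything to showing that every menu of size $3$ is $23x$-contestable and every menu of size $4$ is not $(11x\!+\!1)$-feasible, and the cyclic symmetry cuts each of $\binom{7}{3}$ and $\binom{7}{4}$ down to $5$ representative menus, each settled by a one-line count of agents. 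Your plan would become a proof once you supply such a table together with its per-menu verification; as written, it is a correct strategy plus an IOU for the one step that matters.
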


Our second main result is a somewhat improved upper bound for all $g \ge 7$.
\begin{theorem}[For more details, see \cref{upper-bound-g-minus-2-formal}]\label{upper-bound-g-minus-2}
For every $g\ge 7$ and $t,u \in \N$ such that $\frac{u-1}{t-1} \geq g\!-\!2$, every menu selection problem with $g$ goods has a $(t,u)$-stable menu.
\end{theorem}

\paragraph{Strategyproofness.} In the second part of this paper, we focus on $g\in\{2,3,4,5,6\}$, i.e., the case in which we have a tight characterization for guaranteed existence, and ask for each such $g$ whether there exists a strategyproof stable mechanism, i.e., a mechanism whose inputs are a menu selection problem with $g$ public goods and with $\frac{u-1}{t-1} \ge 2$,\footnote{Or $\frac{u-1}{t-1} \ge 1$ when $g=2$.} and whose output is a $(t,u)$-stable menu, which is strategyproof when its output is viewed as the matching that matches each agent to their favorite public good in the output menu. Due to the nature of menu selection problems and the motivation behind them, another desideratum that we look for is for the mechanism to be anonymous, i.e, not factor in the identities of the different agents when selecting a stable menu. We first find a positive result.
\begin{theorem}[For more details, see \cref{sp-2}]\label{sp-2-informal}
For $g=2$, there exists a strategyproof anonymous stable mechanism.
\end{theorem}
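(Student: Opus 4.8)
The plan is to reduce a two-good instance to a pair of vote counts and then exhibit a single explicit monotone rule. Write the two goods as $a$ and $b$. Since preferences are strict and agents are unit-demand, every agent ranks either $a$ or $b$ first; let $n_a,n_b$ be the numbers ranking $a$ resp.\ $b$ first and set $n=n_a+n_b$ (note that $n$ is fixed, as a misreport only flips one agent's first choice). An agent's realized outcome under a menu is their first choice if it is offered, else the other good if it is offered, else nothing. First I would record when each of the four candidate menus is $(t,u)$-stable in terms of $(n_a,n_b)$, using the hypothesis $u\ge t$ (equivalently $\frac{u-1}{t-1}\ge 1$): (i) $\emptyset$ is always $t$-feasible and is $u$-uncontestable iff $n<u$; (ii) $\{a\}$ is $t$-feasible iff $n\ge t$ (all agents consume $a$) and $u$-uncontestable iff $n_b<u$, symmetrically for $\{b\}$; and (iii) $\{a,b\}$ is always $u$-uncontestable and is $t$-feasible iff $\min(n_a,n_b)\ge t$.

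Next I would define the mechanism $M$ to provide exactly the goods whose first-choice support is at least $t$, and, if no good clears this bar but $n\ge t$, to instead provide the single plurality good, breaking ties toward $a$. Explicitly: $M=\{a,b\}$ if $\min(n_a,n_b)\ge t$; $M=\{a\}$ if $n\ge t$, $n_b<t$, and $n_a\ge n_b$ (symmetrically $M=\{b\}$); and $M=\emptyset$ if $n<t$. This depends only on $(n_a,n_b)$, hence is anonymous. Using the characterization above together with $u\ge t$, I would verify $M$ is always stable: $\{a,b\}$ is output exactly in case (iii); when $M=\{a\}$ we have $n\ge t$ and $n_b<t\le u$, so $\{a\}$ is feasible and uncontestable; and when $n<t\le u$ the empty menu is uncontestable.

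For strategyproofness, the only deviation available to an agent is to flip their reported first choice, which (with $n$ fixed) shifts the state by one unit between $n_a$ and $n_b$. I would therefore prove the key monotonicity lemma: fixing $n$, the good received by a type-$a$ agent, ordered $\text{nothing}\prec b\prec a$, is nondecreasing in $n_a$. Concretely, as $n_a$ grows from $0$ to $n$ the output passes through $\{b\}$, then (only when $n\ge 2t$) $\{a,b\}$, then $\{a\}$, so a type-$a$ agent's outcome runs $b,a,a$ while a type-$b$ agent's runs $b,b,a$ (nonincreasing in $n_a$, i.e.\ nondecreasing in $n_b$). Since truthful reporting maximizes one's own side's count, monotonicity yields that truthful reporting is weakly optimal for both types.

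The hard part will be the tie-breaking in precisely the regions where several menus are simultaneously stable: the plurality region $t\le n<2t$ (threshold at $n_a=n_b$) and the two boundaries of the ``$\{a,b\}$'' region when $n\ge 2t$. The real content of the theorem is selecting among co-stable menus so that these thresholds line up monotonically; in particular, checking that the $\{b\}\to\{a,b\}\to\{a\}$ transition never lets a type-$b$ agent profit by switching to $a$ (and symmetrically) is the crux, and is exactly what the monotonicity lemma is engineered to handle.
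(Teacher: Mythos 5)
Your proposal is correct and matches the paper's own proof essentially step for step: the paper's mechanism is exactly your rule (offer both goods when jointly $t$-feasible, otherwise the plurality good with a fixed tie-break, otherwise the empty menu when $n<t$), it likewise reduces to the summary statistics $(|1\succ 2|,|2\succ 1|)$ to get anonymity, and it proves strategyproofness via the same monotonicity lemma, phrased there as $M(\mathbf{x}+e_j)\setminus M(\mathbf{x})\subseteq\{j\}$, which is your observation that a single flipped report can only add the flipper's newly reported good to the menu.
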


The mechanism constructed in \cref{sp-2-informal} is based upon majority voting, carefully tweaked with respect to when both public goods should be provided and when only a subset of them should be provided, on the one hand maintaining stability, and on the other controlling transitions between providing different numbers of public goods to remove any incentives to misreport. Our main result for strategyproofness is a negative result for higher numbers of goods.

\begin{theorem}[For more details, see \cref{sp-3456}]\label{sp-3456-informal}
For $g \in \{3,4,5,6\}$, every anonymous stable mechanism is not strategyproof.
\end{theorem}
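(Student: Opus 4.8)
The plan is to reduce all four cases to the single case $g=3$, and then to rule out $g=3$ by an explicit manipulation argument that uses anonymity to pin down how the mechanism breaks ties between competing stable menus. For the reduction, let $\phi$ be a purported anonymous strategyproof stable mechanism on $g\in\{4,5,6\}$ goods, and restrict it to problems in which the extra $g-3$ goods are ranked last by every agent. Such a good is \emph{inert}: if it lies on a menu together with a real good it is consumed by nobody and so violates $t$-feasibility, while if it is the only good on the menu then (since the number of agents in the instances I will build is at least $u$) that menu is contested by any real good; hence no stable menu of the augmented problem contains it. Defining $\psi(P):=\phi(\hat P)\cap\{\text{real goods}\}$, where $\hat P$ appends the last-ranked dummies, therefore gives a well-defined anonymous stable mechanism on three goods whose outputs coincide with $\phi$'s, and since each agent's ranking of the real goods is unchanged, strategyproofness is inherited. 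Thus an impossibility for $g=3$ propagates upward, and it suffices to treat $g=3$; existence of outputs along the way is guaranteed by \cref{upper-bound-2}.

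For $g=3$ with goods $\{a,b,c\}$ and thresholds satisfying $\frac{u-1}{t-1}\ge 2$, I would exhibit a profile $P$ possessing several stable menus that disagree on the good assigned to some pivotal agent. Concretely, I would use blocks whose sizes are multiples of $t-1$, so that the feasibility and contestability thresholds can be read off cleanly and the whole construction scales with arbitrary admissible $(t,u)$. For instance, taking one block topping $a$, one block topping $b$, and a small block with $c\succ a\succ b$, tuned so that the grand menu $\{a,b,c\}$ is infeasible (too few agents top $c$) while both $\{a,b\}$ and $\{a\}$ are stable: the $b$-topping block strictly prefers $\{a,b\}$ (where it consumes $b$) to $\{a\}$ (where it consumes $a$), while the remaining agents are indifferent. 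The mechanism must resolve this disagreement, and it is precisely this resolution that anonymity will let me control.

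The heart of the argument is to \emph{force} the mechanism's choice at such profiles and then follow the forced choices along a chain of single-agent deviations. Given a profile with stable set $\{M_1,M_2\}$, I would introduce a neighbouring profile $\tilde P$, obtained by one pivotal agent $k$ changing their report, engineered so that $\tilde P$ has a \emph{unique} stable menu; strategyproofness applied to $k$ across the edge $P\leftrightarrow\tilde P$ then forces $\phi(P)$ to be whichever of $M_1,M_2$ gives $k$ a good at least as good as the unique menu of $\tilde P$, pinning $\phi(P)$ down. Chaining such forcings, I would assemble a short cycle of anonymized profiles connected by single-agent deviations—invoking anonymity to identify profiles that are equal as multisets of reports even when reached by relabeling which agent deviates—so that each edge contributes a strategyproofness inequality asserting that a pivotal agent does not strictly gain. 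By choosing the block sizes so that the forced stable menus match the pivotal agents to strictly improving goods along the cycle, these inequalities compose into a strict improvement that cannot be consistently realized, a contradiction; this rules out any anonymous stable mechanism on three goods.

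The main obstacle is the combinatorial bookkeeping needed to make the forcing rigorous for \emph{all} admissible $(t,u)$ simultaneously: for every candidate profile one must compute the full set of stable menus as a function of the block sizes and of $t,u$ (subject only to $\frac{u-1}{t-1}\ge 2$), and the neighbouring profiles must be chosen so that their stable sets are at once (i) small enough that the pivotal agent's good is determined and (ii) arranged so that the induced preference constraints genuinely close into a contradictory cycle rather than a consistent assignment. Verifying stability is a finite but delicate case analysis over the six preference types and the two thresholds, and the subtle part is ensuring the construction degrades gracefully as $t,u$ grow and never accidentally admits an extra stable menu that would let the mechanism sidestep the manipulation.
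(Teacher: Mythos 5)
Your reduction of $g\in\{4,5,6\}$ to $g=3$ via last-ranked dummy goods is sound (on profiles with at least $u$ agents, no stable menu can contain a dummy, and stability and strategyproofness restrict correctly), but it is not where the difficulty lies, and the paper does not need it: its argument handles all $g\in\{3,4,5,6\}$ uniformly. The genuine gap is in your core $g=3$ argument, which is only a plan, not a proof. You propose to engineer profiles with multiple stable menus, ``force'' the mechanism's selection via strategyproofness across single-agent deviations to profiles with unique stable menus, and then close a cycle of such forcings into a contradiction. None of this is carried out: no explicit chain of profiles is given, the stable sets are not computed, and---as you yourself concede in the final paragraph---the ``combinatorial bookkeeping'' needed to make the forcing rigorous for all admissible $(t,u)$ is unresolved. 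Worse, the composition step is conceptually shaky: each strategyproofness inequality concerns a \emph{different} agent evaluated under a \emph{different} preference relation at a different profile, so these inequalities do not simply ``compose into a strict improvement'' around a cycle. What you are attempting is, in effect, a from-scratch proof of a Gibbard--Satterthwaite-type impossibility, which is exactly the hard kernel of the problem.

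The paper's proof (\cref{sp-3456}) sidesteps all of this with one structural observation you are missing: fix $u=2t-1$ and restrict attention to profiles with \emph{exactly} $n=u=2t-1$ agents and complete preferences. Then $\emptyset$ is contestable (all $u$ agents lobby), and no menu of size $\ge 2$ can be $t$-feasible (it would require at least $2t>n$ agents), so \emph{every} stable menu is a singleton. The mechanism therefore induces a social choice function on $g$ alternatives and $u$ voters; stability forces unanimity (if all agents top-rank $j$, then $\{j\}$ is the unique stable menu since $|k\succ j|=0$ for $k\ne j$), and anonymity forces nondictatorship. The Gibbard--Satterthwaite theorem \citep{gibbard_manipulation_1973,satterthwaite_strategy-proofness_1975} then yields manipulability directly. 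Note the contrast: your construction deliberately creates profiles with non-singleton and multiple stable menus, which is precisely what makes your forcing argument intractable, whereas the paper chooses the agent count so that multiplicity of menu \emph{sizes} disappears and a century of voting theory can be invoked as a black box. As written, your proposal does not establish the theorem.
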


The major workhorse behind \cref{sp-3456-informal} is the Gibbard--Satterthwaite theorem \citep{gibbard_manipulation_1973,satterthwaite_strategy-proofness_1975}, and the main challenge is how to transform a voting problem with $g$ alternatives into a menu selection problem such that \emph{all stable menus are singletons}. As it turns out, it is possible to construct such a transformation, where furthermore stability of the solution implies unanimity.

\vspace{1em}

This paper defines the stable menu selection problem, showcases that its solutions possess an intricate structure, providing nontrivial upper and lower bounds (tight for some $g$) for guaranteed existence, and maps out its strategic properties. While \cref{upper-bound-2} completely resolves the guaranteed existence problem for $g\le 6$, \cref{lower-bound-2311,upper-bound-g-minus-2} expose the complex structure of this problem for higher number of goods, and leave an asymptotic gap of constant vs.\ $\Theta(g)$ for this question, which we leave as an open problem.

\begin{open-question}
For every $g\ge7$ and $t \in \N$, what is the minimal value $u_g(t)$ such that every menu selection problem with $g$ public goods has a $\bigl(t,u_g(t)\bigr)$-stable solution? In particular, what is the asymptotic dependence of $\max_{t}\frac{u_g(t)-1}{t-1}$ on $g$?
\end{open-question}

\subsection{Related Work}
The stable matching literature began with the seminal paper of \citet{gale_college_1962}, showing the existence of a stable matching for every 1-to-1 stable matching problem. It has been shown for various many-to-1 flavors of this problem that existence is maintained so long as the preferences on each side considers the entities on other side as substitutes \citep[see, e.g.,][]{hatfield_matching_2005,hatfield_substitutes_2010,hatfield_hidden_2015}.
Strategic considerations for stable matching were first investigated by \citet{dubins_machiavelli_1981} and \citet{roth_economics_1982}, who showed that the stable matching mechanism introduced by \citeauthor{gale_college_1962} is strategyproof for one of the sides of the market. As noted, two major differences from the problems studies in this literature are the lack of (upper) capacity constraints on the public goods, and the strong complementarities exhibited by the ``preferences'' of the goods over sets of agents (only ``preferring'' a set of agents to the empty set if it is of sufficiently large cardinality, which in turn translates to a nonmonotone rejection function).

The motivation for our problem is related to that of the literature on public projects \citep[see, e.g.,][]{papadimitriou_hardness_2008}, with a main difference being that no transfers are allowed in our model. The motivation is also related to that of facility location games initiated by \citet{hotelling_stability_1929}, for which various solution concepts have been studied \citep[for a survey see, e.g.,][]{ijcai2021p596}.
There are also connections with the literature on committee selection \citep{aziz_justified_2017,jiang_approximately_2020}, with a main difference being that in our model there is no exogenously determined ``budget'' of how many public goods could be offered, but rather public goods can be offered as long as their usage justifies it.

Finally, our notion of $u$-uncontestability is closely related to notions of core stability from cooperative game theory \citep[see, e.g.,][]{peleg_introduction_2007}, especially within the context of non-transferrable utility (NTU) games. See \cref{app:core-ntu} for a detailed treatment of our connections to this area, and in particular for a proof that \emph{balancedness}---the canonical workhorse for showing nonemptiness of cores of NTU games---falls short of characterizing existence of $(t,u)$-stable menus in menu selection problems. Strategyproof selection from the core of generalized indivisible goods allocation problems has been studied by \citet{sonmez_strategy-proofness_1999}. In contrast to the setting of that paper, which requires the core correspondence to have a property that they call being essentially single-valued, the core correspondence of our menu selection problems is not essentially single-valued (see \cref{sec:strategic} for details).

\section{Model and Notation}\label{sec:model-and-notation}

In this section, we formally introduce the stable menu selection problem. As will be seen below, we allow for agents to have incomplete preference lists over public goods. While strictly speaking this makes the problem harder, many of our results and proofs hold verbatim for this setting as well, and we do not know how to significantly shorten these proofs for the case of only complete preference lists (see also \cref{lemma:suff-complete} in \cref{app:lemmas}). For the results from the introduction for which our proofs rely on complete preference lists, this is explicitly stated as part of the detailed restatement of these results.\footnote{Specifically, we show \cref{lower-bound-2,upper-bound-g-minus-1,upper-bound-2,lower-bound-2311} even in the case of incomplete preference lists. However, our proofs for \cref{upper-bound-g-minus-2,,sp-2-informal,sp-3456-informal} only work for the case of complete preference lists (see their respective detailed restatements---\cref{upper-bound-g-minus-2-formal,sp-2,sp-3456}).}

\begin{definition}[Menu Selection Problem]\label{def:cap}
A \emph{menu selection problem} (with unit demand) is a triplet $\problem$, where:
\begin{itemize}
\item $N=\{1,\ldots,n\}$ is a finite set of \emph{agents}. A generic element of $N$ will be denoted $i$.
\item $G=\{1,\ldots,g\}$ is a finite set of \emph{public goods}. A generic element of $G$ will be denoted $j$.
\item $\succ_i$ is an order over $G$ of the form $j_1 \succ \dots \succ j_{g'}$, where $g' \leq g$, and for $j_k \in G$ distinct. (In other words, agent $i$ ranks their top $g'$ choices, with $g'$ potentially depending on $i$, and any unranked choice is as good as some outside option.) 
\end{itemize}
\end{definition}

Given a menu of public goods that are offered, each agent is assigned to their favorite of these public goods, if such a public good exists. Formally, we define the agent assignment as follows.

\begin{definition}[Agent Assignment]
Given a menu selection problem $\problem$, for a menu $O \subseteq G$ (of goods to be \textit{offered}), we define the corresponding \textit{agent assignment} $a_O : N \to O \cup \{ \perp \}$ (here $\perp$ is the ``outside option'') as follows. For each agent $i \in N$, let $G'$ denote the public goods ranked in ${\succ}_i$. If $O \cap G' \neq \emptyset$, let $a_O(i)$ be agent $i$'s favorite public good (according to ${\succ}_i$) in $O$. Otherwise, if $O \cap G' = \emptyset$, then let $a_O(i) := \perp$.
\end{definition}

We now define the two main desiderata for a menu of public goods. The first, feasibility, pushes toward offering fewer public goods, while the second, uncontestability, pushes toward offering more public goods. This tension is at the heart of the problem that we describe.

\begin{definition}[Feasibility; Uncontestability]
Let $\problem$ be a menu selection problem, and let $O \subseteq G$ be a menu of public goods to be offered.
\begin{itemize}
\item
For some $t\in\mathbb{N}$, the menu $O$ is said to be \emph{$t$-feasible} if $\bigl|a_O^{-1}(j)\bigr| \ge t$ for every $j \in O$. That is, every offered public good $j \in O$ is assigned to at least $t$ agents.
\item
For some $u\in\mathbb{N}$, the menu $O$ is said to be \emph{$u$-uncontestable} if there exists no unoffered public good $j\in G\setminus O$ such that $\bigl|a^{-1}_{O \cup \{ j \}}(j)\bigr| \ge u$.
That is, no unoffered public good has a lobby of at least $u$ agents who prefer it over all public goods in $O$.
\end{itemize}
\end{definition}

\begin{definition}[Stability]
Let $\problem$ be a menu selection problem. A menu of public goods $O \subseteq G$ is said to be \textit{$(t,u)$-stable} if it is $t$-feasible and $u$-uncontestable.
\end{definition}

Note that it is possible for an empty menu $O = \emptyset$ to be stable. Indeed, if $n < t$, then the empty menu is the unique stable menu. 

\begin{fact}[Monotonicity]\label{lemma:monotonicity}
Let $\problem$ be a menu selection problem, and let $O \subseteq G$ be a menu of public goods to be offered.
\begin{itemize}
\item
If $O$ is $t$-feasible for some $t\in\mathbb{N}$, then $O'$ is $t'$-feasible for all $O'\subseteq O$ and $t'\le t$.
\item
If $O$ is $u$-uncontestable for some $u\in\mathbb{N}$, then $O'$ is $u'$-uncontestable for all $O'\supseteq O$ and $u'\ge u$.
\end{itemize}
\end{fact}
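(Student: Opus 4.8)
The plan is to reduce both bullet points to a single monotonicity property of the assignment map: \emph{enlarging a menu can only (weakly) shrink the set of agents assigned to any fixed good that survives the enlargement}. Precisely, I would first establish the following claim: for any two menus $A \subseteq B \subseteq G$ and any good $j \in A$, we have $a_B^{-1}(j) \subseteq a_A^{-1}(j)$. To see this, fix an agent $i$ with $a_B(i) = j$. By definition of the assignment, $j$ is then $i$'s $\succ_i$-favorite (ranked) good among the goods of $B$; since $A \subseteq B$ and $j \in A$, the good $j$ is in particular $\succ_i$-preferred to every other good of $A$ and itself belongs to $A$, so $j$ is also $i$'s favorite in $A$, i.e.\ $a_A(i) = j$. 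This gives the inclusion, and hence $|a_A^{-1}(j)| \ge |a_B^{-1}(j)|$.

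Given the claim, the first bullet is immediate. Assume $O$ is $t$-feasible and let $O' \subseteq O$ and $t' \le t$. For every $j \in O'$, applying the claim with $A = O'$ and $B = O$ yields $a_O^{-1}(j) \subseteq a_{O'}^{-1}(j)$, so $|a_{O'}^{-1}(j)| \ge |a_O^{-1}(j)| \ge t \ge t'$. As $j \in O'$ was arbitrary, $O'$ is $t'$-feasible.

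For the second bullet I would argue the contrapositive. Assume $O$ is $u$-uncontestable and let $O' \supseteq O$ and $u' \ge u$; suppose toward a contradiction that $O'$ were $u'$-contestable, witnessed by some good $j \in G \setminus O'$ with $|a_{O' \cup \{j\}}^{-1}(j)| \ge u'$. Since $O \subseteq O'$ we have $j \in G \setminus O$, and $O \cup \{j\} \subseteq O' \cup \{j\}$ with $j$ lying in the smaller menu, so the claim (with $A = O \cup \{j\}$ and $B = O' \cup \{j\}$) gives $a_{O' \cup \{j\}}^{-1}(j) \subseteq a_{O \cup \{j\}}^{-1}(j)$. Hence $|a_{O \cup \{j\}}^{-1}(j)| \ge |a_{O' \cup \{j\}}^{-1}(j)| \ge u' \ge u$, so $j$ witnesses that $O$ is $u$-contestable, contradicting the hypothesis.

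There is no serious obstacle here; the only point requiring care is the direction of the inclusions—the assignment is monotone in the menu in opposite senses for the two desiderata, which is exactly why feasibility is preserved under passing to subsets (and smaller thresholds) while uncontestability is preserved under passing to supersets (and larger thresholds). Isolating the single claim above makes both directions transparent and avoids repeating the favorite-good argument twice.
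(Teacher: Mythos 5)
Your proof is correct. The paper states this result as a \emph{Fact} with no proof at all, treating both bullets as immediate from the definition of the agent assignment; the single claim you isolate---that for $A \subseteq B$ and $j \in A$ one has $a_B^{-1}(j) \subseteq a_A^{-1}(j)$---is exactly the observation the paper implicitly relies on, so your argument is the natural explicit write-up of the paper's (omitted) reasoning rather than a different route.
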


Finally we fix some notation. Let $|j| := |a_G^{-1}(j)|$ denote the number of agents whose favorite public good is $j$. Let $|j \succ j'| := |a_{\{j,j'\}}^{-1}(j)|$ denote the number of agents who prefer public good $j$ over public good $j'$. And more generally, for $j \in G \setminus O$, let $|j \succ O| := |a_{O \cup \{ j \} }^{-1}(j)|$ denote the number of agents who prefer public good $j$ over all public goods in $O \subseteq G$.

\section{Simple Lower and Upper Bounds}\label{sec:simple-lower-upper-bounds}

Our main question is for which menu selection problems a stable solution exists. As it turns out, an interesting (and useful) point of view into this question is to ask for which combination of parameters $t,u \in \N$, and more specifically, for which ratio between $u\!-\!1$ and $t\!-\!1$, a stable solution is guaranteed to exist for all assignment problems. We start by demonstrating that for some such combinations of parameters, the existence of a stable solution is not guaranteed. Here we split into two cases: first, $g=2$ and $\frac{u-1}{t-1} < 1$, and second, $g \ge 3$ and $\frac{u-1}{t-1} < 2$.\footnote{We write the conditions on the parameters $t,u$ in terms of the ratio $\frac{u-1}{t-1}$ for ease of legibility. In doing so, we implicitly assume $t \ge 2$ and $u \ge 1$. These inequalities can also be rewritten as linear (Diophantine) inequalities: first, $\frac{u-1}{t-1} < 1$ if and only if $u \le t-1$, and second, $\frac{u-1}{t-1} < 2$ if and only if $u \le 2t-2$. These linear inequalities have the advantage of being well-defined for slightly expanded parameter regimes (all $t,u \in \mathbb{Z}^{\ge 0}$). Indeed, our results and proofs continue to (trivially) hold in this expanded parameter regime.}

\begin{proposition}\label{prop:lower-bound-g-2}
For every $t,u \in \N$ such that $\frac{u-1}{t-1} < 1$,\footnote{Recall that $\frac{u-1}{t-1} <1$ is equivalent to $u < t$.} there exists a menu selection problem on two public goods that has no $(t,u)$-stable menu.
\end{proposition}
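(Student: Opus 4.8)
The plan is to construct a single menu selection problem on $G=\{1,2\}$ and verify that each of the four possible menus $\emptyset$, $\{1\}$, $\{2\}$, and $\{1,2\}$ fails to be $(t,u)$-stable. Since $g=2$ there are only these four menus, so ruling each one out establishes the claim.

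First I would fix the instance: take $N$ to consist of $2u$ agents, of whom $u$ have the complete preference $1 \succ 2$ and the remaining $u$ have the complete preference $2 \succ 1$. The relevant quantities are then $\bigl|a_{\{1,2\}}^{-1}(1)\bigr| = u$ and $\bigl|a_{\{1,2\}}^{-1}(2)\bigr| = u$, while $|j| = 2u$ for each $j \in \{1,2\}$.

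Next I would rule out each menu in turn. For the full menu $\{1,2\}$, each good is the favorite of exactly $u$ agents, and since $u < t$ by hypothesis, neither good meets the $t$-feasibility threshold, so $\{1,2\}$ is not $t$-feasible. For the singleton $\{1\}$, the unoffered good $2$ has a lobby of $\bigl|a_{\{1,2\}}^{-1}(2)\bigr| = u \geq u$ agents, so $\{1\}$ is $u$-contestable; symmetrically $\{2\}$ is $u$-contestable via good $1$. For the empty menu $\emptyset$, good $1$ is ranked by all $2u$ agents, and since $2u \geq u$ the empty menu is $u$-contestable as well. Hence none of the four menus is $(t,u)$-stable, as desired.

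The step to highlight is the simultaneous blocking of $\{1,2\}$ and of the two singletons, which is exactly where the hypothesis $u < t$ is used. To contest a singleton, the rival good must attract a lobby of size at least $u$; to break feasibility of the full menu, some good must attract fewer than $t$ agents. Setting each good's support to exactly $u$ threads both constraints at once precisely because $u < t$: the same count $u$ is simultaneously large enough to contest a singleton and too small to sustain a good in the full menu. This is the only genuinely delicate point, and the remaining verifications follow immediately from the definitions.
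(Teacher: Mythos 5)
Your proof is correct, but your construction is genuinely different from the paper's. The paper takes just $u$ agents, all with the single preference $1 \succ 2$: the empty menu is then contested by this bloc, while no nonempty menu can be $t$-feasible for the simple reason that the total population $u$ is below $t$. You instead take $2u$ agents split into two opposing blocs of size $u$, so that the full menu $\{1,2\}$ fails feasibility by vote-splitting, each singleton is contested by the rival bloc of exactly $u$ agents, and the empty menu is contested trivially. The paper's instance is more minimal (one preference type, and all nonempty menus are ruled out at once by scarcity), but yours buys something the paper's does not: in your instance the total population $2u$ can exceed $t$ (whenever $u > t/2$), in which regime the singletons $\{1\}$ and $\{2\}$ are actually $t$-feasible and fail only through contestability. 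So your example shows that nonexistence for $u < t$ is not merely an artifact of having fewer than $t$ agents in total; it is also the natural $g = 2$ analogue of the cyclic-bloc construction the paper uses in \cref{lower-bound-2} for $g \ge 3$. One notational caveat: the paper defines $|j| := \bigl|a_G^{-1}(j)\bigr|$, the number of agents whose \emph{favorite} good is $j$, so in your instance $|1| = |2| = u$, not $2u$; the quantity you intend (the $2u$ agents who rank good $1$ at all) is $|1 \succ \emptyset|$ in the paper's notation, and your argument for the empty menu is correct once restated in those terms.
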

\begin{proof}
Consider a menu selection problem with $u$ agents, all of whom have the preference list $1 \succ 2$. Then the menu $\emptyset$ is not $u$-uncontestable, because all $u$ agents would lobby for public good 1 (or 2). However, any nonempty menu $O \subseteq \{1,2\}$ is not $t$-feasible, because there are only $u < t$ agents.
\end{proof}

\lowerboundtwo*

\begin{proof}

Let $x:=\bigl\lceil\frac{u}{2}\bigr\rceil<t$. Set $n:= 3x$, and set the preferences of the agents as follows: 
\begin{itemize}
\item
$1\succ_i 2 \succ_i 3$ for every $i\in\{1,\ldots,x\}$,
\item
$2\succ_i 3 \succ_i 1$ for every $i\in\{x\!+\!1,\ldots,2x\}$,
\item
$3\succ_i 1 \succ_i 2$ for every $i\in\{2x\!+\!1,\ldots,3x\}$.
\end{itemize}

First, observe that no $O \not\subseteq \{ 1,2,3 \}$ can be $t$-feasible, because no agent will use public good $k \in O \setminus \{ 1,2,3 \}$. So, a $(t,u)$-stable menu $O$ must satisfy $ O\subseteq \{1,2,3\}$. Reasoning by cases, we now show that no such $O$ is both $t$-feasible and $u$-uncontestable.
\begin{itemize}
\item
$|O|=3$: as the number of agents $n$ satisfies $n=3x<3t$, we have that $O$ is not $t$-feasible.
\item
$|O|=2$: by symmetry, assume without loss of generality that $O=\{1,2\}$. By definition of $a_O$, we have $a_O^{-1}(2)=\{x\!+\!1,\ldots,2x\}$, and so $\bigl|a_O^{-1}(2)\bigr|=x<t$. Thus, $O$ is not $t$-feasible.
\item
$|O|=1$: by symmetry, assume without loss of generality that $O=\{1\}$. We have $3 \succ_i 1 = a_O(i)$ for every $i\in U:=\{x\!+\!1,\ldots,3x\}$, while $|U|=2x\ge u$, and so $O$ is not $u$-uncontestable. 
\item $|O|=0$: as $n = 3x \geq u$, we have that $O = \emptyset$ is not $u$-uncontestable.\qedhere
\end{itemize}
\end{proof}

We conclude this section with a fairly simple upper bound, showing that if the ratio between $u-1$ and $t-1$ is sufficiently large, then a stable solution is guaranteed to exist.

\upperboundgminusone*

\begin{proof}
Let $O := \{ k \in G:\ |k| \geq t \}$. There are two cases.

Case 1: $O \neq \emptyset$. By construction, $O$ is $t$-feasible. Additionally, each public good $j \in G \setminus O$ satisfies $|j| < t$, so the maximum possible lobby size is 
\[ \max_{j \in G \setminus O} |j \succ O| \leq (g-|O|)(t\!-\!1) \leq (g\!-\!1)(t\!-\!1) < u.\] 
Thus $O$ is $(t,u)$-stable. 

Case 2: $O = \emptyset$. If $\emptyset$ is $u$-uncontestable, then $\emptyset$ is $(t,u)$-stable, because $\emptyset$ is trivially $t$-feasible. Otherwise, there exists some $k \in G$ such that $|k \succ \emptyset| \geq u$, so $\{ k \}$ is $t$-feasible. And if $\{ k \}$ is offered, the maximum possible lobby size is 
\[ \max_{j \in G \setminus \{ k\} } |j \succ \{ k \}| \leq (g\!-\!1)(t\!-\!1) < u,\]
so $\{ k \}$ is $(t,u)$-stable.
\end{proof}

Note that the gap between the lower bound of \cref{lower-bound-2} and the upper bound of \cref{upper-bound-g-minus-1} is stark: the ratio between $u$ and $t$ is $\Theta(1)$ in the former, while it is $\Theta(g)$ in the latter. In the remainder of this paper, we explore the space between these two simple bounds.

\section{Tightness of the Lower Bound for \texorpdfstring{$g\le 6$}{g<=6}}\label{sec:small-c}

In this section, possibly surprisingly, we show that the simple lower bound of \cref{lower-bound-2} is in fact precisely tight for all $g\le 6$. That is, the condition presented in \cref{lower-bound-2} is not only sufficient but also necessary for the inexistence of a solution for such $g$. We start by proving this tightness analytically for $g\le 4$ (\cref{sec:analytic-c4-proof}), and then identify an equivalence to a polyhedra covering problem, which allows us to prove this tightness computationally all the way up to $g=6$ (\cref{sec:computational-c56-proof}). Interestingly, this tightness ceases to hold starting at $g=7$; this will be shown in \cref{sec:beyondc6}, where we further study the domain of $g\ge7$.

\subsection{Analytic Proof for $g\le 4$}\label{sec:analytic-c4-proof}

In this section we prove the first two parts of \cref{upper-bound-2} from the introduction. That is, we prove the tightness of the lower bound of \cref{lower-bound-2} in the case $g \leq 4$. Specifically, we prove the following. 

\begin{theorem}\label{thm:positive-four}
For every $g \in \{3,4 \}$ and $t,u \in \N$ such that $u \ge 2t\!-\!1$, every menu selection problem with $g$ public goods has a $(t,u)$-stable menu.
\end{theorem}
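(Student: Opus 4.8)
The plan is to reduce everything to $g=4$. For $g=3$ the hypothesis $u\ge 2t\!-\!1$ is exactly $\frac{u-1}{t-1}\ge 2 = g\!-\!1$, so the statement is already contained in \cref{upper-bound-g-minus-1}. Throughout I would lean on one basic estimate: for any menu $O$ and any $j\in G\setminus O$, every agent counted in $|j\succ O|$ has their overall favorite outside $O$, so $|j\succ O|\le \sum_{k\in G\setminus O}|k|$; hence any menu whose complement carries first-choice mass at most $2(t\!-\!1)$ is automatically $u$-uncontestable. I would also pass to complete preference lists via \cref{lemma:suff-complete}, so that $|a\succ b|+|b\succ a|=n$ for every pair.

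For $g=4$ I would split on the set of ``popular'' goods $O^{\ast}:=\{k\in G:|k|\ge t\}$. If $|O^{\ast}|\ge 2$, then $O^{\ast}$ is $t$-feasible (each popular good retains its $\ge t$ first-choice fans) and, since its complement has at most two goods each with fewer than $t$ fans, the estimate above makes it $u$-uncontestable; so $O^{\ast}$ is stable. If $|O^{\ast}|=1$, say $O^{\ast}=\{1\}$, then $\{1\}$ is $t$-feasible, and if it is also uncontestable we are done; otherwise some $j$ with $|j\succ 1|\ge u$ contests it, and $\{1,j\}$ is stable, since good $1$ keeps its first-choice fans, $j$ gets $\ge u\ge t$, and the two remaining goods each have fewer than $t$ fans.

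The main obstacle is $O^{\ast}=\emptyset$, where every good is the top choice of at most $t\!-\!1$ agents (so $n\le 4(t\!-\!1)$) and popularity singles out nothing. Here the estimate makes \emph{every} pair $u$-uncontestable, so a pair is stable as soon as it is $t$-feasible, i.e.\ as soon as $\min(|a\succ b|,|b\succ a|)\ge t$; likewise $\emptyset$ is stable unless some good is ranked by $\ge u$ agents, and a singleton $\{a\}$ is stable unless some good beats it with a lobby $\ge u$. I would argue by contradiction, assuming no stable menu exists (it then suffices to rule out menus of size at most two). No stable $\emptyset$ forces $n\ge u$, hence every singleton is $t$-feasible; no stable singleton then forces every good to be \emph{strongly beaten}, i.e.\ to be the loser of some pair with $|b\succ a|\ge u$; and no feasible pair forces every pair to be \emph{lopsided}, with $\max(|a\succ b|,|b\succ a|)\ge n\!-\!t\!+\!1$. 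Lopsidedness orients a tournament $T$ on the four goods, and since $u\ge 2t\!-\!1>t\!-\!1$ each strong beat is a $T$-edge, so every vertex has an incoming $T$-edge and $T$ has no source, leaving only the two source-free tournament types on four vertices.

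To close the contradiction I would use one identity: for any directed triangle $a\to b\to c\to a$ of $T$, a single linear order agrees with at most two of the three edges, so $|a\succ b|+|b\succ c|+|c\succ a|\le 2n$. Applying the lopsided bounds ($\ge n\!-\!t\!+\!1$ per edge) to a triangle gives $n\le 3(t\!-\!1)$; applying instead the strong-beat bounds ($\ge u$) to the incoming edges forced by source-freeness yields, for a suitable triangle, an edge-sum of at least $3u$ (three strong edges) or at least $2u+(n\!-\!t\!+\!1)$ (two strong edges and one lopsided edge), and in either case $n\ge 3t\!-\!1$, contradicting $n\le 3(t\!-\!1)$. I expect the delicate point to be the source-free tournament of score type $(2,2,1,1)$, where the strong incoming edges may sit on either a Hamiltonian or a diagonal edge; one must select the right triangle in each of the four sub-cases, and it is exactly the tight, cyclic profile here in which the unique stable menu is a \emph{balanced diagonal pair} rather than anything produced by greedy augmentation — a phenomenon that dissolves into the uniform inequality $n\ge 3t\!-\!1$ after the short case check.
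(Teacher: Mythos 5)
Your $g=3$ reduction is right ($u\ge 2t-1$ is exactly $\frac{u-1}{t-1}\ge g-1$ there, so \cref{upper-bound-g-minus-1} applies), and \emph{for complete preference lists} your $g=4$ argument is correct and genuinely different from the paper's proof, which runs a greedy add/remove algorithm and classifies its cycles; your route is closer in spirit to the paper's \cref{lemma:no-12-gaps}. I checked the endgame: assuming no stable menu of size at most two, the orientation ``$x\to y$ iff $|x\succ y|\ge n-t+1$'' is well defined (both directions large would force $n\le 2t-2<u\le n$), every strong beat is a $T$-edge, source-freeness leaves only the score types $(0,2,2,2)$ and $(1,1,2,2)$, and in both the in-degree-one vertices have \emph{forced} strong incoming edges lying on a common directed triangle, so the bound $|a\succ b|+|b\succ c|+|c\succ a|\le 2n$ gives $n\ge 3t-1$ against the $n\le 3(t-1)$ you get from the same triangle's $T$-edges. (The $(1,1,2,2)$ type is even case-free: the two forced strong edges form a path that closes into a directed triangle, so the sub-case analysis you anticipate is unnecessary.)

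The genuine gap is the reduction to complete preferences. \cref{def:cap} allows truncated lists and \cref{thm:positive-four} quantifies over all such problems, but \cref{lemma:suff-complete} goes the \emph{wrong way} for your purposes: it converts an incomplete-preference problem on $g$ goods into a complete-preference problem on $g+1$ goods. So to cover incomplete lists with $g=4$ you would need your theorem for complete lists on \emph{five} goods, which your four-good argument does not supply (and which the paper itself only obtains computationally). Completeness is load-bearing, not cosmetic: with truncated lists $|a\succ b|+|b\succ a|$ can be far below $n$, so infeasibility of a pair ($\min<t$) no longer yields lopsidedness ($\max\ge n-t+1$), the tournament is not well defined, and both of your key inequalities evaporate; the step ``$n\ge u$ implies every singleton is $t$-feasible'' also fails, though that one is patchable via \cref{lemma:suff-rarely-ranked}. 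The natural repair is to replace lopsidedness counting by the implication in the paper's \cref{lemma} --- if $|i\succ j|\ge u$ and $|k\succ j|<t$ then $|i\succ k|\ge u-(t-1)\ge t$ --- which is valid for truncated lists and can be chained around your forced strong edges; but that is essentially the paper's proof of \cref{lemma:no-12-gaps}, i.e., a different argument from the one you propose.
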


Our proof idea is as follows. One simple way to find a $(t,u)$-stable menu $O$ is to run a greedy algorithm: repeatedly add public goods demanded by $\geq u$ agents, and remove public goods only used by $<t $ agents, until $t$-feasibility and $u$-uncontestability are both satisfied. If this algorithm stops, then the resulting menu is $(t,u)$-stable. However, this algorithm may not stop: In this case, by finiteness, it will cycle. As we will see, even for $g=4$ this algorithm may indeed cycle (see \cref{example:c4-cycle}), however an analysis of the cycle will allow us to nonetheless find a $(t,u)$-stable solution for $g=4$.\footnote{This approach turns out to be insufficient for the case of $g \ge 5$. See \cref{app:greedy} for a demonstration.}

We begin by formally specifying the greedy algorithm (\cref{alg:greedy}). 

\begin{algorithm}
\caption{Greedy algorithm for the menu selection problem}\label{alg:greedy}
\begin{algorithmic}
\State \textbf{Inputs: }menu selection problem $\problem$, parameters $t,u \in \N$.
\State Initialize $O\gets \emptyset$.\footnote{This can be replaced with any other menu of public goods with no change to the analysis below.} 
\While{$O$ is not $(t,u)$-stable}
\If{$O$ is not $t$-feasible}
\State By definition there exists $j\in O$ such that $|j \succ O \setminus \{ j \} | <t$.
\State Let $j$ be a minimal\newcounter{minimalfootnote}\setcounter{minimalfootnote}{\thefootnote}\footnote{This can be replaced with any other consistent tie-breaking with no change to the analysis below.} such $j$, and update $O\gets O\setminus\{j\}$.
\ElsIf{$O$ is not $u$-uncontestable\footnote{This item can be consistently swapped with the preceding one with no change to the analysis below.}}
\State By definition there exists $j\in G\setminus O$ such that $|j\succ O|\ge u$. 
\State Let $j$ be a minimal\newcounter{savedcurrentfootnote}\setcounter{savedcurrentfootnote}{\thefootnote}\setcounter{footnote}{\theminimalfootnote}\footnotemark\setcounter{footnote}{\thesavedcurrentfootnote} such $j$, and update $O\gets O\cup\{j\}$.
\EndIf
\EndWhile\\
\Return $O$ (which is, by construction, a $(t,u)$-stable menu). 
\end{algorithmic}
\end{algorithm}
We denote a cycle of \cref{alg:greedy} by, e.g., 
\[ O_1 \xrightarrow{-j_1} O_2 \xrightarrow{+j_2} O_3 \xrightarrow{+j_3} \cdots O_\ell \xrightarrow{+j_\ell} O_1,\]
where $O_2=O_1\setminus\{j_1\}$, $O_3=O_2\cup\{j_2\}$, etc., and refer to this expression as a \textit{cycle transcript}. 

Next we give an example for when \cref{alg:greedy} may cycle in the case $g=4$ (which straightforwardly generalizes to larger $g$). 

\begin{example}[Greedy algorithm may cycle when $g=4$]\label{example:c4-cycle}
Let $\problem$ be a menu selection problem with $g=4$ and the following agents:
\begin{itemize}
    \item $t\!-\!1$ agents have preference $4 \succ 3 \succ 2 \succ 1$
    \item $t\!-\!1$ agents have preference $3 \succ 2 \succ 1 \succ 4$
    \item $t\!-\!1$ agents have preference $2 \succ 1 \succ 4 \succ 3$ 
    \item $t\!-\!1$ agents have preference $1 \succ 4 \succ 3 \succ 2$ 
\end{itemize}
Set $u := 2t\!-\!1$. Then the $(t,u)$-stable menus are $\{1,3 \}$ and $\{2,4 \}$. However, \cref{alg:greedy} cycles in the following way:
\[ \emptyset \xrightarrow{+1} \{ 1 \} \xrightarrow{+2} \{1,2\} \xrightarrow{-1} \{ 2 \} \xrightarrow{+3} \{2,3 \} \xrightarrow{-2} \{3 \} \xrightarrow{+4} \{3,4\} \xrightarrow{-3} \{4 \} \xrightarrow{+1} \{4,1\} \xrightarrow{-4} \{1\} \to \dots \]
\end{example}

The cycle in \cref{example:c4-cycle} has a simple structure: it alternates between menus of size 1 and 2. The following two 
lemmas show that (in particular) when $g = 4$, all cycles must be of this form. 

\begin{lemma}
Let $t,u \in \N$ such that $u \ge t$, and let $\problem$ be a menu selection problem with $g \ge 2$. Suppose \cref{alg:greedy} results in a cycle $O_1 \xrightarrow{-j_1} O_2 \xrightarrow{+j_2} O_3 \xrightarrow{+j_3} \cdots O_\ell \xrightarrow{+j_\ell} O_1$. Then $|O_j| \ge 1$ for all $j \in [\ell]$.
\end{lemma}
\begin{proof}
Suppose for the sake of contradiction that $\emptyset$ appears in the cycle. Then the cycle transcript contains a step $\{j \} \xrightarrow{-j} \emptyset$ for some $j \in G$. Because $j$ was removed along the cycle, it must also be added: thus, there exists some $A \subseteq G$ with $j \notin A$ such that $A \xrightarrow{+j} A \cup \{ j \}$ appears in the cycle transcript. Now, the $\{ j \} \xrightarrow{-j} \emptyset$ step implies $|j \succ \emptyset| < t$, and the $A \xrightarrow{+j} A \cup \{ j \}$ step implies $|j \succ A| \geq u$. But $|j \succ \emptyset| \geq |j \succ A|$, and because $u \geq t$, we have a contradiction. 
\end{proof}

\begin{lemma}\label{no-large-menus}
Let $t,u \in \N$ such that $u \ge 2t\!-\!1$, and let $\problem$ be a menu selection problem with $g \ge 4$. Suppose \cref{alg:greedy} results in a cycle $O_1 \xrightarrow{-j_1} O_2 \xrightarrow{+j_2} O_3 \xrightarrow{+j_3} \cdots O_\ell \xrightarrow{+j_\ell} O_1$. Then $|O_j| \le g\!-\!2$ for all $j \in [\ell]$.
\end{lemma}

\begin{proof}
Assume for contradiction that a menu with cardinality at least $g\!-\!1$ exists along a cycle. So, there exists a menu $A$ with  $|A|\ge g\!-\!2$ and a public good $j\notin A$ such that $j$ is added to $A$ along that cycle. Since $j$ is added along that cycle, it is also removed from some menu $B$ along that cycle. Thus, $|j \succ A| \geq u$ and $|j \succ B| < t$. 

We cannot have $B \subseteq A$, because this would imply $t > |j \succ B| \geq |j \succ A| \geq u$, a contradiction. Thus $B \subsetneq A$. Since $|A| = g\!-\!2$ and $j \notin B$, we have $|B \setminus A| = 1$. Let $k \in B \setminus A$ denote this unique element. 

Since $|j \succ A| \geq u$, $|j \succ B| < t$, and $u \geq 2t\!-\!1$, there exist at least $u-(t\!-\!1) \geq t$ agents which satisfy $j \succ A$ but not $j \succ B$. The only way for an agent satisfying $j \succ A$ to violate $j \succ B$ is for them to violate $j \succ k$. Thus $|k \succ j \succ A| \geq t$, which implies $|k \succ G \setminus \{ k \}| \geq t$. In other words, at least $t$ agents prefer $k$ the most; therefore, $k$ is always assigned to at least $t$ agents if offered. But $k \notin A$ while $k \in B$, so $k$ is removed along that cycle---a contradiction. 
\end{proof}

\cref{no-large-menus} implies that when $g = 4$, if \cref{alg:greedy} cycles, the cycle must be of the form
\[ \{ j_1 \} \to \{ j_1, j_2 \} \to \{ j_2 \} \to \{ j_2, j_3 \} \to \dots \to \{ j_{\ell-1} \} \to \{ j_{\ell - 1}, j_\ell \} \to \{ j_{\ell} \} \to \{ j_\ell, j_1 \} \to \{j_1 \},\]
where $\ell \in \{2,3,4\}$. It remains to analyze these cases. In \cref{lemma:12snake-len2} and \cref{lemma:12snake-len3}, we show that $\ell=2,3$ are impossible. Finally, in \cref{lemma:12snake-len4}, we show that when $\ell=4$, a $(t,u)$-stable menu can be computed from the cycle transcript directly. 

\begin{lemma}\label{lemma:12snake-len2}
Let $t,u \in \N$ such that $u \ge t$, and let $\problem$ be a menu selection problem with $g \ge 2$. Then \cref{alg:greedy} cannot cycle in the following way, for distinct $j_1, j_2 \in G$:
\[ \{ j_1 \} \to \{ j_1, j_2 \} \to \{ j_1 \}.\]
\end{lemma}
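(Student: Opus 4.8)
The plan is to read off the two inequalities that the two transitions in the purported cycle impose, and to observe that they pin down the \emph{same} quantity in mutually contradictory ways. First I would examine the transition $\{j_1\} \xrightarrow{+j_2} \{j_1,j_2\}$. Since this is an addition step of \cref{alg:greedy}, and since the algorithm prioritizes removals over additions, the menu $\{j_1\}$ must be $t$-feasible at this point, and it fails $u$-uncontestability with $j_2$ as the added witness; by the definition of the addition branch this gives $|j_2 \succ \{j_1\}| \ge u$. Next I would examine the transition $\{j_1,j_2\} \xrightarrow{-j_2} \{j_1\}$. This is a removal step, so $\{j_1,j_2\}$ is not $t$-feasible and the good actually removed is $j_2$; hence $j_2$ is itself an offending good, meaning $|j_2 \succ \{j_1,j_2\}\setminus\{j_2\}| = |j_2 \succ \{j_1\}| < t$.

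The crux of the argument is that the quantity $|j_2 \succ \{j_1\}|$ is intrinsic to the agents' preferences and does not depend on the algorithm's current state: it simply counts the agents whose favorite good among $\{j_1,j_2\}$ is $j_2$. Combining the two inequalities therefore yields $u \le |j_2 \succ \{j_1\}| < t$, which is impossible under the standing hypothesis $u \ge t$. This contradiction rules out the cycle.

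I expect no serious obstacle here. The only points requiring a moment of care are (i) confirming that the algorithm's preference for removals over additions forces the first transition to occur from a $t$-feasible menu, so that the witnessing inequality $|j_2 \succ \{j_1\}| \ge u$ is valid, and (ii) confirming that it is genuinely $j_2$ (and not $j_1$) that is removed in the second transition, which is immediate from the transcript $\{j_1,j_2\} \to \{j_1\}$. Notably, the argument never invokes the minimality tie-breaking rule, so it is robust to the choices left open in \cref{alg:greedy}.
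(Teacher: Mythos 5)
Your proof is correct and takes essentially the same route as the paper's: both extract $|j_2 \succ j_1| \ge u$ from the addition step and $|j_2 \succ j_1| < t$ from the removal step, and note these contradict $u \ge t$ since the quantity $|j_2 \succ j_1|$ is determined by the preferences alone. Your additional remarks (branch priority, tie-breaking robustness) are accurate but not needed for the argument.
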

\begin{proof}
Suppose such a cycle exists. The step $\{ j_1 \} \to \{ j_1, j_2 \}$ implies $|j_2 \succ j_1| \geq u$, and the step $\{ j_1, j_2 \} \to \{ j_1 \}$ implies $|j_2 \succ j_1| < t$. This cannot happen if $u \geq t$. 
\end{proof}

\begin{lemma}\label{lemma:12snake-len3}
Let $t,u \in \N$ such that $u \ge 2t\!-\!1$, and let $\problem$ be a menu selection problem with $g \ge 3$. Then \cref{alg:greedy} cannot cycle in the following way, for distinct $j_1, j_2 , j_3\in G$:
\[ \{ j_1 \} \to \{ j_1, j_2 \} \to \{ j_2 \} \to \{ j_2, j_3 \} \to \{ j_3 \} \to \{ j_3, j_1 \} \to\{ j_1 \}.\]
\end{lemma}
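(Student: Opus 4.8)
The plan is to assume such a cycle exists and read off, from each of its six transitions, the inequality that \cref{alg:greedy} forces. Each addition step $A \xrightarrow{+j} A \cup \{j\}$ certifies $|j \succ A| \ge u$ (the good $j$ had a lobby of size at least $u$ against $A$), while each removal step $A \xrightarrow{-j} A \setminus \{j\}$ certifies $|j \succ A \setminus \{j\}| < t$ (the good $j$ was used by fewer than $t$ agents when offered alongside the rest). Applied to the six arrows, this yields a ``majority cycle'' on the additions ($j_2 \succ j_1$, $j_3 \succ j_2$, $j_1 \succ j_3$, each with multiplicity $\ge u$) and the reverse inequalities on the removals (each with multiplicity $< t$). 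Of these I would isolate just three: $|j_2 \succ j_1| \ge u$ (from $\{j_1\} \xrightarrow{+j_2} \{j_1,j_2\}$), $|j_2 \succ j_3| < t$ (from $\{j_2,j_3\} \xrightarrow{-j_2} \{j_3\}$), and $|j_3 \succ j_1| < t$ (from $\{j_3,j_1\} \xrightarrow{-j_3} \{j_1\}$).

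The heart of the argument is a transitivity-style counting estimate. Since $u \ge 2t\!-\!1$, at least $2t\!-\!1$ agents prefer $j_2$ to $j_1$, and every such agent ranks $j_2$. At most $t\!-\!1$ agents in total prefer $j_2$ to $j_3$, so at least $(2t\!-\!1)-(t\!-\!1) = t$ of the agents preferring $j_2$ to $j_1$ do \emph{not} prefer $j_2$ to $j_3$. For an agent who ranks $j_2$, failing $j_2 \succ j_3$ means ranking $j_3$ strictly above $j_2$; hence these $\ge t$ agents satisfy $j_3 \succ j_2 \succ j_1$, and in particular rank $j_3$ above $j_1$. Therefore $|j_3 \succ j_1| \ge t$, contradicting $|j_3 \succ j_1| < t$, which completes the proof.

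The only point requiring care---and the place where the incompleteness of preference lists could in principle bite---is the deduction of $j_3 \succ j_2$ from the negation of $j_2 \succ j_3$. This is legitimate precisely because the agents in question already prefer $j_2$ to $j_1$ and hence rank $j_2$: for an agent who ranks $j_2$, the only way to fail $j_2 \succ j_3$ is to rank $j_3$ strictly above $j_2$ (an unranked $j_3$ would still leave $j_2 \succ j_3$ in force). I would also flag that the hypothesis $u \ge 2t\!-\!1$ is used tightly, as it is exactly what makes the surplus $(2t\!-\!1)-(t\!-\!1)$ reach $t$; under the weaker hypothesis $u \ge t$ used in \cref{lemma:12snake-len2} the same counting would not go through, which is consistent with the fact that a three-cycle is genuinely the first length at which the stronger bound is needed.
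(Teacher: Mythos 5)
Your proof is correct and takes essentially the same approach as the paper: both isolate the same three inequalities ($|j_2 \succ j_1| \ge u$, $|j_2 \succ j_3| < t$, $|j_3 \succ j_1| < t$) and rest on the same transitivity-style observation that an agent satisfying $j_2 \succ j_1$ but not $j_2 \succ j_3$ must satisfy $j_3 \succ j_1$; the paper packages this as the union bound $|j_2 \succ j_1| \le |j_2 \succ j_3| + |j_3 \succ j_1| \le 2(t\!-\!1) < u$, while you count the surplus of at least $u - (t\!-\!1) \ge t$ such agents directly, which is the same arithmetic. Your explicit check that an unranked $j_3$ cannot break the deduction is a nice clarification of a point the paper leaves implicit, but it does not change the argument.
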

\begin{proof}
Suppose such a cycle exists. The step $\{ j_1 \} \to \{ j_1, j_2 \}$ implies $|j_2 \succ j_1| \geq u$, and the steps $\{ j_2, j_3 \} \to \{ j_3 \}$ and $\{ j_3, j_1 \} \to\{ j_1 \}$ imply $|j_2 \succ j_3| < t$ and $|j_3 \succ j_1| < t$. Notice that for any agent $i$ for which $j_2 \succ_i j_1$, we must have $j_2 \succ_i j_3$ or $j_3 \succ_i j_1$. Thus $|j_2 \succ j_1| \leq |j_2 \succ j_3| + |j_3 \succ j_1| \leq 2(t\!-\!1) < u$. 
\end{proof}

\begin{lemma}\label{lemma:12snake-len4}
Let $t,u \in \N$ such that $u \ge 2t\!-\!1$, and let $\problem$ be a menu selection problem with $g \ge 4$. For distinct $j_1, j_2, j_3, j_4 \in G$, if \cref{alg:greedy} cycles like 
\[ \{ j_1 \} \to \{ j_1, j_2 \} \to \{ j_2 \} \to \{ j_2, j_3 \} \to \{ j_3 \} \to \{ j_3, j_4 \} \to\{ j_4 \} \to \{ j_4, j_1 \} \to \{ j_1\}, \]
then $\{ j_1, j_3 \}$ and $\{ j_2, j_4 \}$ are $t$-feasible and $u$-uncontestable.
\end{lemma}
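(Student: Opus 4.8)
The plan is to translate the cycle transcript into a fixed system of inequalities, reduce to a single menu by symmetry, and then verify feasibility and uncontestability of $\{j_1,j_3\}$ from those inequalities alone. Writing indices cyclically ($j_5:=j_1$), each addition step $\{j_k\}\xrightarrow{+j_{k+1}}\{j_k,j_{k+1}\}$ certifies $|j_{k+1}\succ j_k|\ge u$ (the algorithm only adds a good whose current lobby is at least $u$), and each removal step $\{j_k,j_{k+1}\}\xrightarrow{-j_k}\{j_{k+1}\}$ certifies $|j_k\succ j_{k+1}|<t$ (it only removes an under-subscribed good). This yields four \emph{forward} inequalities $|j_{k+1}\succ j_k|\ge u\ge 2t\!-\!1$ and four \emph{reverse} inequalities $|j_k\succ j_{k+1}|\le t\!-\!1$, for $k=1,2,3,4$. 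As a set these eight inequalities are invariant under the cyclic shift $j_1\mapsto j_2\mapsto j_3\mapsto j_4\mapsto j_1$, which carries $\{j_1,j_3\}$ to $\{j_2,j_4\}$; hence any conclusion I derive for $\{j_1,j_3\}$ from them transfers verbatim, with indices shifted, to $\{j_2,j_4\}$, and it suffices to treat $\{j_1,j_3\}$.

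For $t$-feasibility I must show $|j_3\succ j_1|\ge t$ and $|j_1\succ j_3|\ge t$, since these are exactly the numbers of agents assigned to $j_3$ and to $j_1$ under the menu $\{j_1,j_3\}$. For the first, I would start from the $\ge 2t\!-\!1$ agents counted by $|j_2\succ j_1|$. Every such agent ranks $j_2$, hence falls into exactly one of $|j_3\succ j_2|$ or $|j_2\succ j_3|$; the reverse inequality bounds the latter group by $t\!-\!1$, leaving at least $(2t\!-\!1)-(t\!-\!1)=t$ agents satisfying both $j_2\succ j_1$ and $j_3\succ j_2$. For such an agent $j_2$ is ranked strictly between $j_3$ and $j_1$ (with $j_1$ possibly unranked), so $j_3\succ j_1$; thus $|j_3\succ j_1|\ge t$. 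The mirror-image chain through $j_4$ — beginning with the $\ge 2t\!-\!1$ agents counted by $|j_4\succ j_3|$ and discarding the $\le t\!-\!1$ of them counted by $|j_4\succ j_1|$ — produces $t$ agents with $j_1\succ j_4\succ j_3$, whence $|j_1\succ j_3|\ge t$.

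For $u$-uncontestability I must check that no $j\in G\setminus\{j_1,j_3\}$ has $|j\succ\{j_1,j_3\}|\ge u$. The two cycle-neighbors are handled immediately by the reverse inequalities: $|j_2\succ\{j_1,j_3\}|\le |j_2\succ j_3|\le t\!-\!1<u$ and $|j_4\succ\{j_1,j_3\}|\le |j_4\succ j_1|\le t\!-\!1<u$, since an agent preferring $j_2$ (resp.\ $j_4$) to both $j_1$ and $j_3$ in particular prefers it to $j_3$ (resp.\ $j_1$). When $g=4$ — the setting in which this lemma is invoked — the goods $j_2,j_4$ exhaust $G\setminus\{j_1,j_3\}$, so this completes the verification.

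I expect the feasibility step to be the delicate one. The pigeonhole there is tight exactly at $u=2t\!-\!1$, so the hypothesis cannot be weakened, and one must take care that the transitivity argument survives incomplete preference lists: it is precisely because the two comparisons being combined both involve the ranked good $j_2$ (resp.\ $j_4$) that the chained conclusion $j_3\succ j_1$ (resp.\ $j_1\succ j_3$) is forced even when $j_1$ or $j_3$ is unranked. The uncontestability step, by contrast, is essentially immediate once the reverse inequalities are recorded.
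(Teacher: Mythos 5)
Your proposal is correct and takes essentially the same approach as the paper: extract the four forward ($\geq u$) and four reverse ($<t$) inequalities from the cycle transcript, prove $t$-feasibility by a pigeonhole-plus-transitivity count that pairs one forward with one reverse inequality, and establish uncontestability by bounding the lobbies of $j_2$ and $j_4$ via the reverse inequalities. The only differences are cosmetic---you pair $|j_2 \succ j_1| \geq u$ with $|j_2 \succ j_3| < t$ where the paper pairs $|j_3 \succ j_2| \geq u$ with $|j_1 \succ j_2| < t$---and your explicit remark that checking only $j_2, j_4$ settles uncontestability precisely when $g=4$ is, if anything, more careful than the paper's own proof, which likewise checks only these two goods without comment.
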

\begin{proof}
We will prove $\{ j_1, j_3 \}$ is $t$-feasible and $u$-uncontestable, as the proof is symmetric for $\{ j_2, j_4 \}$. To show $t$-feasibility, notice 
\begin{align*}
|j_3 \succ j_1| &\geq |j_3 \succ j_2| - |j_1  \succ j_2| \geq t \\
|j_1 \succ j_3| &\geq |j_1 \succ j_4| - |j_3 \succ j_4| \geq t,
\end{align*}
where here we are using $|j_3 \succ j_2|, |j_1 \succ j_4| \geq u$ and $|j_1  \succ j_2|, |j_3 \succ j_4| < t$. To show $u$-uncontestability, notice 
\begin{align*}
|j_2 \succ \{ j_1, j_3 \} | &\leq |j_2 \succ j_3| < t \leq u \\
|j_4 \succ \{ j_1, j_3 \} | &\leq |j_4 \succ j_1| < t \leq u,
\end{align*}
where here we are using $|j_2 \succ j_3|, |j_4 \succ j_1| < t$. 
\end{proof}

Finally, we combine the above lemmas to prove \cref{thm:positive-four}.

\begin{proof}[Proof of \cref{thm:positive-four}]
By \cref{lemma:suff-popular}, it suffices to consider the case $|j| < t$ for all $j \in G$.\footnote{Roughly speaking, this is because any goods $j$ with $|j| \ge t$ (that is, $j$ is the favorite good of at least $t$ agents) can immediately be opened. This reduces the original menu selection problem to a smaller menu selection problem such that each good is the favorite of $<t$ agents. See \cref{lemma:suff-popular} for details.} In this setting, every $O \subseteq G$ with $|O|=2$ is $u$-uncontestable, because the maximum possible lobby size is 
\[ \max_{j \in G \setminus O} |j \succ O| \leq (g\!-\!|O|)(t\!-\!1) \leq 2(t\!-\!1) < u.\]

Now, run \cref{alg:greedy} on this menu selection problem. If there is a cycle, the cycle must be of the form 
\[ \{ j_1 \} \to \{ j_1, j_2 \} \to \{ j_2 \} \to \{ j_2, j_3 \} \to \dots \to \{ j_{\ell-1} \} \to \{ j_{\ell - 1}, j_\ell \} \to \{ j_{\ell} \} \to \{ j_\ell, j_1 \} \to \{j_1 \},\]
where $\ell \leq 4$. By Lemmas \ref{lemma:12snake-len2} and \ref{lemma:12snake-len3}, in fact $\ell = 4$. And by Lemma \ref{lemma:12snake-len4}, a $t$-feasible and $u$-uncontestable $O\subseteq G$ exists.
\end{proof}

\subsection{Computer-Assisted Proof for $g=5,6$}\label{sec:computational-c56-proof}

In this section we prove the remaining two parts of \cref{upper-bound-2} from the introduction. That is, we prove the tightness of the lower bound of \cref{lower-bound-2} in the cases $g \in\{5,6\}$. 

\begin{theorem}\label{thm:positive-six}
For every $g \in \{5,6\}$ and $t,u \in \N$ such that $u \ge 2t\!-\!1$, every menu selection problem with $g$ public goods has a $(t,u)$-stable menu.
\end{theorem}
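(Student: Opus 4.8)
The plan is to reduce the nonexistence of a counterexample to the infeasibility of a single disjunctive system of linear inequalities, which can then be discharged by an SMT solver. First I would make three preliminary reductions. By the monotonicity of uncontestability (\cref{lemma:monotonicity}), a $(t,2t\!-\!1)$-stable menu is also $(t,u)$-stable for every $u \ge 2t\!-\!1$, so it suffices to treat $u = 2t\!-\!1$. By \cref{lemma:suff-popular} I may assume $|j| < t$ for every $j \in G$; under this assumption the bound $|j \succ O| \le (g\!-\!|O|)(t\!-\!1)$ already used above shows that every menu with $|O| \ge g\!-\!2$ is automatically $u$-uncontestable, so for such menus only $t$-feasibility can fail. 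Finally, via \cref{lemma:suff-complete} I would reduce to profiles in which every preference list is complete, so that a profile is fully described by a vector of nonnegative integer counts $(n_\sigma)_\sigma$, one per strict ordering $\sigma$ of $G$.

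The key observation is that every stability condition is linear in these counts: $|a_O^{-1}(j)| = \sum_\sigma n_\sigma$ summed over orderings $\sigma$ whose top element within $O$ is $j$, and $|j \succ O| = \sum_\sigma n_\sigma$ summed over orderings placing $j$ above all of $O$. A counterexample is thus a nonnegative integer vector for which every menu $O \subseteq G$ is unstable, i.e. for every $O$ either some $j \in O$ has $|a_O^{-1}(j)| \le t\!-\!1$ (infeasibility) or some $j \notin O$ has $|j \succ O| \ge 2t\!-\!1$ (contestability). I would normalize by $s := t\!-\!1$ and set $y_\sigma := n_\sigma / s \ge 0$ (the case $t=1$ is immediate, since the menu of all goods that are some agent's favorite is $(1,u)$-stable). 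Any counterexample then yields a rational point $\mathbf{y} \ge 0$ such that for every $O$, either some $j \in O$ has $\sum y_\sigma \le 1$ or some $j \notin O$ has $\sum y_\sigma \ge 2$; note that an actual counterexample satisfies the sharper bound $\sum y_\sigma \ge 2 + 1/s$, so these non-strict real inequalities define a genuine relaxation whose feasible region contains every rescaled finite counterexample.

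Hence ``no counterexample exists'' follows once I show this disjunctive real system is infeasible — equivalently, that the polyhedra $\{\mathbf{y} \ge 0 : O \text{ is } (t,u)\text{-stable}\}$ cover the nonnegative orthant. This is precisely the polyhedra-covering problem, which I would encode as a quantifier-free linear-real-arithmetic (QF\_LRA) SMT instance: one real variable per ordering, one disjunctive clause per menu, pruned using the reductions above (for $|O| \ge g\!-\!2$, and for $O = G$, the clause collapses to a disjunction of feasibility-failure inequalities only). Running the solver for $g = 5$ and $g = 6$ is expected to return UNSAT, which by the relaxation argument rules out a counterexample for every $t$ and every $u \ge 2t\!-\!1$, proving \cref{thm:positive-six}.

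The main obstacle is the reduction rather than the solver call. The delicate point is making the passage from integer profiles at a fixed $t$ to a single scale-invariant real query both \emph{sound} — getting the strict-versus-non-strict inequalities right so the relaxation never discards a genuine counterexample — and \emph{tight} enough that the relaxed system remains infeasible, so the solver returns UNSAT rather than a spurious satisfying point (which would force reintroducing integrality or additional structural constraints). A secondary obstacle is controlling the size of the instance: with $g!$ orderings and $2^g$ menus, the reductions to complete profiles, to $u = 2t\!-\!1$, and the pruning of the auto-uncontestable large menus are exactly what keep the encoding small enough for the solver to terminate.
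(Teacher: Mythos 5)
Your high-level strategy---encode profiles as nonnegative count vectors, express $(t,u)$-stability of each menu as linear inequalities over these counts, and have an SMT solver certify that the resulting polyhedra cover the orthant---is exactly the paper's approach (\cref{thm:polyhedra-reduction}). However, your specific encoding has a fatal flaw: the passage to a scale-invariant linear \emph{real} arithmetic system with non-strict inequalities is not merely ``possibly loose''; it is demonstrably satisfiable, so the solver would return SAT and the proof never gets off the ground. Concretely, for $g=5$ take $y_\sigma = 1$ on the three types $1\succ2\succ3\succ4\succ5$, $2\succ3\succ1\succ4\succ5$, $3\succ1\succ2\succ4\succ5$, and $y_\sigma = 0$ elsewhere. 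Then every menu containing two or three goods of $\{1,2,3\}$, and every menu containing $4$ or $5$ together with a good of $\{1,2,3\}$, has an offered good with assigned mass $\le 1$ (relaxed infeasibility); every singleton of $\{1,2,3\}$ is relaxed-contestable with lobby exactly $2$; and $\{4\}$, $\{5\}$, $\{4,5\}$, $\emptyset$ are relaxed-contestable or relaxed-infeasible as well (the same point, padded with goods $6$, works for $g=6$). This point is precisely the rescaled limit of the \cref{lower-bound-2} construction as $t \to \infty$ with $u = 2t\!-\!1$: your relaxation erases the additive slack ($u = 2s\!+\!1$ versus lobbies of size at most $2s$) on which the theorem entirely depends. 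The repair is to keep the arithmetic exact: either make the contestability clauses strict ($\sum y_\sigma > 2$), which renders the real system literally equivalent to the existence of an integer counterexample (clearing a common denominator $D$ of a rational solution yields a counterexample with $t = D\!+\!1$, $u = 2D\!+\!1$, and conversely), or do what the paper does---keep the counts and $t,u$ all as integer variables subject to $u \ge 2t\!-\!1$, with the exact thresholds $\ge t$ and $\ge u$ in the clauses, and ask for integer-arithmetic unsatisfiability.

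A second, independent error is your use of \cref{lemma:suff-complete}: that lemma converts an incomplete-preference problem on $g$ goods into a complete-preference problem on $g\!+\!1$ goods (adding a fresh good ranked first by $u$ new agents), not into one on $g$ goods. So to obtain \cref{thm:positive-six} for incomplete preferences at $g=6$, your reduction would require a guarantee about complete-preference problems on $7$ goods, which is false in general (\cref{lower-bound-2311}, which extends to complete preferences); restricting attention to the image of the embedding just reintroduces the incomplete-preference structure you were trying to remove. The paper sidesteps this entirely by taking the coordinates of the encoding to range over all $\sum_{k=0}^{g} g!/k!$ truncated preference lists, and you should do the same.
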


We prove this upper bound in a computer-aided manner: we analytically show that any instance of a menu selection problem is equivalent to a polyhedra covering problem, which we solve using the open-source SMT solver z3 \citep{de_moura_z3_2008}.\footnote{Our code is publicly available at the GitHub repository \url{https://github.com/sara-fish/stable-menus-of-public-goods}. There we explore two methods for solving our particular polyhedra covering problem: One, by encoding as a SMT problem (and solving with z3), and two, by encoding as an ILP (and solving with Gurobi). The SMT solver approach is substantially faster, so we focus on it in our exposition.}\textsuperscript{,}\footnote{We are of course far from the first to apply SMT solvers to economic settings. For example, SMT solvers have seen fruitful applications in social choice theory \citep[see, e.g.,][]{geist_computer-aided_2017,brandl_2018_proving,aziz_computational_2019}.} Specifically, we can encode any instance of a menu selection problem on $g$ public goods (for now assume complete preferences) as a point in $\Z^{g!}$, where each component counts the number of agents with a specific preference. Moreover, for each menu $O \subseteq G$, we show that the condition ``$O$ is $(t,u)$-stable for a menu selection problem'' is equivalent to the condition ``the menu selection problem's encoding $x \in \Z^{g!}$ is contained in some polyhedron $P_O^{g,t,u}$.'' Thus, every menu selection problem has a $(t,u)$-stable solution if and only if $\Z_{\geq 0}^{g!}$ is covered by the polyhedra $P_O^{g,t,u}$ for all $O \subseteq G$.

\subsubsection{Notation}

Fix the number of public goods $g$, as well as the parameters $t$ and $u$. Let $G = \{1,2,\dots, g \}$ denote the menu of all public goods. Let $\mc{P}$ denote some set of preferences $\succ$ over $G$, and fix some ordering of the elements of $\mc{P}$. 

First we specify how to convert a menu selection problem into a point in high-dimensional space. 

\begin{definition}
Let $\problem$ be a menu selection problem for which $\{ \succ_i:\ i \in N\} \subseteq \mc{P}$ for some set $\mc{P}$ of preferences over $G$. An \textit{agent cohort} is a vector $x \in \Z^{|\mc{P}|}$, indexed by preferences $\succ \in \mc{P}$, where $x_\succ$ is defined to be the number of agents in $\problem$ with preferences equal to $\succ$. 
\end{definition}

Next we define the polyhedra $P_O^{g,t,u}$ for all $O \subseteq G$ in such a way such that an agent cohort $x$ satisfies $x \in P_O^{g,t,u}$ if and only if $O$ is $(t,u)$-stable for the menu selection problem corresponding to $x$. 

\begin{definition}
The \textit{matrix associated with $O \subseteq G$} is denoted $A_O^{g,t,u}$. We specify $A_O^{g,t,u} \in \Z^{g \times |\mc{P}|}$ by specifying each row $a_i$ at a time. For all $i = 1, \dots, g$: 
\begin{itemize}
    \item ($t$-feasibility) If $i \in O$: for each $\succ \in \mc{P}$, set $a_i^{(\succ)} = 1$ if $i \succ O \setminus \{ i \}$, and 0 otherwise. 
    \item ($u$-uncontestability) If $i \notin O$: for each $\succ \in \mc{P}$, set $a_i^{(\succ)} = 1$ if $i \succ O$, and 0 otherwise. 
\end{itemize}
\end{definition}

\begin{definition}
The \textit{polyhedron associated with $O \subseteq G$}, denoted $P_O^{g,t,u}$, is given by $\{ x \in \Z^{|\mc{P}|}_{\geq 0}:\ D A_O^g x \geq b \}$, where $D \in \Z^{g \times g}$ and $b \in \Z^g$ are defined as follows:
\begin{itemize}
\item To specify $D$, set $D^{(ii)} := 1$ if $i \in O$ and $D^{(ii)} := -1$ if $i \notin O$ (and $D^{(ij)} := 0$ for $i \neq j$). 
\item To specify $b$, set $b^{(i)} := t$ if $i \in O$ and $b^{(i)} := -(u\!-\!1)$ if $i \notin O$.
\end{itemize}
\end{definition}

The following proposition follows directly from the above definitions. 

\begin{proposition}\label{thm:polyhedra-reduction}
Fix $t,u \in \N$. The following two statements are equivalent:
\begin{enumerate}[(1)]
    \item The polyhedra $P_O^{g,t,u}$ (for all $O \subseteq G$) cover $\Z^{|\mc{P}|}_{\geq 0}$, that is, $\Z^{|\mc{P}|}_{\geq 0} \subseteq \bigcup_{O \subseteq G} P_O^{g,t,u}$.
    \item For every menu selection problem on $g$ public goods, there exists a $(t,u)$-stable set. 
\end{enumerate}
\end{proposition}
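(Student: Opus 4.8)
The plan is to reduce the stated equivalence to a single \emph{pointwise} claim: for each fixed menu $O \subseteq G$ and each agent cohort $x \in \Z^{|\mc{P}|}_{\geq 0}$, one has $x \in P_O^{g,t,u}$ if and only if $O$ is $(t,u)$-stable for the menu selection problem encoded by $x$. Granting this claim, both implications of the proposition become routine quantifier manipulations: statement~(1) says every $x \in \Z^{|\mc{P}|}_{\geq 0}$ lies in some $P_O^{g,t,u}$, which by the claim says every such $x$ admits a menu $O$ that is $(t,u)$-stable for its encoded problem; and since every menu selection problem (with preferences drawn from $\mc{P}$) arises as the encoding of a unique nonnegative cohort $x$, this is precisely statement~(2). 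Here I would take $\mc{P}$ to be the full set of admissible preferences over $G$, so that ``every menu selection problem on $g$ public goods'' in~(2) genuinely ranges over all problems.

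The heart of the argument is to compute the matrix--vector product $A_O^{g,t,u} x$ coordinate by coordinate and read off its combinatorial meaning. Fix $x$ and let $\problem$ denote the problem it encodes, so that $x_\succ$ agents have preference $\succ$. For a good $i \in O$, the definition sets the entry $a_i^{(\succ)}$ to $1$ exactly when an agent of type $\succ$ ranks $i$ above every other good in $O$, i.e.\ exactly when such an agent is assigned to $i$ under the menu $O$. Summing against $x$ therefore gives $(A_O^{g,t,u} x)_i = \sum_{\succ} a_i^{(\succ)} x_\succ = |a_O^{-1}(i)|$, the number of agents assigned to $i$. For a good $i \notin O$, the entry $a_i^{(\succ)}$ is $1$ exactly when an agent of type $\succ$ prefers $i$ to all of $O$, i.e.\ when that agent would be assigned to $i$ under $O \cup \{i\}$; hence $(A_O^{g,t,u} x)_i = |i \succ O|$, the lobby size of $i$ against $O$.

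It then remains to feed these counts through the diagonal matrix $D$ and the threshold vector $b$. For $i \in O$ we have $D^{(ii)} = 1$ and $b^{(i)} = t$, so the $i$-th constraint of $D A_O^{g,t,u} x \geq b$ reads $|a_O^{-1}(i)| \geq t$; ranging over all $i \in O$, these constraints together are exactly the assertion that $O$ is $t$-feasible. For $i \notin O$ we have $D^{(ii)} = -1$ and $b^{(i)} = -(u\!-\!1)$, so the $i$-th constraint reads $-|i \succ O| \geq -(u\!-\!1)$, i.e.\ $|i \succ O| \leq u\!-\!1 < u$; ranging over all $i \notin O$, these are exactly the assertion that $O$ is $u$-uncontestable. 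Combined with the nonnegativity $x \geq 0$ built into $P_O^{g,t,u}$, we conclude that $x \in P_O^{g,t,u}$ holds precisely when $O$ is both $t$-feasible and $u$-uncontestable, i.e.\ $(t,u)$-stable, establishing the pointwise claim and hence the proposition.

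I expect no genuine mathematical obstacle; the work is purely bookkeeping, matching each linear inequality to one clause of the definitions of feasibility and uncontestability. The only point requiring care is the reading of the predicate ``$i \succ O \setminus \{i\}$'' for incomplete preference lists: an agent whose list does not rank $i$ must \emph{not} be counted toward $|a_O^{-1}(i)|$, so the predicate should be interpreted as ``$i$ is ranked and is preferred to every other good of $O$ that the agent ranks,'' exactly mirroring the agent-assignment rule $a_O$. Once this reading is fixed, each defining inequality of $P_O^{g,t,u}$ lines up verbatim with the corresponding stability condition.
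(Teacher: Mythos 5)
Your proposal is correct and takes essentially the same approach as the paper: the paper offers no written argument at all (it states that the proposition ``follows directly from the above definitions''), and your proof is exactly that direct unpacking---identifying $(A_O^{g,t,u}x)_i$ with $\bigl|a_O^{-1}(i)\bigr|$ for $i \in O$ and with $|i \succ O|$ for $i \notin O$, and matching the constraints $D A_O^{g,t,u} x \geq b$ clause-by-clause to $t$-feasibility and $u$-uncontestability. Your added care about incomplete preference lists (an agent who does not rank $i$ contributes $0$ to the corresponding entry) agrees with the paper's agent-assignment definition, so there is no gap.
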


\subsubsection{SMT Solver Calculation}

By \cref{thm:polyhedra-reduction}, to prove there exists a $(t,u)$-stable menu for all menu selection problems (over preferences $\mc{P}$) on $g$ public goods, it suffices to show $\Z^{|\mc{P}|}_{\geq 0} \subseteq  \bigcup_{O \subseteq G} P_O^{g,t,u}$. 

In other words, for $g \le 6$, we wish to show:
\[ \forall x \in \Z^{|\mc{P}|}_{\geq 0},\ \forall t,u \in \N \text{ s.t. } u \geq 2t\!-\!1,\ \exists O \subseteq G\text{ s.t. }x \in P_O^{g,t,u},\]
or equivalently,
\[ \neg\ [\exists x \in \Z^{|\mc{P}|}_{\geq 0},\ \exists t,u \in \N\text{ s.t. }u \geq 2t\!-\!1,\  \text{ s.t. }\forall O \subseteq G,\ x\notin P_O^{g,t,u}].\tag{$*$}\]
SMT solvers are especially well-equipped to verify expressions like $(*)$. An SMT (satisfiability modulo theories) solver takes as input simple mathematical and logical constraints over variables, and outputs whether the constraints are satisfiable or unsatisfiable. 

We obtain the most general result when we let $\mc{P}$ be the set of all preferences of the form $j_1 \succ \dots \succ j_{c'}$, where $g' \leq g$, and for $j_i$ distinct (that is, all preferences permitted in \cref{def:cap}). Note $|\mc{P}| = \sum_{k=0}^g \frac{g!}{k!}$. 

In our case, the variables consist of $|\mc{P}| + 2$ nonnegative integers (the $|\mc{P}|$ components of $x$, as well as $t$ and $u$), and there are $2^g$ constraints (one for each $O \subseteq G$). Recall $P_O^{g,t,u} = \{ x \in \Z_{\geq 0}^{|\mc{P}|}:\ DA_O^{g,t,u} x \geq b \}$, where $D$ is a diagonal matrix. Let $a_i$ denote the $i$th row of $A_O^{g,t,u} \in \Z^{g \times |\mc{P}|}$ and $b^{(i)}$ the $i$th entry of $b \in \Z^g$. Thus, 
\[ x \notin P_O^{g,t,u} \iff \bigvee_{i=1}^g [\langle D^{(ii)}a_i, x \rangle < b^{(i)}]. \]
This is an \textsf{OR} of $g$ integer linear inequalities, which are constraints SMT solvers can generally handle.

Using Z3Py, a Python API, we input ($*$) into the SMT solver z3. Specifically, we input the statement $\neg (*)$, so that an output of ``unsatisfiable'' means every menu selection problem has a $(t,u)$-stable set, for that particular value of $g$. For each $g=5,6$ (and also for $g=3,4$), the SMT solver returned ``unsatisfiable''.

Proving upper bounds using this method becomes computationally infeasible for larger~$g$. Note for example the exponential dependence of $|\mc{P}|$ on $g$. Moreover, fundamentally this method involves solving an instance of a polyhedra covering problem, and this family of problems is coNP-hard in general.\footnote{To encode an instance of 3-SAT on $n$ variables $x_1, \dots, x_n$ as a polyhedra covering problem (whether input polyhedra $P_i$ cover some polyhedron $Q$): set $Q = [0,1]^n$ and encode each clause as follows: e.g.\ clause $(x_i \vee x_j \vee \neg x_k)$ becomes the polyhedron $\{ x:\ x^{(i)} >1/2\}\cap \{x:\ x^{(j)} > 1/2\}\cap \{x:\ x^{(k)} \le 1/2 \}$. Finally, intersecting with $\Z^n$ does not change this analysis.} Nonwithstanding there is much structure to this particular instance (especially in the case of complete preferences, see \cref{app:matrix-structure-complete-case}). It is an interesting open problem to find a computationally efficient algorithm that checks, for fixed $g,t,u$, whether $\Z_{\geq 0}^{|\mc{P}|} \subseteq \bigcup_{O \subseteq G} P_O^{g,t,u}$. 

\section{\texorpdfstring{$g=7$}{g=7} and Beyond}\label{sec:beyondc6}

So far, taken together, \cref{lower-bound-2,thm:positive-four,thm:positive-six} provide a tight characterization, for $g \in \{3,4,5,6\}$, for the guaranteed existence of stable menus in terms of the ratio~$\frac{u-1}{t-1}$. (Specifically, existence is guaranteed if and only if $\frac{u-1}{t-1} \ge 2$.) One might conjecture that this pattern persists for $g \ge 7$. Surprisingly, this is not the case. 

Indeed, in \cref{sec:improved-lower-bound}, we show that a necessary condition for guaranteed existence of stable menus in the case $g \ge 7$ is $\frac{u-1}{t-1} \gtrapprox 2.1$. In \cref{sec:improved-upper-bound}, we further show that $\frac{u-1}{t-1} \ge g-2$ is a sufficient condition for guaranteed existence of stable menus in the case $g \ge 7$. Both of these results constitute improvements over the bounds presented in \cref{sec:simple-lower-upper-bounds} when $g \ge 7$. It remains an open question to identify a full characterization for the guaranteed existence of stable menus in terms of $\frac{u-1}{t-1}$ in the case $g \ge 7$ (see \cref{sec:discussion} for further discussion on this and other open questions). 

\subsection{Improved Lower Bound}\label{sec:improved-lower-bound}

Using an SMT solver (see \cref{sec:computational-c56-proof} for details), we computationally found a menu selection problem on $g\!=\!7$ public goods for which there exists no $(t,u)$-stable menu in the case $u \geq 2t\!-\!1$. The SMT solver outputted a complex menu selection problem on $n=144$ agents with $t=24$ and $u=47$ (these $t,u$ in particular satisfy $\frac{u-1}{t-1}=2$). By studying this counterexample and manually simplifying it, we obtain the following improved lower bound (cf. \cref{lower-bound-2}). 

\lowerboundcomplex*

We prove \cref{lower-bound-2311} by constructing a menu selection problem for which no $(t,u)$-stable solution exists. 

To construct this problem, let $x := \lfloor \frac{t-1}{11} \rfloor$. Set $u' := 23x$ and $t' := 11x + 1$. Then $u \leq u'$ and $t' \leq t$. By monotonicity (\cref{lemma:monotonicity}), it suffices to find a menu selection problem for which there exists no $(t', u')$-stable menu of public goods. To this end, consider the set of agents in \cref{tab:better-lower-bound}. 

\begin{table}
\small   
    \centering
    \begin{tabular}{c|c}
        $\#$ & Preference \\
        \hline 
        $5x$ & $1 \succ 2 \succ 3$ \\
        $5x$ & $2 \succ 3 \succ 4$ \\
        $5x$ & $3 \succ 4 \succ 5$ \\
        $5x$ & $4 \succ 5 \succ 6$ \\
        $5x$ & $5 \succ 6 \succ 7$ \\
        $5x$ & $6 \succ 7 \succ 1$ \\
        $5x$ & $7 \succ 1 \succ 2$ 
    \end{tabular}\quad~~
        \begin{tabular}{c|c}
        $\#$ & Preference \\
        \hline 
        $3x$ & $1 \succ 2 \succ 4 \succ 5$ \\
        $3x$ & $2 \succ 3 \succ 5 \succ 6$ \\
        $3x$ & $3 \succ 4 \succ 6 \succ 7$ \\
        $3x$ & $4 \succ 5 \succ 7 \succ 1$ \\
        $3x$ & $5 \succ 6 \succ 1 \succ 2$ \\
        $3x$ & $6 \succ 7 \succ 2 \succ 3$ \\
        $3x$ & $7 \succ 1 \succ 3 \succ 4$ 
    \end{tabular}\quad~~
        \begin{tabular}{c|c}
        $\#$ & Preference \\
        \hline 
        $x$ & $1 \succ 4 \succ 2 \succ 5$ \\
        $x$ & $2 \succ 5 \succ 3 \succ 6$ \\
        $x$ & $3 \succ 6 \succ 4 \succ 7$ \\
        $x$ & $4 \succ 7 \succ 5 \succ 1$ \\
        $x$ & $5 \succ 1 \succ 6 \succ 2$ \\
        $x$ & $6 \succ 2 \succ 7 \succ 3$ \\
        $x$ & $7 \succ 3 \succ 1 \succ 4$ 
    \end{tabular}\quad~~
        \begin{tabular}{c|c}
        $\#$ & Preference \\
        \hline 
        $x$ & $1 \succ 6 \succ 4 \succ 2$ \\
        $x$ & $2 \succ 7 \succ 5 \succ 3$ \\
        $x$ & $3 \succ 1 \succ 6 \succ 4$ \\
        $x$ & $4 \succ 2 \succ 7 \succ 5$ \\
        $x$ & $5 \succ 3 \succ 1 \succ 6$ \\
        $x$ & $6 \succ 4 \succ 2 \succ 7$ \\
        $x$ & $7 \succ 5 \succ 3 \succ 1$ 
    \end{tabular}
    \caption{A set of $70x$ agents, for which there exists no $(11x+1, 23x)$-stable menu of public goods. A similar construction in which all agents have complete preferences can be found in \cref{app:c7example-complete}.}
    \label{tab:better-lower-bound}
\end{table}

First, observe that any $t'$-feasible $O \subseteq G$ must satisfy $O \subseteq \{ 1, 2, 3, 4, 5, 6, 7 \}$, because these are the only ranked public goods. Thus, to show the set of agents in \cref{tab:better-lower-bound} has no $(t', u')$-stable menu of public goods, by monotonicity (\cref{lemma:monotonicity}) it suffices to prove the following two lemmas.

\begin{lemma}\label{lemma:o3}
Every $O \subseteq \{ 1, 2, 3, 4, 5, 6, 7 \}$ with $|O|=3$ is not $u'$-uncontestable. 
\end{lemma}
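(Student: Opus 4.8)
The plan is to exploit the cyclic symmetry of the construction to cut the $\binom{7}{3}=35$ candidate menus down to a handful of cases. Identifying the goods $\{1,\dots,7\}$ with $\Z/7\Z$, observe that the profile in \cref{tab:better-lower-bound} is invariant under the rotation $\sigma\colon j \mapsto j+1 \pmod 7$: applying $\sigma$ to every good in every preference permutes the seven rows within each of the four blocks while leaving all multiplicities ($5x, 3x, x, x$) unchanged. Consequently $|\sigma(j) \succ \sigma(O)| = |j \succ O|$ for every good $j$ and menu $O$, so being (un)contestable is a property of the $\langle\sigma\rangle$-orbit of $O$. Since $7$ is prime, $\sigma$ acts as a single $7$-cycle, so no proper nonempty subset is fixed by any nontrivial power of $\sigma$; hence every $3$-element subset has an orbit of size exactly $7$, and the $35$ candidate menus fall into precisely $5$ orbits.

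I would then pick one representative per orbit, sorted by cyclic gap pattern: $\{1,2,3\}$, $\{1,2,4\}$, $\{1,2,6\}$, $\{1,2,5\}$, and $\{1,3,5\}$, realizing the gap cycles $(1,1,5)$, $(1,2,4)$, $(1,4,2)$, $(1,3,3)$, and $(2,2,3)$ respectively (these are pairwise inequivalent under rotation, and reflection is \emph{not} a symmetry of the profile, so all five are genuinely distinct orbits). For each representative $O$ it suffices to name a single good $j \in G \setminus O$ and verify $|j \succ O| \ge u' = 23x$, since this exhibits a lobby of size $\ge u'$ and hence witnesses that $O$ is not $u'$-uncontestable. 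The natural choice of $j$ is the good sitting across the largest gap of $O$ (roughly opposite the cluster of $O$); for instance $j=6$ works for $O=\{1,2,3\}$, and $j=7$ works for $O=\{1,3,5\}$.

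The remaining work is to evaluate each $|j \succ O|$ by summing, over the $28$ preference types, the contribution of each type that ranks $j$ above all elements of $O$ it ranks (recalling that a type ranking no good of $O$ contributes automatically, since unranked goods are worse than the outside option). I expect this bookkeeping to be the only real obstacle: five tallies, each with up to $28$ weighted terms. It is entirely mechanical, however, and can be cross-checked against the SMT certificate. Crucially, the constant $23$ is exactly what makes the construction tight — in the representative cases the totals come out to precisely $u'$: for example $|6 \succ \{1,2,3\}| = 15x + 6x + x + x = 23x$ and $|7 \succ \{1,3,5\}| = 10x + 6x + 3x + 4x = 23x$. Thus raising $u'$ above $23x$ would already render some size-$3$ menu $u'$-uncontestable, which is precisely why $u' = 23x$ is the value chosen.
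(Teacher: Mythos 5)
Your proposal is correct and follows essentially the same route as the paper: exploit the cyclic symmetry to reduce to the five orbit representatives $\{1,2,3\},\{1,2,4\},\{1,2,5\},\{1,2,6\},\{1,3,5\}$, then exhibit for each a single unoffered good whose lobby has size exactly $23x = u'$. The two tallies you compute explicitly, $|6 \succ \{1,2,3\}| = 23x$ and $|7 \succ \{1,3,5\}| = 23x$, agree with the paper's, and the three you defer as bookkeeping do come out to $23x$ as well (the paper's witnesses are $j=7$, $j=7$, and $j=5$ for $\{1,2,4\}$, $\{1,2,5\}$, and $\{1,2,6\}$, respectively), so your plan completes without obstruction.
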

\begin{lemma}\label{lemma:o4}
Every $O \subseteq \{ 1, 2, 3, 4, 5, 6, 7 \}$ with $|O|=4$ is not $t'$-feasible. 
\end{lemma}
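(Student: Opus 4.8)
The plan is to exploit the evident cyclic symmetry of \cref{tab:better-lower-bound}. Let $\sigma$ denote the rotation $j \mapsto j+1 \pmod 7$ on the goods $\{1,\dots,7\}$. Each of the four blocks of \cref{tab:better-lower-bound} is a full $\sigma$-orbit of preference types carrying a single common multiplicity ($5x$, $3x$, $x$, and $x$ respectively), so applying $\sigma$ to every agent's preference list permutes the multiset of agents onto itself. Consequently $|a_{\sigma(O)}^{-1}(\sigma(j))| = |a_O^{-1}(j)|$ and $|\sigma(j)\succ\sigma(O)| = |j \succ O|$ for every menu $O$ and good $j$; in particular both ``$O$ is $t'$-feasible'' and ``$O$ is $u'$-contestable'' depend only on the $\sigma$-orbit of $O$. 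Since $7$ is prime, every orbit of a proper nonempty subset has size exactly $7$, so there are $\binom{7}{3}/7 = 5$ orbits of $3$-subsets and $\binom{7}{4}/7 = 5$ orbits of $4$-subsets, each indexed by its cyclic pattern of gaps. It therefore suffices to verify each lemma on one representative per orbit.

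For \cref{lemma:o4} I would take the five representatives $\{1,2,3,4\}$, $\{1,2,3,5\}$, $\{1,2,3,6\}$, $\{1,2,4,5\}$, and $\{1,3,5,7\}$ (gap patterns $(1,1,1,4)$, $(1,1,2,3)$, $(1,1,3,2)$, $(1,2,1,3)$, and $(1,2,2,2)$). For each, I would compute the assignment $a_O$ by reading off, for each of the $28$ preference types, its top-ranked good lying in $O$, and then tally the demand $|a_O^{-1}(j)|$ over $j \in O$, with the goal of exhibiting a single good whose demand is at most $11x < 11x+1 = t'$. For instance, for the consecutive block $\{1,2,3,4\}$ the ``leading'' good $1$ absorbs all the wrap-around agents (those whose genuinely favorite goods lie outside $O$) and collects $27x$, which starves the ``trailing'' good $4$ down to exactly $11x$, certifying infeasibility. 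Note that a crude pigeonhole on the total population does not suffice here, since $4\,t' = 44x+4 \le 70x$ for all $x \ge 1$; it is precisely the uneven, structural distribution of demand that drives the result.

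For \cref{lemma:o3} I would take the representatives $\{1,2,3\}$, $\{1,2,4\}$, $\{1,2,6\}$, $\{1,2,5\}$, and $\{1,3,5\}$ (gap patterns $(1,1,5)$, $(1,2,4)$, $(1,4,2)$, $(1,3,3)$, and $(2,2,3)$), and for each I would produce a single outside good $j \notin O$ with $|j \succ O| \ge 23x = u'$, certifying $u'$-contestability. Concretely, one tallies the preference types for which $j$ is ranked and sits above every good of $O$ that the type ranks. For example, with $O = \{1,2,3\}$ the good $j=7$ is preferred to all of $O$ by the $15x$ agents contributed by the first block together with $9x+2x+x$ from the remaining three blocks, totaling $27x \ge 23x$.

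Finally, these two lemmas close out all menu sizes via \cref{lemma:monotonicity}: uncontestability is upward-closed in the menu, so \cref{lemma:o3} forces every $O \subseteq \{1,\dots,7\}$ with $|O|\le 3$ to be $u'$-contestable (the witness good for a containing $3$-subset still beats the smaller menu); feasibility is downward-closed, so \cref{lemma:o4} forces every $O$ with $|O|\ge 4$ to be $t'$-infeasible (a contained $4$-subset already fails); together with the opening observation that any $t'$-feasible menu lies inside $\{1,\dots,7\}$, no menu is $(t',u')$-stable. The main obstacle is not conceptual but bookkeeping: the witnesses are tight---the trailing good achieves \emph{exactly} $11x$ and the best contestant only narrowly clears $23x$---so each of the $28$ per-type membership checks must be exact, and the correct witness good must be located anew in each orbit. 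The symmetry reduction from $35$ to $5$ cases per lemma is what renders this hand-checkable; one could alternatively organize the tallies around the common offset structure shared by the four blocks, but the direct computation per representative is the most transparent.
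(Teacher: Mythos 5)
Your proposal is correct and takes essentially the same approach as the paper: the paper likewise reduces, by cyclic symmetry, to the $\binom{7}{4}/7 = 5$ orbit representatives and exhibits for each a single good whose demand is at most $11x < t'$ (your representative $\{1,3,5,7\}$ is the paper's $\{1,2,4,6\}$ up to rotation). Your one fully worked case---good $4$ receiving exactly $5x+3x+x+x+x = 11x$ under $O=\{1,2,3,4\}$---matches the paper's count exactly.
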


We verify these claims by casework. Note by symmetry, each of the cases $|O|=3$ and $|O|=4$ only require checking $\binom73 / 7 = \binom74 /7  = 5$ menus. 

\begin{proof}[Proof of \cref{lemma:o3}]
By symmetry, it suffices to show each of the $\binom73 / 7 = 5$ values of $O$ are not $u'$-uncontestable: 
\begin{itemize}
    \item $O = \{ 1, 2, 3 \}$. Then $|6 \succ O| = 5x + 5x + 5x + 3x + 3x + x + x = 23x \geq u'$.
    \item $O = \{ 1, 2, 4 \}$. Then $|7 \succ O| = 5x + 5x + 5x + 3x + 3x + x + x = 23x \geq u'$.
    \item $O = \{ 1, 2, 5 \}$. Then $|7 \succ O| = 5x + 5x + 3x + 3x + 3x + x + x + x + x = 23x \geq u'$.
    \item $O = \{ 1, 2, 6 \}$. Then $|5 \succ O| = 5x + 5x + 5x + 3x + 3x + x + x + x + x = 23x \geq u'$.
    \item $O = \{ 1, 3, 5 \}$. Then $|7 \succ O| = 5x + 5x + 3x + 3x + x + x + x + x + x + x + x = 23x \geq u'$.\qedhere
\end{itemize}
\end{proof}

\begin{proof}[Proof of \cref{lemma:o4}]
By symmetry, it suffices to show each of the $\binom74 / 7 = 5$ values of $O$ are not $t'$-feasible: 
\begin{itemize}
    \item $O = \{ 1, 2, 3, 4 \}$. Then $|4 \succ O \setminus \{ 4 \}| = 5x + 3x + x + x + x = 11x < t'$.
    \item $O = \{ 1, 2, 3, 5 \}$. Then $|3 \succ O \setminus \{ 3 \}| = 5x + 3x + x + x + x = 11x < t'$.
    \item $O = \{ 1, 2, 3, 6 \}$. Then $|2 \succ O \setminus \{ 2 \}| = 5x + 3x + x + x + x = 11x < t'$.
    \item $O = \{ 1, 2, 4, 5 \}$. Then $|5 \succ O \setminus \{ 5 \}| = 5x + 3x + x + x + x = 11x < t'$.
    \item $O = \{ 1, 2, 4, 6 \}$. Then $|2 \succ O \setminus \{ 2 \}| = 5x + 3x + x + x = 10x < t'$.\qedhere 
\end{itemize}
\end{proof}

\begin{remark}
Using an SMT solver (see \cref{sec:computational-c56-proof}), we attempted to search for stronger lower bounds by checking whether $11u \ge 23(t\!-\!1)+1$ is satisfiable when $g \ge 7$. One month of computation on the equivalent of a modern laptop were not sufficient to determine satisfiability.
\end{remark}

\subsection{Improved Upper Bound}\label{sec:improved-upper-bound}

In this section we slightly improve upon the upper bound in \cref{upper-bound-g-minus-1}. Unlike the prior results described in this paper, we prove the following result only for the case of menu selection problems on complete preferences (that is, each agent has a complete preference list over all goods in $G$).\footnote{The proof of \cref{upper-bound-g-minus-2-formal} requires complete preferences because it relies on \cref{lemma}, which does not hold for the case of incomplete preferences, and we do not know how to remove the dependency on \cref{lemma} from the proof. In \cref{app:structural-upper}, we (using conceptually distinct techniques) prove an upper bound that holds for general incomplete preferences and is an improvement over \cref{upper-bound-g-minus-1}, but is a weaker result than \cref{upper-bound-g-minus-2-formal} for the case of complete preferences.}

\begin{theorem}\label{upper-bound-g-minus-2-formal}
For every $g\ge 7$ and $t,u \in \N$ such that $\frac{u-1}{t-1} \geq g\!-\!2$, every menu selection problem on complete preferences with $g$ goods has a $(t,u)$-stable menu.
\end{theorem}

At the heart of \cref{upper-bound-g-minus-2-formal} is the following lemma. 

\begin{lemma}\label{lemma:no-12-gaps}
Fix any $g \geq 2$ and $t,u \in \N$ such that $u \geq 2t\!-\!1$. Then for every menu selection problem on complete preferences, at least one of the following is true: 
\begin{enumerate}[(1)]
\item There exists some $O \subseteq G$ with $|O|=1$ such that $O$ is $u$-uncontestable. 
\item There exists some $O \subseteq G$ with $|O|=2$ such that $O$ is $t$-feasible. 
\end{enumerate}
\end{lemma}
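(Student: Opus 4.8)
The plan is to prove the statement by assuming that \emph{both} conclusions fail and deriving a contradiction: suppose no singleton $\{j\}$ is $u$-uncontestable and no pair $\{j,k\}$ is $t$-feasible. The first assumption means that for every good $j$ there is a good $\sigma(j)$ with $|\sigma(j) \succ j| \ge u$; since $|j\succ j|=0<u$, the map $\sigma$ has no fixed point, so its functional graph contains a directed cycle $j_1 \to j_2 \to \cdots \to j_\ell \to j_1$ (indices mod $\ell$, with $\ell \ge 2$) along which $|j_{i+1} \succ j_i| \ge u$ for every $i$.

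The workhorse is the elementary three-good inequality $|r \succ p| \ge |r \succ q| - |p \succ q|$ (the same bound used in \cref{lemma:12snake-len4}; it holds because any agent counted in $|r \succ q|$ but not in $|r \succ p|$ must prefer $p$ to $r$ and hence, routing through $q$, be counted in $|p \succ q|$). Using it, I would climb around the cycle by induction on the ``gap'' $d$, establishing the claim $C_d$: for all $i$, $|j_{i+d} \succ j_i| \ge t$, for every $1 \le d \le \ell-1$. The base case $C_1$ is immediate since $|j_{i+1} \succ j_i| \ge u \ge t$. The crucial observation is that each forward inequality $|j_{i+d}\succ j_i|\ge t$, together with the assumption that $\{j_i, j_{i+d}\}$ is not $t$-feasible, forces the reverse inequality $|j_i \succ j_{i+d}| \le t-1$. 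Feeding this into the three-good inequality with $r = j_{i+d+1}$, $q = j_{i+d}$, $p = j_i$ gives $|j_{i+d+1}\succ j_i| \ge |j_{i+d+1} \succ j_{i+d}| - |j_i \succ j_{i+d}| \ge u - (t-1) \ge t$, which is exactly $C_{d+1}$. This is precisely where the hypothesis $u \ge 2t-1$ enters: it guarantees $u - (t-1) \ge t$, so the inductive step loses no ground as it propagates around the cycle.

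Carrying the induction up to $d = \ell-1$ yields $|j_{i-1} \succ j_i| = |j_{i+\ell-1}\succ j_i| \ge t$ for all $i$. On the other hand, $C_1$ together with the non-feasibility of $\{j_{i-1}, j_i\}$ gives $|j_{i-1} \succ j_i| \le t-1$, a contradiction; hence one of the two conclusions must hold. The main obstacle to anticipate is that the cycle length $\ell$ is uncontrolled (it can be as large as $g$), so a one-shot comparison of two or three specific goods—as in the $g\le 4$ analysis of \cref{lemma:12snake-len2,lemma:12snake-len3,lemma:12snake-len4}—will not suffice; the telescoping induction above is exactly what lets the single unit of slack $u - (t-1) \ge t$ survive all the way around a cycle of arbitrary length. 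I would also verify the degenerate cases carefully: that $\ell=2$ already yields an immediate contradiction, and that the three goods appearing in each inductive step are genuinely distinct, which holds precisely because $d+1 \le \ell-1$.
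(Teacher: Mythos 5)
Your proof is correct, and its engine is the same as the paper's: extract a cycle of goods along which each good is preferred to the next by at least $u$ agents, then propagate the slack $u-(t\!-\!1)\ge t$ around the cycle via the three-good inequality (your bound $|r \succ p| \ge |r\succ q| - |p \succ q|$ is exactly the paper's \cref{lemma}, and both survive incomplete preferences because unranked goods sit at the bottom of an agent's ordering). Where you differ is in how the cycle is produced and how the propagation is packaged. The paper proves \cref{lemma:no-12-gaps} by induction on $g$: it restricts the constraint system to $G\setminus\{k\}$, invokes the inductive hypothesis to find a good $i_k$ whose ``beaten by $\ge u$'' constraint must fail in the subproblem, and concludes $x_{k i_k}\ge u$; the fixed-point-free map $k\mapsto i_k$ then yields the cycle after relabeling. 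You instead read the cycle off directly from the negation of (1): the map $\sigma$ has no fixed point, so its functional graph contains a cycle. This is a genuine streamlining---the paper's induction on $g$ does no work beyond producing such a cycle, so your pigeonhole argument replaces it at no cost and makes the proof self-contained rather than inductive. Likewise, your induction on the gap $d$ with invariant $C_d$ is a cleaner organization of the paper's ``table of deductions,'' which propagates the same inequalities row by row until rows $1$ and $3\ell-4$ conflict; your endgame (playing $C_{\ell-1}$ against $C_1$ and the non-feasibility of $\{j_{i-1},j_i\}$) is the same contradiction, and you correctly attend to the two points where such arguments can silently break: the case $\ell=2$ and the distinctness of the three goods used in each inductive step.
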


\begin{remark}
In both of our lower bound constructions (\cref{lower-bound-2}, \cref{lower-bound-2311}), the constructed menu selection problem not only satisfies the property that for all $O \subseteq G$, either $O$ is not $t$-feasible or $O$ is not $u$-uncontestable. In fact, a stronger property holds: there exists some $k$, such that all $O \subseteq G$ with $|O|=k$ are not $u$-uncontestable, and all $O \subseteq G$ with $|O|=k+1$ are not $t$-feasible. (In the case of \cref{lower-bound-2}, we have $k=1$, and in the case of \cref{lower-bound-2311}, we have $k=3$.) If there does not exist a menu selection problem satisfying this property, we say that there are \textit{no $(k,k+1)$-gaps} for parameters $g,t,u$. 

In particular, \cref{lemma:no-12-gaps} says that when $u \geq 2t\!-\!1$, there are no (1,2)-gaps. Moreover, \cref{lower-bound-2} demonstrates the existence of (1,2)-gaps when $u \leq 2t-2$. Thus, by \cref{thm:polyhedra-reduction}, when $g \leq 6$, fixing $t$ and $u$, we have that there are no (1,2)-gaps for $g,t,u$ if and only if all menu selection problems have a $(t,u)$-stable solution. 

For $g \geq 7$, it is an open question whether such ``no gaps'' conditions are sufficient to imply the existence of $(t,u)$-stable solutions (it follows from monotonicity that they are necessary). 
\end{remark}

We first prove \cref{upper-bound-g-minus-2-formal} using \cref{lemma:no-12-gaps} as well as two simple lemmas that we prove in \cref{app:lemmas}, and later prove \cref{lemma:no-12-gaps}.

\begin{proof}[Proof of \cref{upper-bound-g-minus-2-formal}]
By \cref{lemma:no-12-gaps}, it suffices to consider two cases. 

In the first case, there exists some $u$-uncontestable $O \subseteq G'$ with $|O|=1$. Moreover, $O$ is $t$-feasible, because by \cref{lemma:suff-rarely-ranked}, we can assume that each public good is ranked by at least $t$ agents. Thus $O$ is $(t,u)$-stable. 

In the second case, there exists some $t$-feasible $O \subseteq G'$ with $|O|=2$. By \cref{lemma:suff-popular}, we can assume $|j| < t$ for each $j \in G$, so the maximum possible lobby size is 
\[ \max_{j \in G' \setminus O} |j \succ O| \leq (|G'| - |O|)(t\!-\!1) \leq (g\!-\!2)(t\!-\!1) < u.\]
Thus $O$ is $(t,u)$-stable. 
\end{proof}

In the case of $g \ge 7$, we thus obtain an improved upper bound $\frac{u-1}{t-1} \ge g-2$ (compared to \cref{upper-bound-g-minus-1}, which yields $\frac{u-1}{t-1} \ge g-1$). For $g \in \{2,3,4,5,6\}$, this argument still goes through, but yields worse bounds than the analytic proof via a greedy algorithm for $g \in \{2,3,4 \}$ (\cref{thm:positive-four}) and the SMT-solver based proof for $g \in \{5,6\}$ (\cref{thm:positive-six}).

\subsubsection{Proof of \cref{lemma:no-12-gaps}}

\begin{proof}
Case $g=2$: suppose neither (1) nor (2) holds. Then $|1 \succ 2| \geq u$, $|2 \succ 1| \geq u$, and $|1 \succ 2| < t\textsf{ OR }|2 \succ 1| < t$. Since $u \geq t$ these cannot all hold.

Now we consider some $g \geq 3$ (and assume the claim holds for smaller $g$). Again assume neither (1) nor (2) hold. We aim to derive a contradiction. Write $x_{ij} := |i \succ j|$ for short. Then all of the following statements hold: 
\begin{itemize}
\item[($*$)] \textit{All $O$ with $|O|=1$ not $u$-uncontestable: }For all $i \in G$: $\bigvee_{j \neq i} (x_{ji} \geq u)$.
\item[($**$)] \textit{All $O$ with $|O|=2$ not $t$-feasible: }For all $\{i,j\} \subseteq G$: $(x_{ij} < t) \vee (x_{ji} < t)$. 
\end{itemize}
Let $B_k := G \setminus \{k\}$. Consider the following smaller instance of the problem: 
\begin{itemize}
\item For all $i \in B$: $\bigvee_{j \neq i} (x_{ji} \geq u)$.
\item For all $\{i,j\} \subseteq B$: $(x_{ij} < t) \vee (x_{ji} < t)$. 
\end{itemize}
By $(**)$, all of the $<t$ constraints here are true. By induction (this is a smaller subproblem), one of the $\geq u$ constraints must be false. Specifically, say the constraint $\bigvee_{j \neq i_k} (x_{ji_k} \geq u)$ is the false one, for some $i_k \in G$. But also, we know all of the $(*)$ constraints are true. Thus $x_{k i_k} \geq u$. Do this for all $k$. 

Also note: since $x_{k i_k} \geq u$ for all $k \in G$, by $(**)$, we have $x_{i_k k} < t$ for all $k$. 

Now, without loss of generality, we can relabel so that $i_1 = 2, i_2 = 3, \dots, i_\ell = 1$ for some $\ell \leq g$. Thus we end up with the following constraints:
\begin{align*}
&x_{12} \geq u,\ x_{23} \geq u,\ \dots,\ x_{\ell-1,\ell} \geq u,\ x_{\ell 1} \geq u \\
&x_{21} < t,\ x_{32} < t,\ \dots,\ x_{\ell, \ell-1} < t,\ x_{1 \ell} < t. 
\end{align*}

Now we will use the following lemma many times: 
\begin{lemma}\label{lemma}
For all distinct $i,j,k$, when $u \geq 2t\!-\!1$, if $x_{ij} \geq u$ and $x_{kj} < t$, then $x_{ik} \geq t$.
\end{lemma}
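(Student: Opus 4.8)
The plan is to prove the inequality by a direct counting argument that partitions the agents contributing to $x_{ij} = |i \succ j|$ according to their ranking of $k$ relative to $i$. Concretely, I would consider the set of agents assigned to $i$ when only $\{i,j\}$ is offered---there are $x_{ij} \geq u$ of them---and split this set into two groups: (a) those for whom $i$ is still the favorite between $\{i,k\}$, and (b) those who instead prefer $k$ over $i$. Group (a) is by definition a subset of the agents counted by $x_{ik}$, so it suffices to lower-bound its size, and to do so I would upper-bound the size of group (b).

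The key step is a transitivity-style chaining observation: every agent in group (b) must also contribute to $x_{kj} = |k \succ j|$. Indeed, such an agent prefers $k$ over $i$ and, being assigned to $i$ over $j$, weakly prefers $i$ over $j$, so $k$ is preferred over $j$ as well. I expect this to be the main obstacle, since preferences may be incomplete: I would need to verify the chaining in the generalized sense captured by the assignment maps $a_{\{\cdot,\cdot\}}$, checking separately the case in which $j$ is ranked (so $k \succ i \succ j$ yields $k \succ j$) and the case in which $j$ is unranked (so the agent ranks $k$ but not $j$, and is therefore assigned to $k$ over $j$). In both cases the agent is counted by $x_{kj}$, so group (b) has size at most $x_{kj} \leq t-1$.

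Finally I would combine the bounds. Since groups (a) and (b) partition the $x_{ij} \geq u$ agents, group (a) has size at least $u - (t-1)$, and using $u \geq 2t\!-\!1$ this is at least $t$. As group (a) is contained in the agents counted by $x_{ik}$, we conclude $x_{ik} \geq t$, as desired.
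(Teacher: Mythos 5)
Your proof is correct and takes essentially the same approach as the paper: both start from the $\geq u$ agents counted by $x_{ij}$, use $x_{kj} < t$ to discard at most $t-1$ of them, and conclude that the remaining $\geq u-(t-1) \geq t$ agents prefer $i$ over $k$. The only difference is cosmetic---you phrase the transitivity step in contrapositive form (an agent with $k \succ i$ and $i \succ j$ must have $k \succ j$, hence lies in the small set counted by $x_{kj}$), and you spell out the incomplete-preference cases that the paper's one-line argument glosses over.
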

\begin{proof}
$|i \succ j| \geq u$ and $|k \succ j| < t$ means that there are $u$ agents who prefer $i$ over $j$, and only $<t$ of them prefer $k$ over $j$. This means that the rest prefer $j$ over $k$ (this is the step that invokes the assumption of complete preferences). Thus there exist at least $u - (t-1) \ge t$ agents with who prefer both $i$ over $j$ and $j$ over $k$. In particular, there exist $\ge t$ agents who prefer $i$ over $k$.
\end{proof}

Finally we write down a table of deductions, until we reach a contradiction. (The table's shorthand can be understood as follows: a row `$y_1 y_2 \dots y_\ell \mid \ge u$' means $y_i \ge u$ for all $i \in [\ell]$, and a row `$y_1 y_2 \dots y_\ell \mid <t$' means $y_i < t$ for all $i \in [\ell]$.)  
\begin{center}
\begin{tabular}{c|ccccc|c|c}
Row \# & & & & & & & Reason \\
\hline 
1 & $x_{12}$ & $x_{23}$ & $\dots$ & $x_{\ell-1,\ell}$ & $x_{\ell 1}$ & $\geq u$ & Given \\
\rowcolor{lightgray} 2 & $x_{21}$ & $x_{32}$ & $\dots$ & $x_{\ell, \ell-1}$ & $x_{1 \ell}$ & $< t$ & $(**)$ on \#1 \\
3 & $x_{32}$ & $x_{43}$ & $\dots$ & $x_{1, \ell}$ & $x_{2 1}$ & $< t$ & rearrange \\
4 & $x_{13}$ & $x_{24}$ & $\dots$ & $x_{\ell-1,1}$ & $x_{\ell 2}$ & $\geq t$ & Lemma \ref{lemma} on \#1, \#3 \\
\rowcolor{lightgray} 5 & $x_{31}$ & $x_{42}$ & $\dots$ & $x_{1, \ell-1}$ & $x_{ 2 \ell}$ & $< t$ & $(**)$ on \#4 \\
6 & $x_{42}$ & $x_{53}$ & $\dots$ & $x_{2 \ell}$ & $x_{ 3 1}$ & $< t$ & rearrange \\
7 & $x_{14}$ & $x_{25}$ & $\dots$ & $x_{\ell-1,2}$ & $x_{\ell 3}$ & $\geq t$ & Lemma \ref{lemma} on \#1,\#6 \\ 
\rowcolor{lightgray} 8 & $x_{41}$ & $x_{52}$ & $\dots$ & $x_{2, \ell-1}$ & $x_{3 \ell}$ & $< t$ & $(**)$ on \# 7 \\ 
$\vdots$ & $\vdots$ & $\vdots$ & $\vdots$ & $\vdots$ & $\vdots$ & $\vdots$ & $\vdots$ \\
\rowcolor{lightgray} $3\ell-4$ & $x_{\ell 1}$ & $x_{12}$ & \dots & $x_{\ell-2,\ell-1}$ & $x_{\ell-1,\ell}$ & $<t$ & as above 
\end{tabular}
\end{center}
However, rows $1$ and $3\ell-4$ conflict. Thus, the constraints $x_{k i_k} \geq u$ are not simultaneously satisfiable, which in turn implies that the original $(*)$ and $(**)$ are not simultaneously satisfiable.
\end{proof}

\section{Strategic Considerations}\label{sec:strategic}

So far this paper has focused on the question of \textit{guaranteed existence} of solutions to the menu selection problem: for which $g,t,u$ does there exist a $(t,u)$-stable solution to every menu selection problem on $g$ public goods? In this section, we focus on parameters $g,t,u$ such that this existence question is settled in the affirmative,\footnote{When $g = 2$, we have guaranteed existence if and only if $u \ge t$. When $g \in \{3,4,5,6\}$, we have guaranteed existence if and only if $u \ge 2t-1$.} and instead consider the question of \textit{strategyproofness}. Specifically, we ask: for which $g$ does there exist a strategyproof stable menu selection mechanism?\footnote{For readers familiar with the work of \citet{sonmez_strategy-proofness_1999} on strategyproof selection from the core of generalized indivisible goods allocation problem, we note that the core correspondence of our menu selection problems, unlike in the problems that paper studies, is not \emph{essentially single valued}. Specifically, fix $u,t \in \mathbb{N}$ with $u \ge 2t-1$. Consider a menu selection problem on $g \in \{2,3,4,5,6\}$ goods in which $t$ agents have preferences $1 \succ 2 \succ *$ and $u$ agents have preferences $2 \succ *$ (where $*$ denotes arbitrary preferences on the remaining goods). Then $\{2\}$ and $\{1,2 \}$ are both $(t,u)$-stable, but $\{1,2\}$ Pareto dominates~$\{2\}$. Thus, the core correspondence of our menu selection problems is not essentially single-valued.} We start by defining what we mean by stable menu selection mechanism and by strategyproofness of such a mechanism.

\begin{definition}
Fix $g \ge 2$ and let $\mc{P}$ be a set of menu selection problems on $g$ public goods. A set of stability parameters $\mc{S} \subseteq \mathbb{N}^2$ \textit{guarantees existence} for $\mc{P}$ if for all $(t, u) \in \mc{S}$, every menu selection problem $P \in \mc{P}$ has a $(t,u)$-stable solution. 
\end{definition}

For example, when $g=2$, the set $\mc{S} = \{(t,u):\ u \ge t \}$ guarantees existence for all menu selection problems, and when $g \in \{3,4,5,6\}$, the set $\mc{S} = \{(t,u):\ u \ge 2t-1 \}$ guarantees existence for all menu selection problems. 

\begin{definition}[Stable Menu Selection Mechanism]
Fix $g \ge 2$ and let $\mc{P}$ be a set of menu selection problems on $g$ public goods. Let $\mc{S} \subseteq \N^2$ be a set of stability parameters that guarantees existence for $\mc{P}$. A mechanism $\mc{M} : \mc{P}\times\mc{S} \to 2^G$ is said to be \textit{stable} if for every $P \in \mc{P}$ and $(t,u)\in\mc{S}$, the menu $\mc{M}\bigl(P,(t,u)\bigr)$ is $(t,u)$-stable. 
\end{definition}

\begin{definition}[Strategyproof Menu Selection Mechanism]
Fix $g \ge 2$ and let $\mc{P}$ be a set of menu selection problems on $g$ public goods. Let $\mc{S} \subseteq \N^2$ be a set of stability parameters that guarantees existence for $\mc{P}$. A mechanism $\mc{M} : \mc{P}\times\mc{S} \to 2^G$ is said to be \textit{strategyproof} if for all agents $i \in N$, for all menu selection problems $\problem = P \in \mc{P}$ and $\bigl(N, G, (\succ_i')_{i \in N}\bigr) = P' \in \mc{P}$ with $\succ_j = \succ_j'$ when $j \neq i$, we have 
\begin{equation*} a_{\mc{M}(P, (t,u))}(i)  \succcurlyeq_i  a_{\mc{M}(P', (t,u))}(i). \end{equation*}
That is, no agent can strictly improve their outcome by misreporting. 
\end{definition}

\begin{definition}[Anonymous Menu Selection Mechanism]
Fix $g \ge 2$ and let $\mc{P}$ be a set of menu selection problems on $g$ public goods. Let $\mc{S} \subseteq \N^2$ be a set of stability parameters that guarantees existence for $\mc{P}$. A mechanism $\mc{M} : \mc{P} \times \mc{S} \to 2^G$ is said to be \textit{anonymous} if for every combination of menu selection problem $\problem \in \mc{P}$, stability parameters $(t,u) \in \mc{S}$, and permutation $\sigma : N \to N$, we have 
\begin{equation*} M\Bigl( \problem, (t,u)\Bigr) = M\Bigl( \bigl(N, G, (\succ_{\sigma(i)})_{i \in N} \bigr) , (t,u)\Bigr). \end{equation*}
That is, each agent is treated equally by the mechanism, in the sense that the output of the mechanism does not depend on the identity of the agent who submitted each preference list.
\end{definition}

In this section, we ask for which numbers of goods $g$ it is the case that strategyproof anonymous stable mechanisms exist. In the case $g=2$, we exhibit a simple anonymous mechanism on complete preferences and prove it is stable and strategyproof (\cref{sec:c2-strategyproof}). In the case $g \in \{3,4,5,6\}$, we prove fundamental barriers for strategyproofness that contrast with the case of $g=2$. 

\subsection{Strategyproof Mechanism for $g=2$}\label{sec:c2-strategyproof}

Below, we exhibit a simple, natural strategyproof and anonymous mechanism that produces stable solutions to the menu selection problem for the case of complete preferences and $g = 2$. 

\begin{proposition}[Strategyproofness for $g=2$ on Complete Preferences]\label{sp-2}
Fix $g=2$. Let $\mc{S}=\bigl\{(t,u)\in\N^2~\big|~u \geq t\bigr\}$. Let $\mc{P}$ be the set of all menu selection problems on complete preferences with two public goods. Define a mechanism $\mc{M}: \mc{P}\times\mc{S} \to 2^G$ such that for every $(t,u)\in\mc{S}$ and $P\in\mc{P}$:
\begin{itemize}
    \item If $P$ has less than $t$ agents, then $\mc{M}\bigl(P,(t,u)\bigr):=\emptyset$. \textit{(``offer $\emptyset$ if no other option'')}
    \item Otherwise, if $\{1,2 \}$ is $(t,u)$-stable for $P$, then $\mc{M}\bigl(P,(t,u)\bigr):=\{ 1,2 \}$. \textit{(``offer $\{1,2\}$ if possible'')}
    \item Otherwise, $\mc{M}\bigl(P,(t,u)\bigr):=\{1\}$ if $|1 \succ 2| \geq |2 \succ 1|$ and $\mc{M}\bigl(P,(t,u)\bigr):=\{2\}$ otherwise. \textit{(``majority vote'')}
\end{itemize}
Then $\mc{M}$ is a strategyproof anonymous stable mechanism. 
\end{proposition}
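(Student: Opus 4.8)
The plan is to establish the three required properties---stability, anonymity, and strategyproofness---separately, with strategyproofness being by far the most delicate. Throughout, I will use the notation $|1 \succ 2|$ and $|2 \succ 1|$ from the paper, and note that since preferences are complete and $g=2$, we have $|1 \succ 2| + |2 \succ 1| = n$, the number of agents.

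\paragraph{Stability and anonymity.} First I would verify that $\mc{M}$ always outputs a $(t,u)$-stable menu. In the first case ($n < t$), the menu $\emptyset$ is trivially $t$-feasible, and it is $u$-uncontestable because no public good can attract a lobby of size $u \geq t > n$; hence $\emptyset$ is $(t,u)$-stable. In the second case, stability holds by the explicit check that $\{1,2\}$ is $(t,u)$-stable. For the third (``majority vote'') case, I would argue that the singleton output is stable: feasibility of, say, $\{1\}$ requires $|1| = |1 \succ 2| \geq t$, and uncontestability requires $|2 \succ 1| < u$. The key observation is that when $n \geq t$ but $\{1,2\}$ is \emph{not} $(t,u)$-stable, the failure must be a feasibility failure (some good is desired by fewer than $t$ agents), since with $g=2$ any two-element menu is automatically $u$-uncontestable (there is no third good to contest it). Thus the minority good has fewer than $t$ supporters, so the majority good has more than $n - t \geq 0$; I would show $n \geq t$ together with the minority count being $< t$ forces the majority count to be $\geq t$ (feasibility) and the minority count to be $< t \leq u$ (uncontestability). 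Anonymity is immediate, since every branch of $\mc{M}$ depends only on $n$ and on the tallies $|1 \succ 2|, |2 \succ 1|$, which are invariant under permuting agents; I would state this in one or two sentences.

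\paragraph{Strategyproofness.} This is the main obstacle. An agent has one of two types ($1 \succ 2$ or $2 \succ 1$) and can only misreport to the opposite type. I would fix an agent $i$, compare the outcome under truthful reporting with the outcome under the single possible deviation, and show the agent never strictly prefers the deviation. The cleanest approach is casework on the agent's true type (say $1 \succ_i 2$, the other case being symmetric by relabeling) and on which menu $\mc{M}$ outputs. The agent's utility ranking over possible outputs is: getting good $1$ is best, getting good $2$ is worst, and $\{1,2\}$ or $\emptyset$ lie in between in a way I must pin down---under $1 \succ_i 2$, the menu $\{1,2\}$ assigns the agent good $1$ (their favorite), so $\{1,2\}$ is as good as $\{1\}$ for this agent, while $\emptyset$ (outside option) is strictly worse than good $1$ but its comparison to good $2$ depends on the outside-option convention. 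I would carefully track, for each possible honest output, whether the lone deviation can move the output to something the agent strictly prefers. The two genuinely delicate transitions are (i) at the boundary $n = t$, where a single misreport cannot change $n$, so the agent cannot manipulate their way across the first case, and (ii) the transition between a singleton and $\{1,2\}$: here I must show that a $1 \succ 2$ agent switching to reporting $2 \succ 1$ cannot turn an output of $\{2\}$ into $\{1\}$ or $\{1,2\}$, which follows because such a switch only \emph{decreases} $|1 \succ 2|$ and \emph{increases} $|2 \succ 1|$, moving tallies in the direction that favors good $2$, not good $1$.

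\paragraph{Key monotonicity lemma.} The engine driving the strategyproofness argument is a monotonicity fact I would isolate: reporting $1 \succ 2$ rather than $2 \succ 1$ weakly increases $|1 \succ 2|$ (by exactly one) and weakly decreases $|2 \succ 1|$, and leaves $n$ unchanged. I would then show that across all three branches, the map from these tallies to the agent's realized good is monotone in the ``right'' direction: a type-$1$ agent can never, by deviating to type $2$, cause the mechanism to assign them good $1$ when honesty gave them good $2$ or $\emptyset$. The most care is needed where honesty yields $\{1,2\}$ and I must rule out that deviating yields $\{1\}$ while leaving the agent indifferent yet checking no strict gain arises, and symmetrically where honesty yields $\emptyset$ versus the deviation yielding a singleton containing the agent's favorite---here the $n < t$ test is deviation-proof precisely because $n$ is invariant. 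Once these transitions are checked case by case, strategyproofness follows, completing the proof that $\mc{M}$ is a strategyproof anonymous stable mechanism.
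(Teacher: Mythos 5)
Your strategyproofness argument is essentially the paper's own proof: the paper likewise reduces $\mc{M}$ to a function $M(x_1,x_2)$ of the two tallies, isolates the monotonicity property $M(\mathbf{x}+e_j)\setminus M(\mathbf{x})\subseteq\{j\}$ (your ``key monotonicity lemma''), notes that the $n<t$ branch is manipulation-proof because a misreport cannot change $n$, and concludes that a profitable deviation by a $j\succ k$ agent would have to add $j$ to the menu, which monotonicity forbids. That part of your proposal is correct and takes the same route.

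There is, however, a genuine gap in your stability argument for the majority-vote branch. You assert that feasibility of $\{1\}$ requires $|1\succ 2|\ge t$, and that you would derive this from ``$n\ge t$ together with the minority count being $<t$ forces the majority count to be $\ge t$.'' That implication is false: take $t=u=10$, $n=10$, $|1\succ2|=6$, $|2\succ1|=4$. Then $n\ge t$ and the minority count is $4<t$, yet the majority count is $6<t$, so the step you plan to prove cannot be proved. The underlying error is a misreading of $t$-feasibility: for the menu $O=\{1\}$ the relevant quantity is $\bigl|a_{\{1\}}^{-1}(1)\bigr|$, the number of agents \emph{assigned} to good $1$ when it is the only offered good, and under complete preferences this equals $n$ (every agent ranks good $1$, hence is assigned to it), not $\bigl|a_G^{-1}(1)\bigr|=|1\succ2|$. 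So in the example above $\{1\}$ is in fact $(t,u)$-stable, but for the trivial reason $n\ge t$, not for the reason you give; the condition $|1\succ2|\ge t$ is both unnecessary and, in general, false in this branch. The uncontestability half of your argument is fine: since $g=2$, a two-element menu has no contester, so failure of $\{1,2\}$ must be a feasibility failure, whence the minority tally is $<t\le u$. Once the feasibility step is repaired as above (singleton menus are feasible iff $n\ge t$), your stability verification, and with it the whole proof, goes through.
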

\begin{proof}
If $P$ has less than $t$ agents, then no manipulation can affect this condition, so $\emptyset$ will be offered no matter what. Thus we can assume $n \ge t$. 

Because $P$ is a menu selection problem on complete preferences, each agent either has the preference list $1 \succ 2$ or $2 \succ 1$. Set $x_1 := |1 \succ 2|$ and $x_2 := |2 \succ 1|$. Observe the output of $\mc{M}(\cdot, (t,u))$ only depends on $x_1$ and $x_2$ (the condition ``$\{1,2\}$ is $(t,u)$-stable'' is equivalent to ``$x_1 \ge t$ and $x_2 \ge t$''). We can thus define a function $M : \Z_{\ge 0}^2 \to 2^G$ given by $M(x_1, x_2) := \mc{M}(P, (t,u))$, where $P$ is any menu selection problem such that $x_1 := |1 \succ 2|$ and $x_2 := |2 \succ 1|$. We call such an $\mathbf{x}$ the \textit{summary statistics} of a given set of preferences. 

Next we observe that $M$ satisfies \textit{monotonicity} in the following sense: for all $\mathbf{x} \in \Z_{\geq 0}^2$ and $j \in \{1,2\}$, we have $M(\mathbf{x} + e_j)\setminus M(\mathbf{x}) \subseteq \{ j\}$, where $e_1 = (1,-1)$ and $e_2 = (-1,1)$.  

Now suppose there exists a profitable manipulation by an agent with preferences $j \succ k$. If truthful reporting corresponds to summary statistics $\mathbf{x}$, then this agent's manipulation would result in summary statistics $\mathbf{x} + e_k$. Because $n \ge t$, such a manipulation must cause this agent's top choice $j$ to be offered, when it was previously not offered. But by monotonicity, if $j \notin M(\mathbf{x})$, then $j \notin M(\mathbf{x} + e_k)$, so such a manipulation does not exist.
\end{proof}

\subsection{Impossibility of Stategyproofness for $3 \le g \le 6$}\label{sec:c3-impossibility}

Next, we show that for $g \in \{3,4,5,6 \}$, any anonymous mechanism that outputs stable solutions to the menu selection problem cannot also be strategyproof. 

\begin{theorem}\label{sp-3456}
Fix $g \in \{3,4,5,6\}$. Let $\mc{S}=\bigl\{(t,u)\in\N^2~\big|~u \ge 2t-1\bigr\}$. Let $\mc{P}$ be the set of all menu selection problems on complete preferences with $g$ public goods. Every anonymous stable mechanism $\mc{M} : \mc{P} \times\mc{S} \to 2^G$ is not strategyproof. 
\end{theorem}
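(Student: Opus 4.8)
The plan is to reduce an arbitrary $g$-alternative voting problem into our setting and then invoke the Gibbard--Satterthwaite theorem \citep{gibbard_manipulation_1973,satterthwaite_strategy-proofness_1975}. Since we only need to exhibit \emph{one} pair $(t,u)\in\mc S$ at which $\mc M$ is manipulable, I would fix a convenient choice: take any $t\ge 2$, set $u:=2t\!-\!1$ (so $(t,u)\in\mc S$), and restrict attention to the subfamily $\mc P_0\subseteq\mc P$ of problems with complete preferences and exactly $n:=2t\!-\!1$ agents. The crux is the structural claim that for every $P\in\mc P_0$, \emph{every} $(t,u)$-stable menu is a singleton. Indeed, with $n=2t\!-\!1<2t$ no menu $O$ with $|O|\ge 2$ can be $t$-feasible, since two offered goods would together need $\ge 2t>n$ assigned agents. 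Moreover, because preferences are complete, every agent prefers every good to the outside option, so $|j\succ\emptyset|=n=u$ for each $j$; hence $\emptyset$ is never $u$-uncontestable. As a $(t,u)$-stable menu exists by \cref{upper-bound-2} (recall $u=2t\!-\!1$ and $3\le g\le 6$), it can be neither of size $\ge 2$ nor empty, so a stable singleton always exists, and $\mc M\bigl(P,(t,u)\bigr)$ is a singleton.

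This lets me define a social choice function $f$ on $n$-voter profiles by $\{f(P)\}:=\mc M\bigl(P,(t,u)\bigr)$. Since fixing $n=2t\!-\!1$ and ranging over all complete preference profiles over $G$ recovers the \emph{full} domain of $n$-voter strict-preference profiles on $g$ alternatives, $f$ is a genuine social choice function. Its key properties transfer directly from $\mc M$. Anonymity of $f$ is immediate from anonymity of $\mc M$. For strategyproofness, note that because the output is always a singleton $\{f(P)\}$ and preferences are complete, each agent $i$ is assigned exactly $a_{\mc M(P,(t,u))}(i)=f(P)$; a unilateral deviation by $i$ keeps the problem in $\mc P_0$, so the mechanism's strategyproofness inequality $a_{\mc M(P,(t,u))}(i)\succcurlyeq_i a_{\mc M(P',(t,u))}(i)$ reads exactly as $f(P)\succcurlyeq_i f(P')$, i.e.\ $f$ is strategyproof as a voting rule.

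It remains to show $f$ is onto (equivalently, to establish unanimity), which is the heart of the transformation. I would argue that if all $n$ agents rank a good $j$ first, then $\{j\}$ is the \emph{unique} stable menu: $\{j\}$ is $t$-feasible (all $n\ge t$ agents consume it) and $u$-uncontestable (for each $k\neq j$, $|k\succ j|=0<u$), whereas any other singleton $\{k\}$ fails $u$-uncontestability since $|j\succ k|=n=u$. Hence $f(P)=j$, so $f$ is unanimous and in particular onto $G$ with $|G|=g\ge 3$. By Gibbard--Satterthwaite, $f$ must be dictatorial; but a dictatorial rule cannot be anonymous once $n\ge 2$ (here $n=2t\!-\!1\ge 3$), a contradiction. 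Therefore no anonymous stable $\mc M$ can be strategyproof.

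I expect the main obstacle to be precisely the structural claim of the first paragraph, i.e.\ forcing \emph{all} stable menus to be singletons while simultaneously ruling out the empty menu. These two requirements pull in opposite directions — excluding size-$\ge 2$ menus wants few agents, while making $\emptyset$ contestable wants many — and they are reconciled exactly at the boundary $n=u=2t\!-\!1$. Verifying that this same choice also yields the unanimity property (so that $f$ is onto and G-S applies) is the other delicate point; everything else is bookkeeping in translating mechanism-level properties to voting-rule properties.
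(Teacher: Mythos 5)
Your proof is correct and takes essentially the same route as the paper: restrict to complete-preference problems with exactly $n=u=2t\!-\!1$ agents, show every stable menu must then be a singleton (menus of size $\ge 2$ fail $t$-feasibility, $\emptyset$ fails $u$-uncontestability), observe that stability forces the induced voting rule to be unanimous/onto, and invoke Gibbard--Satterthwaite against anonymity. The only differences are cosmetic: you use the contrapositive form of Gibbard--Satterthwaite (strategyproof and onto implies dictatorial, contradicting anonymity) where the paper uses the manipulability form (unanimous and nondictatorial implies manipulable), and your restriction to $t\ge 2$ neatly sidesteps the degenerate one-agent case that the paper's phrasing glosses over.
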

\begin{proof}
Let $\mc{M}$ be an anonymous stable mechanism. Let $t\in\N$, let $u=2t-1$, and let $\mc{P}_u\subseteq\mc{P}$ be the set of all menu selection problems in $\mc{P}$ with $u$ agents. 

First, we claim that $\mc{M}\bigl(P,(t,u)\bigr)$ is a singleton set for all $P \in \mc{P}_u$. Because $P$ has $u$ agents and the agents' preferences are complete, $\emptyset$ is not $(t,u)$-stable, so $\bigl|\mc{M}\bigl(P,(t,u)\bigr)\bigr| \ge 1$. Additionally, if some $O \subseteq G$ with $|O| \ge 2$ were $t$-feasible, then there would be at least $2t$ agents. But there are only $u = 2t\!-\!1$ agents. Thus $\bigl|\mc{M}\bigl(P,(t,u)\bigr)\bigr| \le 1$. 

Because $\mc{M}\bigl(\cdot,(t,u)\bigr)$ always outputs a singleton set over $\mc{P}_u$, it induces a voting rule on $g$ alternatives and $u$ voters. Anonymity of $\mc{M}$ implies that this voting rule is nondictatorial. Moreover, because $\mc{M}$ is stable, this voting rule is unanimous. To see this, consider any problem $P\in\mc{P}_u$ in which some public good $j \in G$ is top ranked by all agents. Then $O = \{ j \}$ is the unique $(t,u)$-stable solution, because $|k \succ j| = 0$ for every public good $k \neq j$. So $\mc{M}\bigl(P,(t,u)\bigr) = \{j\}$. 

By the Gibbard-Satterthwaite Theorem \citep{gibbard_manipulation_1973,satterthwaite_strategy-proofness_1975}, a unanimous nondictatorial voting rule is manipulable. Therefore, the voting rule induced by $\mc{M}$ is manipulable, and hence so is $\mc{M}$. Therefore, $\mc{M}$ is not strategyproof.
\end{proof}

\section{Discussion}\label{sec:discussion}

In this paper we formulated and initiated the study of a matching problem of a new flavor, provided nontrivial upper and lower bounds on the existence of a solution, and investigated the possibility of a strategyproof solution while unearthing a connection to voting theory. The problem, as formulated, is rather stylized. Yet, some other matching problems that started out similarly stylized eventually found real-world applications. 

A main limitation of our results is that they only give practical unrestricted guarantees for small numbers of public goods. For example, an online university looking to select which subset of six electives to offer to students in a specific small department can make use of the fact that a $(11,21)$-stable menu exists. But, if that same university instead aims to select which subset of 600 electives to offer to the entire student body, then absent further assumptions, we only give a considerably less useful guarantee of the existence of a \mbox{$(11, 5981)$-stable} menu. Under added assumptions, however, our results imply useful guarantees even in large-scale settings. For example, suppose the university offering 600 electives has 100 departments, each of which aims to select a subset of six electives to offer, taking in account the preferences of students only in that department (and students most prefer to take electives within their department). Then, due to the ``block structure'' of the students' preferences, we guarantee the existence of a $(11,21)$-stable menu despite the large scale of this scenario.

For the case of two public goods, our strategyproof mechanism is natural and practical: offer both if feasible; otherwise offer the most popular option. For between three and six public goods, we show that so long as $u \ge 2t\!-\!1$, a stable menu of public goods exists, but it is generally impossible to find it in a strategyproof manner. 

We write ``generally impossible'' because there might still be opportunity for strategyproof matching in special cases. Our proof of \cref{sp-3456} (corresponding to \cref{sp-3456-informal} from the introduction) utilizes a special case with $n=u=2t\!-\!1$ to derive the impossibility.\footnote{In a nutshell, $u=2t\!-\!1$ is used to guarantee singleton menus, and hence a single winner of the voting rule, which is needed in order to apply the Gibbard--Satterthwaite Theorem. One might imagine relaxing this constrainst by appealing to variants of the Gibbard--Satterthwaite Theorem that allow for nonsingleton sets of winners, such as the Duggan–-Schwartz Theorem \citep{duggan_strategic_2000}. Under suitable assumptions this would lead to the conclusion that the voting rule (and hence the menu selection mechanism) is manipulable, however, the notion of manipulability that is guaranteed by the Duggan--Schwartz Theorem differs from ours, and crucially, counts as manipulations some things that we do not consider as such in the menu selection setting. For example, if one's true preferences are $2\succ1$, and $\{1,2\}$ is offered, in their setting misreporting so that only $\{2\}$ is offered is considered a profitable manipulation, while in our setting we do not consider it as such.} In many real-life settings, it might suffice to consider $u>2t\!-\!1$ (or more generally, a smaller set $\mc{S}$ of supported stability parameters) or $n>u$ (or more generally, a smaller set $\mc{P}$ of supported menu selection problems), and future research might wish to more finely map out the landscape of possible $\mc{S}$ and $\mc{P}$ for which strategyproof matching might be possible, including for higher values of $g$.

A potential barrier for usability from which our analysis does not suffer is computational tractability. Indeed, all our upper bounds are computationally tractable in the sense that a stable solution can be feasibly found. Our main open question from the introduction asks for tighter bounds for guaranteed existence of a stable solution, and it is plausible that tighter upper bounds obtained in follow-up work might be harder (or impossible) to translate into tractable algorithms. An open question that we already have in a similar vain is on the precise asymptotic running time for our tighter upper bound from \cref{upper-bound-g-minus-2-formal} (corresponding to \cref{upper-bound-g-minus-2} from the introduction). The algorithm that follows from our proof of that theorem is quadratic in the number of public goods, while the algorithm that follows from the proof of the looser upper bound from \cref{upper-bound-g-minus-1} is linear. In \cref{app:structural-upper} we prove an upper bound slightly weaker than that from \cref{upper-bound-g-minus-2-formal} but nonetheless stronger than that from \cref{upper-bound-g-minus-1}, which still yields a linear-time algorithm. Is there a linear-time algorithm for finding a stable menu for all~$g,t,u$ such that $\frac{u-1}{t-1}\ge g\!-\!2$?

A final direction for generalization is moving from complete preferences to incomplete preferences. Many of our proofs work also for incomplete preferences, but for some it remains to be seen whether they can be extended.

\printbibliography[heading=bibintoc]

\addtocontents{toc}{\vspace{1em}}

\appendix 

\addtocontents{toc}{\protect\setcounter{tocdepth}{1}}

\section{Supporting Lemmas}\label{app:lemmas}

Below we write $\sqcup$ to denote disjoint union. 

\begin{definition}\label{def:complete-embedding}
Let $\problem$ be a menu selection problem on $g$ public goods (with possibly incomplete preferences). Its \textit{complete embedding} is defined to be the following menu selection problem $\anotherproblem$: 
\begin{itemize}
    \item Let $G' := G \sqcup \{g + 1 \}$. 
    \item Let $N' := N \sqcup \{ n + 1, \dots, n + u \}$. 
    \item Pick $i \in N'$. We specify agent $i$'s preferences as follows: 
    \begin{itemize}
        \item Case 1: $i \in N$. Suppose agent $i$ has preference $j_1 \succ_i \dots \succ_i j_{g'}$ in $\problem$, where $g' \leq g$. Then we define agent $i$'s preferences in $\anotherproblem$ as follows: $j_1 \succ_i' \dots \succ_i' j_{g'} \succ_i' g\!+\!1 \succ_i' \text{(arbitrary ranking of remaining public goods)}$.
    \item Case 2: $i \notin N$. In this case we define agent $i$'s preferences in $\anotherproblem$ as follows: $g\!+\!1 \succ_i' \text{(arbitrary ranking of remaining public goods)}$. 
    \end{itemize}
\end{itemize}
\end{definition}

\begin{lemma}[Reduction from incomplete preferences on $g$ public goods to complete preferences on $g\!+\!1$ public goods]\label{lemma:suff-complete}
Fix $u \geq t$. Let $\problem$ be a menu selection problem and let $\anotherproblem$ be its complete embedding. Then we have the following:
\begin{enumerate}[(1)]
    \item If $O \subseteq G'$ is $(t,u)$-stable for $\anotherproblem$, then $g\!+\!1 \in O$. 
    \item For each $O \subseteq G$, we have that $O$ is $(t,u)$-stable for $\problem$ if and only if $O \cup \{ g\!+\!1 \}$ is $(t,u)$-stable for $\anotherproblem$. 
    \item There exists a $(t,u)$-stable solution for $\problem$ if and only if there exists a $(t,u)$-stable solution for its complete embedding $\anotherproblem$. 
    \end{enumerate}

\end{lemma}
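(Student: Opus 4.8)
The plan is to prove the three parts in the order (2), (1), (3), since part~(2) carries all the bookkeeping while part~(3) is a formal consequence of parts~(1) and~(2). Throughout I would write $O' := O \cup \{g\!+\!1\}$ and keep in mind the three structural features of the embedding: every agent has complete preferences over $G'$; each of the $u$ added agents ranks $g\!+\!1$ first; and each original agent $i$ ranks $g\!+\!1$ immediately after their originally ranked goods $j_1,\dots,j_{g'}$ and before every originally unranked good.

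For part~(2) the key first step is to pin down the assignment in the embedding when $g\!+\!1 \in O'$. I would observe that each added agent is assigned to $g\!+\!1$ (their top choice, which lies in $O'$), and that each original agent $i$ is assigned to $a_O(i)$ whenever $a_O(i) \neq \perp$, and to $g\!+\!1$ whenever $a_O(i) = \perp$ — the latter because once none of $i$'s originally ranked goods lie in $O$, the best good of $O'$ available to $i$ is $g\!+\!1$. Feasibility then transfers cleanly: for $j \in O$ the number of agents assigned to $j$ is identical in the two problems, while $g\!+\!1$ always absorbs at least the $u \geq t$ added agents and is therefore automatically $t$-feasible. Hence $O'$ is $t$-feasible for the embedding iff $O$ is $t$-feasible for the original problem, and this is the only place where the hypothesis $u \geq t$ is used.

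The more delicate step of part~(2) is uncontestability. I would fix $j \in G' \setminus O' = G \setminus O$ and compute the embedding lobby $|j \succ O'|$. No added agent contributes, since each ranks $g\!+\!1 \in O'$ on top. An original agent $i$ contributes iff $j \succ_i' g\!+\!1$ and $j \succ_i' k$ for all $k \in O$; the first condition forces $j$ to be one of $i$'s originally ranked goods (as $g\!+\!1$ sits just below them), and, granting this, $j$ automatically beats every originally unranked good of $O$ while beating an originally ranked good $k \in O$ exactly when $j \succ_i k$ in the original problem. This is precisely the condition counted by the original lobby $|j \succ O|$, so the two lobby sizes coincide; since the quantification range $G \setminus O$ is the same in both problems, $O'$ is $u$-uncontestable for the embedding iff $O$ is $u$-uncontestable for the original, finishing part~(2).

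Part~(1) is then short: if a $(t,u)$-stable menu $O \subseteq G'$ had $g\!+\!1 \notin O$, then all $u$ added agents would prefer $g\!+\!1$ to everything in $O$, giving $|g\!+\!1 \succ O| \geq u$ and contradicting $u$-uncontestability. Part~(3) follows formally: the forward direction applies part~(2) to send a stable $O \subseteq G$ to the stable menu $O \cup \{g\!+\!1\}$, while the reverse direction uses part~(1) to write any stable menu of the embedding as $O \cup \{g\!+\!1\}$ with $O \subseteq G$ and then invokes part~(2) to conclude that $O$ is stable for the original problem. I expect the identity $|j \succ O'| = |j \succ O|$ in part~(2) — which hinges on $g\!+\!1$ being inserted exactly below the originally ranked goods, so that the embedding lobby sees neither the added agents nor any spurious contribution from unranked goods — to be the only step requiring genuine care; the remainder is direct unwinding of the definitions.
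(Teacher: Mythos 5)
Your proposal is correct and takes essentially the same approach as the paper: part~(1) via the $u$ added agents whose favorite is $g\!+\!1$, part~(2) via the count identity $|k \succ O\setminus\{k\}| = |k \succ' (O \cup \{g\!+\!1\})\setminus\{k\}|$ together with automatic $t$-feasibility of $g\!+\!1$, and part~(3) as a formal consequence. The paper states the key identity without proof, whereas you verify it in detail (and, more carefully than the paper, note that the reverse direction of~(3) needs~(1), not just~(2)); this is a fleshed-out version of the same argument.
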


\begin{proof}
By construction, there are at least $u$ agents whose favorite public good is $g\!+\!1$. This implies (1). Moreover, (3) is directly implied by (2). Finally, (2) is implied by the following: 
\begin{equation*}
    |k \succ O\setminus \{ k \}| = |k \succ' (O \sqcup \{g\!+\!1\}) \setminus \{ k \}|\text{ for all }k \in G,\ O \subseteq G.\qedhere 
\end{equation*}\end{proof}

\begin{lemma}[Removing rarely ranked public goods]\label{lemma:suff-rarely-ranked}
Fix $u \geq t$ and let $\problem$ be a menu selection problem. Let $G' \subseteq G$ be the set of all public goods which are ranked by at least $t$ agents. Now define $\subproblem$ to be another menu selection problem, in which each agent $i$ has the same preferences as in $\problem$, but with any public goods in $G \setminus G'$ removed from their ranking. Then $O \subseteq G$ is $(t,u)$-stable for $\problem$ if and only if $O \subseteq G'$ and $O \subseteq G'$ is $(t,u)$-stable for $\subproblem$. 
\end{lemma}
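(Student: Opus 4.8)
The plan is to reduce the entire equivalence to a single structural observation: for any menu $O \subseteq G'$, the agent assignment $a_O$ is literally identical in $\problem$ and in $\subproblem$. Indeed, $\subsuccnospace_i$ is obtained from $\succ_i$ by deleting the goods in $G \setminus G'$, which leaves the relative order of the goods in $G' \supseteq O$ untouched; hence each agent's favorite good of $O$ is the same under $\succ_i$ and under $\subsuccnospace_i$. Consequently, for a menu $O \subseteq G'$, the feasibility count $|a_O^{-1}(j)|$ for $j \in O$ and the lobby size $|j \succ O|$ for any potential contestant $j \in G'$ coincide across the two problems (applying the observation to $O$ and to $O \cup \{j\} \subseteq G'$). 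With this in hand, the only genuinely new phenomena to handle are (i) why a stable menu of $\problem$ must already be contained in $G'$, and (ii) why goods in $G \setminus G'$ can be safely ignored as potential contestants in $\problem$.

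First I would establish the forward direction. Suppose $O$ is $(t,u)$-stable for $\problem$. Since $O$ is $t$-feasible, every $j \in O$ is the favorite good (within $O$) of at least $t$ agents, and each such agent necessarily ranks $j$; therefore $j$ is ranked by at least $t$ agents, i.e.\ $j \in G'$. This yields $O \subseteq G'$, so the structural observation applies. Feasibility then transfers immediately, since the assignment counts agree. For uncontestability, every potential contestant $j \in G' \setminus O$ has the same lobby size in both problems, so $u$-uncontestability of $O$ in $\problem$ (which rules out all contestants in $G \setminus O$, in particular those in $G' \setminus O$) gives $u$-uncontestability of $O$ in $\subproblem$.

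The backward direction is where the main obstacle lies. Assuming $O \subseteq G'$ is $(t,u)$-stable for $\subproblem$, feasibility again transfers verbatim via the structural observation. For uncontestability in $\problem$, however, we must rule out \emph{all} contestants $j \in G \setminus O$, including those in $G \setminus G'$ that do not even appear in $\subproblem$. The key point---and the only place the hypothesis $u \geq t$ is used---is that such a good $j$ is by definition ranked by fewer than $t$ agents, and since an agent must rank $j$ in order to prefer it over the goods of $O$, its lobby satisfies $|j \succ O| \leq (\text{number of agents ranking } j) < t \leq u$; hence $j$ cannot contest. Contestants $j \in G' \setminus O$ are handled by the coincidence of lobby sizes together with $u$-uncontestability in $\subproblem$. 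Combining the two cases gives $u$-uncontestability of $O$ in $\problem$, completing the equivalence. I expect the only error-prone part to be the bookkeeping behind the bound ``a good must be ranked in order to be preferred,'' which caps its lobby by the number of agents ranking it; everything else is a direct consequence of the assignment coincidence.
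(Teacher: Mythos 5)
Your proposal is correct and takes essentially the same approach as the paper: your assignment-coincidence observation for menus $O \subseteq G'$ is exactly the paper's key identity $|k \succ O \setminus \{k\}| = |k \subsucc O \setminus \{k\}|$ for all $k \in G'$ and $O \subseteq G'$, and both of your directions match the paper's step for step, including deriving $O \subseteq G'$ from $t$-feasibility and using $u \geq t$ to cap the lobby of any $j \in G \setminus G'$ by the (fewer than $t$) agents who rank it.
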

\begin{proof}
The main idea is to observe the following fact:
\begin{equation}\label{eq:reduction-rarely}
|k \succ O \setminus \{ k \}| = |k \subsucc O \setminus \{ k \}|\text{ for all }k \in G',\ O \subseteq G'.
\end{equation}
Suppose $O \subseteq G$ is $(t,u)$-stable for $\problem$. Then $O \subseteq G'$, because any $k \in G \setminus G'$ is ranked by agents in $\problem$ less than $t$ times, so certainly $|k \succ O \setminus \{ k \}| < t$. Then \cref{eq:reduction-rarely} implies that $O \subseteq G'$ is $(t,u)$-stable for $\subproblem$. 

Now suppose $O \subseteq G'$ is $(t,u)$-stable for $\subproblem$. By \cref{eq:reduction-rarely}, $O$ is $t$-feasible for $\problem$, and for $u$-uncontestability, it remains to check that $|k \succ O| < u$ for all $k \in G \setminus G'$. This holds because each $k \in G \setminus G'$ is ranked $< t$ times, so $|k \succ O | < t \leq u$.
\end{proof}

\begin{lemma}[Offering sufficiently popular public goods]\label{lemma:suff-popular}
Fix $u \geq t$ and let $\problem$ be a menu selection problem. Let $G' \subseteq G$ be the set of all public goods $k$ for which there exist (strictly) fewer than $t$ agents with favorite public good $k$. Now define $\subproblem$ to be another menu selection problem, in which each agent $i$ has the same preferences as in $\problem$, but with any public goods in or ranked below $G \setminus G'$ removed from their ranking. Then $O \sqcup (G \setminus G') \subseteq G$ is $(t,u)$-stable for $\problem$ if and only if $O \subseteq G'$ is $(t,u)$-stable for $\subproblem $. 
\end{lemma}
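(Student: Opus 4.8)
The plan is to follow exactly the template of the two preceding reduction lemmas (\cref{lemma:suff-complete,lemma:suff-rarely-ranked}): isolate a single counting identity relating demand in $\problem$ to demand in the subproblem, and then read off both feasibility and uncontestability from it. Write $\hat O := O \sqcup (G \setminus G')$ for the candidate menu in $\problem$. The identity I would establish is
\[ |k \succ \hat O \setminus \{ k \}| = |k \subsucc O \setminus \{ k \}| \quad\text{for all } k \in G' \text{ and } O \subseteq G'. \]
Everything in the lemma is meant to fall out of this one equation, just as \cref{eq:reduction-rarely} drove \cref{lemma:suff-rarely-ranked}.

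The first and main step is to prove the identity. The structural observation to record is that, by \cref{def:cap}'s reading of the construction, $\subsuccnospace_i$ is obtained from $\succ_i$ by truncating agent $i$'s list immediately before their most preferred good in $G \setminus G'$ (deleting that good and everything below it). Hence a good of $G'$ survives in $\subsuccnospace_i$ if and only if $i$ ranks it strictly above every good of $G \setminus G'$ they rank, and on the surviving goods $\succ_i$ and $\subsuccnospace_i$ induce the same order. With this in hand I would argue both inclusions. For the forward direction, an agent contributing to the left side prefers $k$ to all of $\hat O \setminus \{ k \}$, in particular to every good of $G \setminus G'$, so $k$ survives truncation and still beats every surviving member of $O$, hence contributes on the right. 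For the converse, such an agent has $k$ surviving (so $k$ outranks every good of $G \setminus G'$ they rank) and $k$ beating every member of $O \setminus \{ k \}$ they rank: those that survive because the orders agree, and those deleted by truncation because they lie below the agent's favorite popular good while $k$ lies above it. This last subcase---members of $O$ that the agent ranks in $\succ_i$ but that truncation removes---is the one delicate point, and I expect it to be the main obstacle; it is handled precisely by the fact that $k$ outranks the agent's favorite good of $G \setminus G'$ and therefore outranks everything truncated away.

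Given the identity, the lemma follows by reading off the two desiderata. For feasibility, note that every $m \in G \setminus G'$ has at least $t$ agents for whom $m$ is the global favorite, and each such agent is still assigned to $m$ in any menu containing $m$; so each $m \in G \setminus G'$ is automatically $t$-feasible in $\hat O$, and applying the identity to $k \in O$ shows that $t$-feasibility of $\hat O$ in $\problem$ is equivalent to $t$-feasibility of $O$ in $\subproblem$. For uncontestability, observe that $G \setminus \hat O = G' \setminus O$, which is exactly the complement of $O$ inside the subproblem's good set $G'$; applying the identity to each $k \in G' \setminus O$ shows that $\hat O$ is $u$-uncontestable in $\problem$ if and only if $O$ is $u$-uncontestable in $\subproblem$. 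Combining the two equivalences gives that $\hat O$ is $(t,u)$-stable for $\problem$ exactly when $O$ is $(t,u)$-stable for $\subproblem$, as claimed.
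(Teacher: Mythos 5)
Your proof is correct and takes essentially the same approach as the paper: both rest on the single counting identity $|k \succ (O \sqcup (G \setminus G')) \setminus \{ k \}| = |k \subsucc O \setminus \{ k \}|$ for $k \in G'$, $O \subseteq G'$, combined with the observation that every good in $G \setminus G'$ is automatically $t$-feasible because at least $t$ agents rank it as their global favorite. The only difference is one of detail, not of route: the paper states the identity as an observation, while you additionally spell out its proof via the truncation structure (including the delicate case of goods in $O$ removed by truncation), which is a sound elaboration of the same argument.
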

\begin{proof}
The main idea is to observe the following fact: 
\begin{equation}\label{eq:reduction-popular}
|k \succ (O \sqcup (G \setminus G')) \setminus \{ k \} | = |k \subsucc O \setminus \{ k \}|\text{ for all }k \in G',\ O \subseteq G'.
\end{equation}
The $\Rightarrow$ direction follows directly from \cref{eq:reduction-popular}. Now suppose $O \subseteq G'$ is $(t,u)$-stable for $\subproblem$. We wish to show $O \sqcup (G \setminus G')$ is $(t,u)$-stable for $\problem$. By \cref{eq:reduction-popular}, $O \sqcup (G \setminus G')$ is $u$-uncontestable for $\problem$, and for $t$-feasibility, it remains to check $|j \succ (O \sqcup (G \setminus G')) \setminus \{ j \}| \geq t$ for all $j \in G \setminus G'$. This holds because $|j| \geq t$ for every  $j \in G \setminus G'$, by assumption. 
\end{proof}

\section{Inadequacy of a Greedy Algorithm for Finding a Stable Menu}\label{app:greedy}

In \cref{thm:positive-four}, we prove that a $(t,u)$-stable menu of public goods exists for all menu selection problems with $u \geq 2t\!-\!1$ in the case $g=4$ by showing that if the greedy algorithm (\cref{alg:greedy}) cycles, then we can study the cycle transcript and find a $(t,u)$-stable $O \subseteq G$ directly. 

Unfortunately, this approach does not immediately generalize to larger $g$. In this setting, the greedy algorithm (\cref{alg:greedy}) may cycle, and when it does, it is not possible to determine a $(t,u)$-stable menu from the cycle transcript alone. 

\begin{example}[Cycle transcript alone does not determine menu]
Consider the following two instances of menu selection problems, for $g = 5$, $t = 4$, and $u = 7$: 
\begin{center}
\begin{tabular}{c|c}
Agent Cohort A & Agent Cohort B \\
$t=4, u=7, n=15$ & $t=4, u=7, n=15$ \\
\hline 
\begin{tabular}{cc}
Agent type & Quantity \\
\hline
$5 \succ 4 \succ 3 \succ 2 \succ 1$ & 2 \\
$4 \succ 3 \succ 2 \succ 1 \succ 5$ & 2 \\
$3 \succ 2 \succ 1 \succ 5 \succ 4$ & 2 \\
$2 \succ 1 \succ 5 \succ 4 \succ 3$ & 2 \\
$1 \succ 5 \succ 4 \succ 3 \succ 2$ & 2 \\
$5 \succ 4 \succ 2 \succ 3 \succ 1$ & 1 \\
$1 \succ 5 \succ 3 \succ 4 \succ 2$ & 1 \\
$2 \succ 1 \succ 4 \succ 5 \succ 3$ & 1 \\
$3 \succ 2 \succ 5 \succ 1 \succ 4$ & 1 \\
$4 \succ 3 \succ 1 \succ 2 \succ 5$ & 1 
\end{tabular}
&
\begin{tabular}{cc}
Agent type & Quantity \\
\hline
$5 \succ 4 \succ 3 \succ 2 \succ 1$ & 1 \\
$4 \succ 3 \succ 2 \succ 1 \succ 5$ & 1 \\
$3 \succ 2 \succ 1 \succ 5 \succ 4$ & 1 \\
$2 \succ 1 \succ 5 \succ 4 \succ 3$ & 1 \\
$1 \succ 5 \succ 4 \succ 3 \succ 2$ & 1 \\
$5 \succ 3 \succ 1 \succ 4 \succ 2$ & 2 \\
$1 \succ 4 \succ 2 \succ 5 \succ 3$ & 2 \\
$2 \succ 5 \succ 3 \succ 1 \succ 4$ & 2 \\
$3 \succ 1 \succ 4 \succ 2 \succ 5$ & 2 \\
$4 \succ 2 \succ 5 \succ 3 \succ 1$ & 2 
\end{tabular}
\end{tabular}
\end{center}
Running \cref{alg:greedy} on both Agent Cohort A and Agent Cohort B results in the same cycle: 
{\small \begin{multline*} \{1, 2\} \to \{1, 2, 3\}\to\{2, 3\}\to\{2, 3, 4\}\to\{3, 4\}\to\{3, 4, 5\}\to \\* \{4, 5\}\to \{1, 4, 5\}\to \{1, 5\}\to \{1, 2, 5\}\to \{1,2 \}.\end{multline*}}
However, the set of $t$-feasible and $u$-uncontestable $O \subseteq G$ for Agent Cohort A is 
\[ \mc{O}_A := \{ \{1,3\}, \{4,1\}, \{2,4 \}, \{5,2\}, \{3,5\} \}, \]
whereas the set of $t$-feasible and $u$-uncontestable $O \subseteq G$ for Agent Cohort B is 
\[ \mc{O}_B := \{ \{1,2,4\}, \{3,4,1\}, \{5,1,3\}, \{2,3,5\}, \{4,5,2 \}\}.\]
Since $\mc{O}_A \cap \mc{O}_B = \emptyset$, the cycle transcript alone cannot determine a $t$-feasible and $u$-uncon\-testable $O \subseteq G$: more intricate information about the precise quantities and types of agents is needed. 

Moreover, the following summary statistics also do not suffice to determine a $t$-feasible and $u$-uncontestable $O \subseteq G$, as they are identical for Agent Cohort A and Agent Cohort B: 
\begin{itemize}
\item Number of agents $n$ (we have $n=15$ for both)
\item Number of agents who rank public good $k$ in place $p$ (this quantity is $3$ for all $k$ and $p$, due to the cyclic symmetry of both agent bodies)
\end{itemize}
\end{example}

\section[Reformulation of the Covering Condition for Complete Preferences]{Reformulation of the Covering Condition for\texorpdfstring{\\}{ }Complete Preferences}\label{app:matrix-structure-complete-case}

In \cref{sec:computational-c56-proof}, we prove that every menu selection problem on $g \le 6$ public goods with $u \ge 2t\!-\!1$ has a $(t,u)$-stable menu via a reduction to a polyhedra covering problem. Specifically (\cref{thm:polyhedra-reduction}), we show that for fixed $g,t,u$, the existence of $(t,u)$-stable solutions for all menu selection problems over preferences $\mc{P}$ is equivalent to showing $\Z_{\geq 0}^{|\mc{P}|} \subseteq \bigcup_{O \subseteq G} P_O^{g,t,u}$, where $P_O^{g,t,u}$ are polyhedra depending only on $O, g,t,u, \mc{P}$. 

In this section we consider the special case of menu selection problems for which all agents have complete preferences over the public goods $G$. In this case, we show that the polyhedra $P_O^{g,t,u}$ have a simple structure. It may be possible to utilize this structure to efficiently determine whether the covering condition $\Z_{\geq 0}^{|\mc{P}|} \subseteq \bigcup_{O \subseteq G} P_O^{g,t,u}$ is satisfied for larger values of $g$. 

Fix $g,t,u$, and $G = \{ 1, \dots, g \}$. Let $\mc{P}$ be the set of all $g!$ linear orders of $G$. Fix the following ``lexicographic-cyclic'' ordering of $\mc{P}$: when $g=2$, the ordering is $[1\succ2, 2\succ1]$, when $g=3$, the ordering is $[1\succ2\succ3, 1\succ3\succ2, 2\succ3\succ1, 2\succ1\succ3, 3\succ1\succ2,3\succ 2\succ1]$, and so on. A general definition is below. 

\begin{definition}[``Lexicographic-cyclic'' ordering]
Let $g \in \N$ and $G = \{1, \dots, g \}$. Let $\mc{P}$ denote the set of all $g!$ linear orders of $G$. Let $\oplus_g$ denote addition modulo $g$ (where we take the result to be in $\{1,2,\dots, g \}$). Then the \textit{lexicographic-cyclic ordering} of $\mc{P}$, which we denote $(\succ_G^{(i)})_{i \in \{1, \dots, g!\}}$, is defined recursively as follows. 
\begin{itemize}
    \item If $g=2$, then $\succ_G^{(1)} := 1\succ2$, and $\succ_G^{(2)} := 2\succ1$. 
    \item Now suppose $g \ge 3$. Pick $i \in \{1, \dots, g!\}$. Write $i = (g\!-\!1)!(i_1-1) + i_2$ for $i_1, i_2 \in \{1, \dots, c\}$. Then set $\succ_G^{(i)} := i_1 \succ [\succ_{-j \oplus_g O \setminus j}^{i_2} \oplus_g j]$.\footnote{This is well-defined because $-j \oplus_g O \setminus j = \{1, \dots, g\!-\!1\}$.}
\end{itemize}
\end{definition}

Recall (see \cref{sec:computational-c56-proof}) that each polyhedron $P_O^{g,t,u}$ is given by $P_O^{g,t,u} = \{ x \in \Z^{|\mc{P}|}_{\geq 0}:\ D A_O^{g,t,u} x \geq b \}$, where $D$ is a simple $g \times g$ diagonal matrix ($D^{(ii)} = 1$ if $i \in O$ and $-1$ otherwise) and $b$ is a simple $g \times 1$ vector ($b^{(i)} = t$ if $i \in O$ and $-(u\!-\!1)$ otherwise). Thus, most of the complexity in $P_O^{g,t,u}$ comes from $A_O^{g,t,u}$. The following result demonstrates how to specify the matrix $A_O^{g,t,u}$ as a recurrence relation involving simple matrix operations.

\begin{lemma}\label{lemma-calculation}
First we fix some notation\footnote{Note: in this lemma only, we use $i$ to denote an index, rather than an agent.}:
\begin{itemize}
\item Let $e_i \in \R^g$ be a row vector with $i$th entry $1$ and all other entries $0$.
\item Let $\sigma$ be the permutation $(12\cdots g)$ on $G$. Let $P_\sigma$ be the $g\times g$ permutation matrix of $\sigma$.
\item For all $i \in G$, let $f_i: [g\!-\!1] \to G\setminus \{i \}$ be the unique bijection such that $f_i(k) \equiv k + i \mod g$. (So, for example, for $g=4$ and $i=2$, we have $f_2(1) =3, f_2(2)=4, f_2(3)=1$.)
\end{itemize}
  Then, for all $g \geq 2$, and for all $O \subseteq G$,
\[ A_O^{g,t,u} = I_g \otimes \underbrace{\begin{pmatrix}1 & \cdots & 1\end{pmatrix}}_{g!}  + \sum_{i \in G \setminus O} e_i \otimes \left[ P_{\sigma^{i-1}} \begin{pmatrix}\mathbf{0} \\ I_{g\!-\!1} \end{pmatrix} A^{g\!-\!1, t, u}_{f_i^{-1}(O)} \right] \]
\end{lemma}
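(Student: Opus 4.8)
The plan is to verify the claimed identity directly, entry by entry, exploiting the block structure that the lexicographic-cyclic ordering imposes on the columns of $A_O^{g,t,u}$. A preliminary simplification I would make is to record a single uniform description of the entries that merges the two cases in the definition of $A_O^{g,t,u}$: for every $i\in G$, every $O\subseteq G$, and every $\succ\in\mc{P}$, the entry $\bigl(A_O^{g,t,u}\bigr)_{i,\succ}$ equals $1$ precisely when $i$ is the most-preferred good of $O\cup\{i\}$ under $\succ$ (equivalently, $i\succ j$ for all $j\in O\setminus\{i\}$, vacuously so when $O\setminus\{i\}=\emptyset$). This holds whether or not $i\in O$, and lets me avoid repeatedly splitting the feasibility/uncontestability cases.

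Next I would extract the combinatorial content of the lexicographic-cyclic ordering that powers the recursion. The $g!$ columns split into $g$ consecutive blocks of length $(g\!-\!1)!$ (this is also the length of the all-ones row vector in the first summand, so that $I_g\otimes(\mathbf{1}_{(g-1)!})$ has the correct $g\cdot(g\!-\!1)!=g!$ columns), where block $p$ consists of exactly the preferences whose top choice is $p$, and within block $p$ the column at offset $c_2\in\{1,\dots,(g\!-\!1)!\}$ is the preference whose restriction to $G\setminus\{p\}$ is the image under $f_p$ of the $c_2$-th lexicographic-cyclic ordering of $\{1,\dots,g\!-\!1\}$. Equivalently, $f_p$ is an order isomorphism from the $c_2$-th ordering on $\{1,\dots,g\!-\!1\}$ to the sub-ranking $\succ|_{G\setminus\{p\}}$. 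This is essentially a restatement of the recursive definition, so the work here is to unpack that definition cleanly.

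With these two facts, I would carry out the entrywise comparison, fixing a row $i$ and a column $\succ$ lying in block $p$ at offset $c_2$. The first summand $I_g\otimes(\mathbf{1}_{(g-1)!})$ contributes $1$ to this entry exactly when $i=p$, and each summand $e_{p'}\otimes[\cdots]$ of the correction term is supported on the columns of block $p'$, hence contributes only when $p'=p$ and $p\in G\setminus O$. When $p\in O$ the correction term vanishes on this column, and the uniform description gives $\bigl(A_O^{g,t,u}\bigr)_{i,\succ}=1$ iff $i=p$ (since the global top $p$ lies in $O$, no good other than $p$ can be top of $O\cup\{i\}$), matching the first summand alone. When $p\notin O$, exactly one correction block survives, namely $M:=P_{\sigma^{p-1}}\binom{\mathbf{0}}{I_{g-1}}A^{g-1,t,u}_{f_p^{-1}(O)}$, and I must compare $M_{i,c_2}$ (plus the first summand's $1$-at-$i=p$) against the uniform description of the left-hand entry.

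The crux — and the step I expect to be the main obstacle — is the index bookkeeping aligning the three reindexing operations inside $M$: the row permutation $P_{\sigma^{p-1}}$, the embedding $\binom{\mathbf{0}}{I_{g-1}}$ that zeroes out one row, and the good-relabeling $f_p$. Writing $\tau=\sigma^{p-1}$, so that row $i$ of $M$ reads off row $\tau^{-1}(i)$ of the embedded $(g\!-\!1)$-matrix, I would prove the arithmetic identities $\tau^{-1}(p)=1$ and $\tau^{-1}(i)=f_p^{-1}(i)+1$ for $i\neq p$ (both modulo $g$, noting $f_p^{-1}(i)\in\{1,\dots,g\!-\!1\}$ so no wraparound occurs). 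The first shows the embedded zero row lands exactly in row $p$, so $M$ contributes nothing there and there is no double-counting with the first summand; the second shows $M_{i,c_2}=\bigl(A^{g-1,t,u}_{f_p^{-1}(O)}\bigr)_{f_p^{-1}(i),\,c_2}$ for $i\neq p$. Invoking the order-isomorphism property of $f_p$, together with the observation that the global top $p\notin O\cup\{i\}$ is irrelevant to which good is top of $O\cup\{i\}$, I would identify this sub-entry with $\bigl(A_O^{g,t,u}\bigr)_{i,\succ}$, closing the match in every case. The base of the recursion ($g=2$, resting on the trivial $1\times1$ matrices $A^{1,t,u}_{\{1\}}=A^{1,t,u}_{\emptyset}=(1)$) is a direct check, and since both sides are computed from explicit definitions, the identity follows without any separate inductive hypothesis on correctness.
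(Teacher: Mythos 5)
Your proof is correct and follows essentially the same route as the paper's: an entry-by-entry verification that splits on whether the column's top choice $p$ lies in $O$ and whether the row index equals $p$, handling the remaining case ($p\notin O$, $i\neq p$) via the relabeling bijection $f_p$. The only difference is one of explicitness: you spell out the index arithmetic ($\tau^{-1}(p)=1$ and $\tau^{-1}(i)=f_p^{-1}(i)+1$ for $\tau=\sigma^{p-1}$) and the order-isomorphism step that the paper's proof compresses into ``this is a matter of relabeling.''
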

\begin{proof}
Consider the $ij$-entry of $A_O^{g,t,u}$, denote it $a_{ij}$. Write $j := (g\!-\!1)! (j_1-1) + j_2$, for $j_2 \in \{1, \dots, (g\!-\!1)! \}$ and $j_1 \in \{1, \dots, g \}$. 

Case 1: $j_1  \in O$. Then the preference relation $\succ$ corresponding to $j$ ranks $j$ first. This means $a_{ij} = 1$ if $i = j_1$ and $a_{ij} = 0$ otherwise. The left term on the right-hand side has $ij$-entry equal to 1 if $i = j_1$ and 0 otherwise. The right term on the right-hand side has $ij$-entry equal to 0, because are summing over terms $e_{j_1} \otimes \left[ \text{something} \right]$ for $j_1 \in G \setminus O$, so only columns with $j_1 \in G \setminus O$ are affected. 

Case 2: $j_1 \notin O$ and $i = j_1$. Then the preference relation $\succ$ corresponding to $j$ ranks $i$ first, but $i \notin O$. Thus $a_{ij} = 1$, because $i$ is preferred over $O$ to under $\succ$. The left term on the right-hand side has $ij$-entry equal to 1, because $i = j_1$. The right term on the right-hand side has $ij$-entry equal to 0, because $P_{\sigma^{i-1}} \begin{pmatrix}\mathbf{0} & I_{g\!-\!1} \end{pmatrix}^T$ has an $i$th row consisting of all zeros. 

Case 3: $j_1 \notin O$ and $i \neq j_1$. Let $B_{j_1} \in \Z^{(g\!-\!1) \times (g\!-\!1)!}$ be the result of removing the $i$th row and all columns of the form $(g\!-\!1)! (j_1' - 1) + j_2$ from $A_O^{g,t,u}$, for $j_1' \neq j_1$, $j_1' \in \{1, \dots, g \}$, and $j_2 \in \{1, \dots, (g\!-\!1)! \}$. Notice on the right-hand side, the left term does not contribute to any of the indices in $B_{j_1}$. Similarly, on the right-hand side, the only summand in the right term which contributes to any of the indices in $B_{j_1}$ is $e_{j_1} \otimes \left[ P_{\sigma^{j_1-1}} \begin{pmatrix}\mathbf{0} & I_{g\!-\!1} \end{pmatrix}^T A^{g\!-\!1, t, u}_{f_{j_1}^{-1}(O)} \right]$. 

So it suffices to show $B_{j_1}$ is equal to $P_{\sigma^{j_1-1}} \begin{pmatrix}\mathbf{0} & I_{g\!-\!1} \end{pmatrix}^T A^{g\!-\!1, t, u}_{f_{j_1}^{-1}(O)}$ with the $i$th row removed (which, recall from Case 2, is just a zero row). This is a matter of relabeling. In $B_{j_1}$, the columns refer to all preferences with top choice $j_1$, and the rows refer to all public goods except $j_1$. So we are left with a copy of $A^{g\!-\!1, t, u}$ on the menu $G \setminus \{ j_1 \}$, instead of $\{1, \dots, g\!-\!1 \}$, and so when we reorder the rows (with $I_{g\!-\!1}$) and relabel the public goods ($f_{j_1}^{-1}(O)$), we obtain a matrix equality.
\end{proof}

\section{Validity of Improved Lower Bound for Complete Preferences}\label{app:c7example-complete}

In \cref{lower-bound-2311}, we show that for all $g \ge 7$ and $u \leq 23 \lfloor \frac{t-1}{11} \rfloor$, there exists a menu selection problem for which no $(t,u)$-stable solution exists. We do so by constructing a set of $70x$ agents, for which there exists no $(11x+1, 23x)$-stable menu of public goods, where $x := \lfloor \frac{t-1}{11} \rfloor$ (see \cref{tab:better-lower-bound}). 

In this construction, each agent only ranks their top three or four choices. One might ask whether this result holds for the case of complete preferences. (In particular, notice that $t$-feasibility is more difficult to violate in the case of complete preferences than for incomplete preferences.) In fact, all possible ways of ``filling in'' the preferences in \cref{tab:better-lower-bound} by adding missing public goods to the bottom of the preference list yield a menu selection problem with complete preferences for which there exists no $(11x+1, 23x)$-stable solution. 

Let $\succ$ refer to the original truncated preferences, and $\succ'$ refer to the (arbitrarily) completed preferences. Set $u' := 23x$ and $t' := 11x+1$. By \cref{lower-bound-2311}, $\problem$ does not have a $(t', u')$-stable solution. We aim to show that $\extproblem$ does not have a $(t', u')$-stable solution. 

Similarly to \cref{sec:improved-lower-bound}, by monotonicity (\cref{lemma:monotonicity}), it suffices to show the following.
\begin{enumerate}[(1)]
    \item Every $O \subseteq \{1, 2,3,4,5,6,7\}$ with $|O|=3$ is not $u'$-uncontestable for $\extproblem$. 
    \item Every $O \subseteq \{1, 2,3,4,5,6,7\}$ with $|O|=4$, except $O$ of the form $\{k,k\!+\!1,k\!+\!2,k\!+\!3\}$\footnote{With wraparound modulo 7 when applicable.}, is not $t'$-feasible for $\extproblem$. 
    \item Every $O \subseteq \{1,2,3,4,5,6,7\}$ with $|O|=4$ of the form $\{k,k\!+\!1,k\!+\!2,k\!+\!3\}$ is not $u'$-uncontestable for $\extproblem$. 
    \item Every $O \subseteq \{1,2,3,4,5,6,7\}$ with $|O|=5$ is not $t'$-feasible. 
\end{enumerate}

First we show (1). Notice for all $O \subseteq \{1, 2,3,4,5,6,7\}$ with $|O|=3$ and all $k \notin O$ we have $|k \succ O| \leq |k \succ' O|$. Since all such $O$ are not $u'$-uncontestable for $\problem$, they are also not $u'$-uncontestable for $\extproblem$. 

Next we show (2). Consider any $O$ of the form specified in (2). Then every agent has an element of $O$ ranked in their original preferences (before filling in). Thus for all $k \in O$, $|k \succ O\setminus \{ k \}| = |k \succ' O \setminus \{ k \} |$. Since all such $O$ are not $t'$-feasible for $\problem$, they are also not $t'$-feasible for $\extproblem$. 

Next we show (3). Consider any $O$ of the form specified in (3). We claim $|k-1 \succ' O| \geq u'$. Notice every agent has an element of $O \cup \{k-1\}$ ranked in their original preferences (before filling in). Thus $|k-1 \succ' O| = |k-1 \succ O| = |3 \succ \{4,5,6,7\}| = 5x+5x+5x+3x+3x+3x+x+x=26x \geq u'$.

Finally we show (4). Then as in (2), for any $O$ with $|O|=5$, every agent has an element of $O$ ranked in their original preferences (before filling in). Thus for all $k \in O$, $|k \succ O\setminus \{ k \}| = |k \succ' O \setminus \{ k \} |$. Since all such $O$ are not $t'$-feasible for $\problem$ (by \cref{lemma:monotonicity} and \cref{lemma:o4}), they are also not $t'$-feasible for $\extproblem$.

\section[Upper Bound with Linear-Time Algorithm via Structural Results]{Upper Bound with Linear-Time Algorithm via\texorpdfstring{\\}{ }Structural Results}\label{app:structural-upper}

In this section, we explore the possibility of obtaining upper bounds via structural results. Specifically, we show that when the ratio between $u$ and $t$ is sufficiently large, either there exists a $(t,u)$-stable menu of size $\leq 1$, or the agents' preferences must have a highly specific structure. From this specific structure, we can compute a $(t,u)$-stable solution using a simple algorithm. The ratio between $u$ and $t$ needed to secure such a structural result exceeds that of \cref{upper-bound-g-minus-2-formal}, and as a result, this technique does not yield an improved upper bound for the case of complete preferences. For the case of general, possibly incomplete preferences, the result we present in this section, \cref{prop:structural-upper-bound}, is the best upper bound we have for general $g$, as we do not know how to extend the techniques of \cref{upper-bound-g-minus-2-formal} to the setting of incomplete preferences. Furthermore, the tools employed here may shed additional light on the menu selection problem. 

As a warmup, consider the case $u \geq (g\!-\!1)(t\!-\!1)$. (This is slightly stronger than the trivial upper bound presented in \cref{upper-bound-g-minus-1}.) Assume as usual (by applying \cref{lemma:suff-popular} and \cref{lemma:suff-rarely-ranked}) that $|k| < t$ and $\{ k \}$ is $t$-feasible for all $k \in G$. Suppose no $(t,u)$-stable menu of size $\leq 1$ exists. Then for every public good $k \in G$, there exists a public good $j \in G$ such that $|j \succ k| \geq u \geq (g\!-\!1)(t\!-\!1)$. The agents with favorite public good $k$ cannot satisfy $j \succ k$, so every agent whose favorite public good is not $k$ must satisfy $j \succ k$. Repeating this argument for all $k \in G$, and relabeling the public goods, we get that every agent with favorite public good $k$ must satisfy all of $2 \succ 1, 3 \succ 2, \dots, 1 \succ g$ except $k+1 \succ k$. This imposes a rigid structure on the agents' preferences: it implies that the set of agents has preferences exactly as follows (see \cref{table:structure-warmup}). 

\begin{table}[h!t]\small 
\centering 
\begin{tabular}{c|c}
        Quantity & Preference \\
        \hline 
        $t\!-\!1$ & $1 \succ \ \ g\ \ \ \succ g\!-\!1 \succ \dots \succ 3 \succ 2$ \\
        $t\!-\!1$ & $2 \succ \ \  1\ \  \  \succ \ \ g \ \ \ \succ \dots \succ 4 \succ 3$ \\
        $t\!-\!1$ & $3 \succ \ \ 2\ \ \   \succ \ \ 1 \ \ \ \succ \dots \succ 5 \succ 4$ \\
        $\vdots$ & \hspace{-0.5cm} $\vdots$ \\
        $t\!-\!1$ & $g \succ g\!-\!1 \succ g\!-\!2 \succ \dots \succ 2 \succ 1$
    \end{tabular}
\caption{When $u \geq (g\!-\!1)(t\!-\!1)$, and no $(t,u)$-stable menu of size $\leq 1$ exists, then the agents' preferences must be exactly of this form.} 
\label{table:structure-warmup}
\end{table}

The following lemma shows that we can still guarantee much structure in the agents' preferences if the ratio between $u$ and $t$ is relaxed to $u \geq (g\!-\!1-\epsilon)(t\!-\!1)$. 

\begin{definition}
Let $\problem$ be a menu selection problem and $G' \subseteq G$ with $G' = \{j_1, \dots, j_m \}$ for some $2 \leq m \leq g$. An agent $i \in N$ is said to be \textit{regular} (with respect to $G'$) if they have favorite public good $j_k \in G'$ and satisfy all of $j_2 \succ_i j_1, j_3 \succ_i j_2, \dots, j_1 \succ_i j_m$ except $j_{k+1} \succ_i j_k$. Otherwise, an agent is called \textit{irregular} (with respect to $G'$).
\end{definition}

\begin{lemma}[Structural lemma]\label{lemma:structural-lemma}
Let $g \geq 5$, $t \geq 2$, and $u \geq (g\!-\!1-\epsilon)(t\!-\!1)$ for any $\epsilon \geq 0$. Then at least one of the following holds:
\begin{enumerate}[(1)]
    \item There exists some $O \subseteq G$ with $|O| \leq 1$ such that $O$ is $(t,u)$-stable. 
    \item There exists a subset of public goods $G' \subseteq G$, with $|G'| = m$ for some $2 \leq m \leq g$, such that $\leq m\epsilon (t\!-\!1)$ agents are irregular with respect to $G'$. 
\end{enumerate}
\end{lemma}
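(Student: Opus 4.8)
The plan is to prove the dichotomy by assuming option~(1) fails and extracting from the preferences a cyclic ``beats-by-a-large-margin'' structure whose vertex set $G'$ witnesses option~(2). First I would reduce, exactly as in the warm-up preceding \cref{table:structure-warmup}, via \cref{lemma:suff-popular} and \cref{lemma:suff-rarely-ranked}, to the case in which $|k| < t$ and $\{k\}$ is $t$-feasible for every $k \in G$; agents ranking no good may be discarded for the analysis below, since they influence neither the stability of any menu nor any quantity $|j \succ k|$, so that $n = \sum_{k \in G} |k| \le g(t\!-\!1)$. Now suppose option~(1) fails. Then for each $k \in G$ the menu $\{k\}$ is $t$-feasible but not $(t,u)$-stable, hence $u$-contestable, so there is a good $j_k \ne k$ with $|j_k \succ k| \ge u$. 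Reading $k \mapsto j_k$ as a fixed-point-free map $G \to G$ and iterating from an arbitrary good produces a directed cycle; after relabelling I obtain a set $G' = \{j_1, \dots, j_m\}$ with $2 \le m \le g$ and $|j_{\ell+1} \succ j_\ell| \ge u$ for every $\ell$ (indices taken cyclically). This $G'$ is the candidate witness for option~(2).

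The second step is a double count over the $m$ cyclic inequalities $j_{\ell+1} \succ j_\ell$. Let $s_i$ denote the number of these inequalities that agent $i$ satisfies. Summing over $\ell$ gives $\sum_i s_i = \sum_{\ell=1}^m |j_{\ell+1} \succ j_\ell| \ge mu$. On the other hand, transitivity of $\succ_i$ rules out satisfying all $m$ inequalities at once (that would force $j_1 \succ_i j_1$), so $s_i \le m\!-\!1$ for every agent; I set $d_i := (m\!-\!1) - s_i \ge 0$. Combining the two observations, $\sum_i d_i = n(m\!-\!1) - \sum_i s_i \le g(t\!-\!1)(m\!-\!1) - mu$, and substituting $u \ge (g\!-\!1\!-\!\epsilon)(t\!-\!1)$ and simplifying gives $\sum_i d_i \le (t\!-\!1)\bigl(m\epsilon - (g\!-\!m)\bigr)$.

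The final step bounds the irregular agents by splitting on where the favorite lies. An agent whose favorite is outside $G'$ is counted among the at most $(g\!-\!m)(t\!-\!1)$ agents whose favorite is one of the $g\!-\!m$ goods of $G \setminus G'$ (each the favorite of fewer than $t$ agents, by the reduction). An agent whose favorite is some $j_k \in G'$ necessarily violates the $k$-th inequality; if it is moreover irregular it violates at least one further inequality, so $d_i \ge 1$, whence the number of such agents is at most $\sum_i d_i$. Adding the two contributions, the number of agents irregular with respect to $G'$ is at most $(g\!-\!m)(t\!-\!1) + \sum_i d_i \le (g\!-\!m)(t\!-\!1) + (t\!-\!1)\bigl(m\epsilon - (g\!-\!m)\bigr) = m\epsilon(t\!-\!1)$, which is precisely option~(2).

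The crux, and the step I expect to be most delicate, is handling the agents whose favorite lies outside $G'$: such agents can satisfy as many as $m\!-\!1$ of the cyclic inequalities (their restriction to $G'$ being a cyclic-descending order) and are therefore invisible to the deficiency count $\sum_i d_i$. The idea that rescues the argument is that the popularity reduction $|k| < t$ caps their number at $(g\!-\!m)(t\!-\!1)$, and this quantity cancels exactly against the $-(g\!-\!m)(t\!-\!1)$ term in the bound on $\sum_i d_i$, leaving the clean estimate $m\epsilon(t\!-\!1)$. I would also need to verify the routine bookkeeping of the opening reduction (that it preserves the relevant quantities and the failure of option~(1)), but the combinatorial heart is the cycle extraction together with this two-sided count.
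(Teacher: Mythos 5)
Your proposal is correct, and its skeleton coincides with the paper's proof: reduce via \cref{lemma:suff-popular} so that every good is the favorite of fewer than $t$ agents, extract from the failure of option (1) a cycle $G'=\{j_1,\dots,j_m\}$ with $|j_{\ell+1}\succ j_\ell|\ge u$ cyclically, and bound the irregular agents by a double count over the $m$ cycle edges. The difference lies only in how that count is organized. The paper bounds each edge separately, $|j_\ell\succ j_{\ell+1}|-|j_\ell|\le \sum_{h\ne j_\ell}|h|-u\le\epsilon(t\!-\!1)$, sums over $\ell$, and asserts that this sum dominates the number of irregular agents; you instead count deficiencies $(m\!-\!1)-s_i$ globally and add the at most $(g\!-\!m)(t\!-\!1)$ agents whose favorite lies outside $G'$. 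For complete preferences the two counts are literally the same quantity: since then $|j_\ell\succ j_{\ell+1}|=n-|j_{\ell+1}\succ j_\ell|$, the paper's sum equals $\sum_i\bigl((m\!-\!1)-s_i\bigr)+\sum_{h\notin G'}|h|$. What your bookkeeping buys is that the step the paper leaves implicit---why every irregular agent is hit at least once---becomes explicit and case-free: an irregular agent whose favorite is $j_k\in G'$ violates the $k$-th edge and at least one other, hence has deficiency at least $1$, while all remaining irregular agents are absorbed into the $(g\!-\!m)(t\!-\!1)$ term. This also makes your argument robust under incomplete preferences, where an agent can violate an edge by ranking neither endpoint and thus be invisible to the paper's per-edge count; likewise, your extra invocation of \cref{lemma:suff-rarely-ranked} (guaranteeing every singleton is $t$-feasible) supplies the justification the paper omits when it asserts that every singleton menu is $u$-contestable.
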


\begin{proof}
See \cref{sec:proofs-structural-lemmas}.
\end{proof}

\cref{lemma:structural-lemma} implies that in either case, we can find a $(t,u)$-stable solution using a simple algorithm. In the case of (1), we can find a $(t,u)$-stable $O$ by brute force (only $g\!+\!1$ menus to check). Otherwise, if the agents' preferences have the structure of (2), then we can find a $(t,u)$-stable menu using \cref{alg:greedy-structured}.

\begin{algorithm}
\caption{Computing a $(t,u)$-stable solution from structured preferences}\label{alg:greedy-structured}
\begin{algorithmic}
\State Given: agents' preferences with the structure of (2) in \cref{lemma:structural-lemma}.
\State Without loss of generality, relabel public goods so that $G' = \{1, \dots, m \}$.
\State $x_k \gets$ number of regular agents with favorite public good $k$. 
\State Initialize $O\gets \emptyset$.
\State Set $p \gets 1$. (This is the currently `pending' public good.) 
\For{$k = 2, 3, \dots, m$}
\If{$x_p + \dots + x_k \geq t$}
\State Set $O \gets O \cup \{ p \}$. 
\State Set $p \gets k + 1$.
\EndIf
\EndFor\\
\Return $O$.
\end{algorithmic}
\end{algorithm}

\begin{lemma}\label{lemma:stable-set-from-structure}
Let $g \geq 5$, $t \geq 2$, and $u \geq (g\!-\!1-\epsilon)(t\!-\!1)$ with $\epsilon < \frac16$. Then \cref{alg:greedy-structured} outputs a $(t,u)$-stable menu $O$ with $|O| \geq 2$. 
\end{lemma}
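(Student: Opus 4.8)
The plan is to analyze \cref{alg:greedy-structured} directly, showing that its output $O$ is simultaneously $t$-feasible and $u$-uncontestable, and that $|O|\geq 2$. We work under the structural assumption (2) of \cref{lemma:structural-lemma}: after relabeling, $G'=\{1,\dots,m\}$, and all but at most $m\epsilon(t\!-\!1)$ agents are \emph{regular} with respect to $G'$, meaning a regular agent with favorite good $k$ satisfies all of the ``cyclic successor'' relations $2\succ1, 3\succ2, \dots, 1\succ m$ except $k\!+\!1\succ k$. As preliminaries I would first invoke \cref{lemma:suff-popular} and \cref{lemma:suff-rarely-ranked} (legitimate since $u\geq t$) so that we may assume $|k|<t$ and that each singleton $\{k\}$ is $t$-feasible; and I would record the key counting consequence of regularity: for any menu $O\subseteq G'$ and any regular agent, the agent is assigned within $O$ to the good among $O$ that is the \emph{cyclically-closest predecessor} of their favorite good, so that the agents assigned to a pending block $\{p,p\!+\!1,\dots,k\}$ are precisely those whose favorite lies in that contiguous arc (up to the small irregular error term).

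Next I would verify $t$-feasibility. The algorithm adds $p$ to $O$ exactly when the accumulated regular mass $x_p+\dots+x_k$ of a contiguous block reaches $t$; by the regularity structure, each such good placed in $O$ is the favorite-or-best-available option for that entire contiguous block of regular agents, so it is consumed by at least $x_p+\dots+x_k\geq t$ regular agents. Here I must be careful that the \emph{irregular} agents do not perturb feasibility in the wrong direction—but since feasibility only requires a lower bound on consumption, the regular agents alone suffice, so the bookkeeping is one-directional. I would also confirm that the greedy accumulation never ``loses'' a block at the end in a way that breaks feasibility of already-committed goods; the committed goods are fixed once added, so their feasibility is locked in.

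The genuinely delicate step is $u$-\emph{uncontestability} together with the lower bound $|O|\geq 2$, and this is where the hypothesis $\epsilon<\tfrac16$ and $g\geq 5$ must be spent. The concern is an unoffered good $j$ (either some $j\in G'\setminus O$ sitting inside a ``gap'' where the greedy accumulator had not yet reached $t$, or possibly $j\notin G'$) attracting a lobby of size $\geq u$. For $j\in G'\setminus O$, the regular agents preferring $j$ over all of $O$ are essentially those in $j$'s pending gap, whose total regular mass is strictly less than $t$ by the algorithm's stopping condition; adding the at most $m\epsilon(t\!-\!1)$ irregular agents and using $u\geq(g\!-\!1-\epsilon)(t\!-\!1)$ should keep the lobby below $u$, and here I would bound $m\leq g$ and feed $\epsilon<\tfrac16$ to close the arithmetic. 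The $|O|\geq2$ claim is the sharpest pressure point: I would argue that if $|O|\leq1$, then at most one contiguous block reached mass $t$, forcing all of the (nearly $(g\!-\!1)(t\!-\!1)$ or more) regular mass into at most two arcs; combined with $|k|<t$ and the total number of regular agents, some unoffered good must then collect a lobby of size $\geq u$, contradicting uncontestability of the output. I expect this combinatorial accounting—simultaneously controlling the gap masses, the irregular slack $m\epsilon(t\!-\!1)$, and the cardinality lower bound—to be the main obstacle, and it is precisely the constant $\tfrac16$ that will have to absorb the worst case. I would organize these estimates into two or three short claims (feasibility; uncontestability of each gap good; $|O|\geq2$) rather than a single monolithic computation.
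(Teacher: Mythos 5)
Your overall architecture (feasibility from contiguous blocks; uncontestability via a gap analysis plus the $m\epsilon(t\!-\!1)$ irregular slack; a counting argument for $|O|\ge 2$) matches the paper's proof, but there is a genuine quantitative gap at the crux of the uncontestability step. You claim that the regular agents lobbying for an unoffered $j \in G'\setminus O$ are ``those in $j$'s pending gap, whose total regular mass is strictly less than $t$ by the algorithm's stopping condition.'' That is false. If $k''$ and $k'$ are the consecutive offered goods with $k''<j<k'$, the regular lobby for $j$ is $x_j+\dots+x_{k'-1}$, which includes the index $k'\!-\!1$ at which the accumulator of $k''$ first crossed $t$; the stopping condition only gives $x_{k''}+\dots+x_{k'-2}<t$, so the lobby can be as large as $2(t\!-\!1)$ (take $x_{k'-2}=x_{k'-1}=t\!-\!1$ and the rest of the block zero). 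Worse, for $j$ lying beyond the last offered good, the lobby decomposes into \emph{three} near-$t$ pieces---the truncated block of the last added good, the single index where its accumulator crossed $t$, and the never-completed final pending block---giving a bound of $3(t\!-\!1)$. This $3(t\!-\!1)$ is exactly what the hypothesis must absorb: the required arithmetic is $3+m\epsilon<g\!-\!1\!-\!\epsilon$, i.e.\ $\epsilon<\frac{g-4}{g+1}$, which at $g=5$ equals precisely $\frac16$. You correctly predicted that $\frac16$ ``will have to absorb the worst case,'' but your accounting never produces that worst case; with your $<t$ bound the lemma would appear to hold for $\epsilon$ up to roughly $\frac{g-2}{g+1}$, which the proof cannot deliver.

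Second, your argument for $|O|\ge 2$ aims at the wrong contradiction. If $|O|\le 1$, it is not the case that ``some unoffered good must collect a lobby of size $\ge u$, contradicting uncontestability of the output''---uncontestability of the output is the conclusion being proven, so it cannot serve as the thing contradicted, and in fact when $|O|\le 1$ all lobbies are \emph{small}. The correct contradiction is against the number of agents: when $|O|\le 1$, the stopping conditions together with $x_k\le|k|<t$ (via \cref{lemma:suff-popular}) force the total regular mass to be at most $3(t\!-\!1)$, hence $n\le(3+m\epsilon)(t\!-\!1)<(g\!-\!1\!-\!\epsilon)(t\!-\!1)\le u\le n$, which is absurd. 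Note that the bound $n\ge u$ is not part of statement (2) of \cref{lemma:structural-lemma} itself but is inherited from its proof (consecutive goods of $G'$ satisfy $|j_{i+1}\succ j_i|\ge u$), so your proof must import that fact explicitly. Your counting ingredients (bounded arcs, $|k|<t$, total regular mass) are the right ones, but the state $|O|\le 1$ is simply impossible given the agent count; no contestable good is, or needs to be, exhibited.
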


\begin{proof}
See \cref{sec:proofs-structural-lemmas}.
\end{proof}

\cref{lemma:structural-lemma} and \cref{lemma:stable-set-from-structure} immediately imply the following upper bound.

\begin{proposition}\label{prop:structural-upper-bound}
Let $g \geq 5$, $t \geq 2$ and $u \geq (g\!-\!1-\epsilon)(t\!-\!1)$, where $\epsilon < \frac16$. Then every menu selection problem with $g$ public goods has a $(t,u)$-stable solution. 
\end{proposition}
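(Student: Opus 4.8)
The plan is to combine the structural dichotomy of \cref{lemma:structural-lemma} with the constructive guarantee of \cref{lemma:stable-set-from-structure}, reducing the existence question to a simple two-case analysis. First I would check that the hypotheses line up: since $g \geq 5$, $t \geq 2$, and $0 \leq \epsilon < \frac16$, both lemmas are applicable (the structural lemma needs only $\epsilon \geq 0$, whereas the second lemma needs the sharper $\epsilon < \frac16$). Invoking \cref{lemma:structural-lemma} then places every menu selection problem into one of its two branches.

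In the first branch, \cref{lemma:structural-lemma} hands us a menu $O$ with $|O| \leq 1$ that is already $(t,u)$-stable, so there is nothing left to do. In the second branch, we are guaranteed a subset $G' \subseteq G$ with at most $m\epsilon(t\!-\!1)$ irregular agents, i.e.\ the preferences are ``nearly regular'' with respect to $G'$. Here I would feed these structured preferences into \cref{alg:greedy-structured}; by \cref{lemma:stable-set-from-structure}, its output is a $(t,u)$-stable menu (in fact of size $\geq 2$). Since these two branches are exhaustive, a $(t,u)$-stable solution exists in either case, which is exactly the claim.

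The combination itself is immediate; the genuine content---and the only place where the constraint $\epsilon < \frac16$ is actually consumed---lives inside the two lemmas rather than in their juxtaposition. Accordingly, the main obstacle at this level is purely bookkeeping: verifying that the two branches returned by \cref{lemma:structural-lemma} are exhaustive and that the parameter regimes of the two lemmas are mutually compatible. The substantive work---showing that the failure of a small stable menu forces near-regular preferences, and that greedily grouping the regular agents absorbs the $\leq m\epsilon(t\!-\!1)$ irregular ones without breaking $t$-feasibility or $u$-uncontestability---is delegated to the respective proofs of \cref{lemma:structural-lemma,lemma:stable-set-from-structure}.
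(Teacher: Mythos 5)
Your proposal is correct and matches the paper's own proof, which likewise obtains \cref{prop:structural-upper-bound} as an immediate consequence of the dichotomy in \cref{lemma:structural-lemma} combined with the guarantee of \cref{lemma:stable-set-from-structure} for the structured case. Your parameter bookkeeping (the structural lemma needing only $\epsilon \geq 0$, the algorithmic lemma consuming $\epsilon < \frac16$) is exactly right.
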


\subsection{Proofs of \cref{lemma:structural-lemma} and \cref{lemma:stable-set-from-structure}}\label{sec:proofs-structural-lemmas}

\begin{proof}[Proof of \cref{lemma:structural-lemma}]
By \cref{lemma:suff-popular}, we can assume $|k| < t$ for all $k \in G$. Suppose there does not exist a $(t,u)$-stable menu $O \subseteq G$ with $|O| \leq 1$. Then for all $k \in G$, there exists some $j \in G$ such that $|j \succ k| \geq u$. Apply this operation successively to obtain, for some $m \in \{2, 3, \dots, g \}$, distinct public goods $j_1, \dots, j_m \in G$ such that $|j_{i+1} \succ j_i| \geq u$ for each $i \in \{1,\dots, m \}$ (set here $m+1 := 1$). Let $G' := \{j_1, \dots, j_m \}$. 

Now we bound the number of irregular agents from above. For each $j_i \in G'$, the total number of agents for whom $j_i$ is not their favorite public good is $\sum_{\ell \neq j_i} |\ell|$. Also, there are $\geq u$ agents with $j_{i+1} \succ j_i$. Thus
\begin{align*}
|j_i \succ j_{i+1}| - |j_i| &\leq  \sum_{\ell \neq j_i} |\ell| - u \\
&\leq (g\!-\!1)(t\!-\!1) - (g\!-\!1\!-\!\epsilon)(t\!-\!1) \\
&= \epsilon(t\!-\!1). 
\end{align*}
Summing over all $i \in \{1, \dots, m \}$ implies that there are at most $m\epsilon(t\!-\!1)$ irregular agents.
\end{proof}

\begin{proof}[Proof of \cref{lemma:stable-set-from-structure}]
Let $O$ be the menu outputted by \cref{alg:greedy-structured}. We would like to show that $O$ is $(t,u)$-stable. 

Recall that $x_k$ denotes the number of regular agents with favorite $k$, that is, agents who satisfy all of $2 \succ 1, 3 \succ 2, \dots, 1 \succ m$ except $k+1 \succ k$. Let $B$ denote the number of irregular agents. By \cref{lemma:structural-lemma}, $B \leq m\epsilon(t\!-\!1)$. 

First we show $|O| \geq 2$. Suppose $|O| \leq 1$. If $O = \emptyset$, then $x_1 + \dots + x_m < t$. If $O = \{ k \}$ for some $k \in G$, then $x_1 + \dots + x_{k-1} < t$ and $x_1 + \dots + x_{k} \geq t$. Additionally, because no other public good was added to $O$ after $k$, we have $x_{k+1} + \dots + x_m < t$. By \cref{lemma:suff-popular}, we can assume $x_k < t$. Thus the number of regular agents is at most $ 3(t\!-\!1)$. This implies that the total number of agents is $n \leq 3(t\!-\!1) + B\leq (m\epsilon + 3)(t\!-\!1)$. Additionally, $n \geq (g\!-\!1-\epsilon)(t\!-\!1)$. But $3+m\epsilon < g\!-\!1-\epsilon$ because $\epsilon < \frac{g-4}{g\!+\!1}$ whenever $\epsilon < \frac16$ and $g \geq 5$. Thus we have a contradiction, so $|O| \geq 2$. 

Now we show that $O$ is $(t,u)$-stable.  

Showing $t$-feasibility: Pick some $k \in O$. Let $k'$ be the least element of $O$ strictly greater than $k$, or $m+1$ if no such element exists. Then by the algorithm construction, $x_k + x_{k+1} + \dots + x_{k' - 1} \geq t$. Because a regular agent with favorite $\ell$ has type $\ell \succ \ell - 1\succ \dots \succ 1 \succ m \succ \dots \succ \ell + 1$ (possibly with public goods in $G \setminus G'$ inserted in this ranking), the regular agents with favorite $k, k+1, \dots, k' - 1$ prefer $k$ over any other public good in $O$. Without even using the irregular agents, we have shown that public good $k$ has at least $t$ agents. 

Showing $u$-uncontestability: Pick some $k \notin O$. If $k \in G \setminus G'$, then since $|O| \geq 2$, the maximum lobby size is $|k \succ O| \leq \sum_{\ell \in G \setminus O} |\ell| \leq (g\!-\!2)(t\!-\!1) < u$. It remains to consider the case $k \in G' \setminus O$. Again let $k'$ be the least element of $O$ strictly greater than $k$, or $g\!+\!1$ if no such element exists. Then the number of regular agents lobbying for $k$ over $O$ is $x_{k} + \dots + x_{k' - 1}$. Let $k''$ be the greatest element of $O$ strictly less than $k$. (Such $k''$ exists because $1 \in O$, since $O \neq \emptyset$.)

Case 1: $k' \in O$. Because $k'$ is the smallest value for which $x_{k''} + \dots + x_{k'-1} \geq t$, we have $x_{k''} + \dots + x_{k'-2} < t$. Also, $x_{k'-1} < t$. Thus 
\[ x_{k} + \dots + x_{k' - 1} \leq (x_{k''} + \dots + x_{k'-2}) + x_{k' - 1} \leq 2(t\!-\!1). \]

Case 2: $k' = m+1$. Then $k$ is larger than any element of $O$. Let $p$ be the element that was pending when $k''$ was added. Since $p$ was pending but never added, $x_p + \dots + x_m < t$. Since $k$ was added, $x_k + \dots + x_{p-1} \geq t$ but $x_k + \dots + x_{p-2} < t$. Also, $x_{p-1} < t$. Thus 
\[ x_{k} + \dots + x_{k' - 1} \leq (x_k + \dots + x_{p-2}) + x_{p-1} + (x_p + \dots + x_m) \leq 3(t\!-\!1). \]

Thus, in either case, the number of regular agents lobbying for $k$ over $O$ is $x_k + \dots + x_{k' - 1} \leq 3(t\!-\!1)$. Thus, the number of agents lobbying for $k$ over $O$ is 
\begin{align*}
|k \succ O| &\leq 3(t\!-\!1) + B \\
&\leq 3(t\!-\!1) + m\epsilon(t\!-\!1) \\
&= (3 + m\epsilon)(t\!-\!1) \\
&< (g\!-\!1-\epsilon)(t\!-\!1) \\
&\leq u, 
\end{align*}
where here we observe $3 + m\epsilon < g\!-\!1-\epsilon$ because $\epsilon < \frac{g-4}{g\!+\!1} = 1 - \frac{5}{c+1}$ whenever $\epsilon < \frac16$ and $g \geq 5$.
\end{proof}

\section{Connection to Core Stability and NTU Games}\label{app:core-ntu}

In this paper, we pose a new kind of matching problem and give conditions on problem parameters $g,t,u \in \mathbb{N}$ such that all menu selection problems on $g$ goods admit a $(t,u)$-stable menu. Our notion of stability, and specifically the condition of $u$-uncontestability---that no $u$ agents prefer some unprovided good $j \in G \setminus O$ over the provided menu $O \subseteq G$---may remind certain readers of the notion of \textit{core stability} from cooperative game theory. Indeed, our matching problem can be formulated as an NTU game, the core of which precisely corresponds to what we call $(t,u)$-stable menus. This raises the question as to whether existing techniques used to study the core of NTU games might also yield results in our setting. Accordingly, in this section, we first formally show how to formulate our matching problem as an NTU game, and second show how \textit{balancedness}, a core technique (no pun intended) for proving (non)emptiness of the core, falls short in our setting.

\newcommand{\ntugame}{\ensuremath{\bigl(\mc{N}, \mc{X}, \mc{V}, (\succsim_i')_{i \in \mc{N}}\bigr)}}

\subsection{Formulation as an NTU Game}\label{subsec:formulation-as-ntu}

We follow the notation of \citet[][section 13.5]{osborne_course_1994}. An \emph{NTU game} is specified by: a set of agents $\mc{N}$, a set of consequences $\mc{X}$, a value function $\mc{V}: 2^\mc{N} \to 2^\mc{X}$ that maps subsets of agents to subsets of consequences, and for each player $i \in \mc{N}$, a (weak) preference relation $\succsim_i'$ over the consequences $\mc{X}$. 

We begin by describing a procedure for converting an arbitrary menu selection problem into a corresponding NTU game. For simplicity, throughout this section, we assume that $n > u$ (recall $n$ denotes the number of agents).

\begin{definition}[A menu selection problem as an NTU game]\label{defn:convert-ntu}
Let $\problem$ be a menu selection problem and let $t,u \in \N$ such that $n > u$. The \textit{corresponding NTU game}, denoted \ntugame, is defined as follows:
\begin{itemize}
    \item Set $\mc{N} := N$ and $\mc{X} := 2^G$ (each agent is an agent, and each consequence is a menu).
    \item For each agent $i \in \mc{N}$,
    and for every two menus $O,O' \in \mc{X}$, define that $O \succ_i O'$ if and only if agent $i$'s favorite good in $O$ is strictly better than agent $i$'s favorite good in $O'$.
    \item The value function $\mc{V} : 2^N \to 2^{\mc{X}}$ is defined as follows:
    \[
\mc{V}(S) := 
\begin{cases}
    \emptyset & \text{ if }|S| < u,\\
    \text{all single-item menus that are $u$-feasible w.r.t.\ $S$} & \text{ if }u \le |S| \le n - 1,\\
    \text{all $t$-feasible menus} & \text{ if }S = \mc{N}.
\end{cases}
\]
\end{itemize}
\end{definition}

We next recall the definition of the core of an NTU game, and then prove that the set of $(t,u)$-stable menus for a menu selection problem coincides with the core of the corresponding NTU game. 

\begin{definition}[Core of an NTU game; see \cite{osborne_course_1994}]
    The \emph{core} of the NTU game \ntugame is the set of all $O \in \mc{V}(N)$ for which there is no coalition $S \in \mc{X}$ and $O' \in \mc{V}(S)$ such that $O' \succ_i' O$ for all $i \in S$. 
\end{definition}

\begin{proposition}\label{prop:ntu-equivalence}
Let $\problem$ be a menu selection and let $t,u \in \mathbb{N}$ with $u > t \ge 2$. Let  \ntugame be the corresponding NTU game.
Let $\mathcal{O} \subseteq 2^G$ be the set of $(t,u)$-stable menus of $\problem$ and let $\mc{O}' \subseteq V(\mc{N})$ be the core of the corresponding NTU game. Then $\mc{O} = \mc{O}'$.
\end{proposition}

\begin{proof}
Let $O$ be a menu. To show that $O$ is $(t,u)$-stable if and only if $O$ is in the core, it suffices to show that both of the following hold.
\begin{enumerate}
\item $O$ is $t$-feasible if and only if $O\in\mc
{V}(\mc{N})$,
\item $O$ is $u$-uncontestable if and only if there does not exist a coalition $S\in\mc{X}$ and $O'\in\mc{V}(S)$ such that $O' \succ_i' O$ for all $i \in S$.
\end{enumerate}
Item 1 follows by definition of $\mc
{V}(\mc{N})$. For Item 2, note that:
\begin{align*}
&\text{$O$ is not $u$-uncontestable}
\\ & \qquad \Longleftrightarrow \\
&\text{there exists a coalition $S\subseteq N$ with $|S|\ge u$ and a good $k\in G$ such that $k\succ_i O$ for all $i\in S$} \\
&\qquad \Longleftrightarrow \text{(set $O':=\{k\}$)} \\
&\text{there exists a coalition $S\in\mc{X}$ and $O'\in\mc{V}(S)$ such that $O' \succ_i' O$ for all $i \in S$.}\qedhere
\end{align*}
\end{proof}

Given $t,u\in\N$, due to \cref{prop:ntu-equivalence}, when no confusion can arise we sometimes conflate a menu selection problem with its corresponding NTU game. Specifically, we might refer to the the core of the NTU game as the core of the menu selection problem or say that the menu selection problem is balanced to mean that the NTU game is balanced (to be defined below).

\subsection{Inadequacy of Balancedness for Characterizing Stability}

In \cref{sec:simple-lower-upper-bounds,sec:small-c,sec:beyondc6}, we characterize the parameters $t,u,g \in \N$ for which it is the case that all menu selection problems over $g$ goods have a $(t,u)$-stable menu. Rephrased in terms of the core (by \cref{prop:ntu-equivalence}), we characterize for which parameters $t,u,g \in \N$ it is the case that for all menu selection problems over $g$ goods have a nonempty core (with respect to~$t,u$). 

There is a rich prior literature in cooperative game theory that seeks conditions that imply (non)emptiness of the core of NTU games. For example, in the case of TU games (a specific subclass of NTU games), nonemptiness of the core is equivalent to \emph{balancedness} \citep{bondareva_applications_1963,shapley_1967_on}. For NTU games, characterizing nonemptiness of the core is more complex \citep[see, e.g.][for a overview]{peleg_introduction_2007}.

As a case study, we consider a canonical sufficient condition for nonemptiness of the core of an NTU game proposed by \citet{scarf_core_1967}. Specifically, \citet{scarf_core_1967} shows that the core of a NTU game is nonempty if it is \textit{balanced} (a generalization of balancedness for TU games). Unlike in the setting of TU games (where balancedness is equivalent to nonemptiness of the core), in the setting of NTU games, balancedness is only a sufficient condition for the nonemptiness of the core of a NTU game.\footnote{See \citet{peters_2003_ntu} for a simple example of an NTU game in which balancedness is not necessary for nonemptiness of the core.} To demonstrate that balancedness gives a coarser sufficient condition for the nonemptiness of the core of a menu selection problem in our setting than our characterization does, we exhibit a menu selection problem that is not balanced yet has a nonempty core (i.e., has a stable menu). 

To apply the result of \citet{scarf_core_1967}, we first follow that paper by representing consequences, without loss of generality, as utility vectors in $\R^n$, where the $i$th component corresponds to the utility of agent $i$ when menu~$O$ is offered. We furthermore adopt the following convention, also without loss of generality: If an agent is assigned to their first-choice good, they have utility $g$, if to their second choice, they have utility $g-1$, and so on (the proof would go through in precisely the same way for any other monotone choice of utilities for each agent, and even with different choices for different agents). We denote the resulting NTU game by $\bigl(\mc{N}, \mcb{X}, \mcb{V}, (\succsim_i'')_{i \in \mc{N}}\bigr)$.\footnote{Note that $\succ_i''$ has a simple description: Given two consequences $x,x' \in \mcb{X}$ (utility vectors), it compares the $i$th elements of these vectors.}

We now recall \citeauthor{scarf_core_1967}'s (\citeyear{scarf_core_1967}) notion of balancedness for NTU games. Here we use the following notation: $\text{Proj}_S(x)$ denotes projection of the vector $x$ onto its components in $S$, and $\Down(A)$ denotes the downward closure of the set $A \subseteq \R^n$.\footnote{That is, $\Down(A) := \bigl\{ x  \in \R^n~\big|~\ \exists a \in A \text{ s.t. }x^{(i)} \le a^{(i)}\ \forall i \in [n] \bigr\}$.}

\begin{definition}[Balancedness for NTU games; see \cite{scarf_core_1967}]
Let $\mc{S} \subseteq 2^\mc{N}$ be a collection of coalitions. Weights $\{\lambda_S\}_{S \in \mc{S}} \in [0,1]$ of $\mc{S}$ are said to be \textit{balanced} if, for every agent $i \in \mc{N}$, we have $\sum_{S \in \mc{S}, S \ni i} \lambda_S = 1$. Accordingly, the collection $\mc{S}$ is said to be \textit{balanced} if there exist balanced weights for $\mc{S}$. Finally, an NTU game is said to be \textit{balanced} if for every balanced collection $\mc{S}$, for all $x \in \mcb{X}$, if $\text{Proj}_S(x) \in \Down\bigl(\mcb{V}(S)\bigr)$ for all $S \in \mc{S}$, then $x \in \Down(\mcb{V}\bigl(\mc{N})\bigr)$.
\end{definition}

\begin{theorem}[\cite{scarf_core_1967}]
A balanced NTU game has a nonempty core.
\end{theorem}

Finally, we are ready to exhibit an example of a menu selection problem that is not balanced yet has a nonempty core. Set $t,u \in \N$ with $t \ge 2$ and $u = 2t-1$. We consider $n := 6t-3$ agents with the preferences as listed in \cref{tab:ntu-example}.

\begin{table}
\centering
\begin{tabular}{c|c|c}

$\#$ & Preference & Coalition Membership \\
\hline 
$t-1$ & $1 \succ 2 \succ 3$ & $S_1$ \\
$t$ & $2 \succ 1 \succ 3$ & $S_1$ \\
\hline 
$2t-1$ & $2 \succ 3 \succ 1$ & $S_2$ \\
\hline 
$t-1$ & $3 \succ 2 \succ 1$ & $S_3 $\\
$t$ & $2 \succ 3 \succ 1$ & $S_3$\\
\end{tabular}
\caption{We consider $t-1$ agents with preferences $1 \succ 2 \succ 3$, $t$ agents with preferences $2 \succ 1 \succ 3$, and so on. For showing that this instance is not balanced, we consider a particular partition $\mc{N} = S_1 \sqcup S_2 \sqcup S_3$ of the agents, indicated here.}
\label{tab:ntu-example}
\end{table}

First, observe the core is nonempty, because $\{ 2 \}$ is $(t,u)$-stable. It remains to show that this menu selection problem is not balanced. Consider the balanced collection $\{ S_1, S_2, S_3 \}$, where the sets $S_1,S_2, S_3 \subseteq \mc{N}$ are as specified in \cref{tab:ntu-example}. (This collection is balanced because the weights $\lambda_{S_1} = \lambda_{S_2} = \lambda_{S_3} = 1$ are balanced.) 

We now compute $\mcb{V}(S_1), \mcb{V}(S_2), \mcb{V}(S_3)$, and $\mcb{V}(\mc{N})$. For notational ease, as agents with the same preferences have the same utilities from the same consequences, we write utility vectors in $\R^n$ as elements of $\R^5$. 
\begin{itemize}
    \item The only $u$-feasible single-item menus for $S_1$ are $\{1\}$ or $\{2\}$. These induce utilities $\begin{pmatrix}3&2&*&*&*\end{pmatrix}$ and $\begin{pmatrix}2&3&*&*&*\end{pmatrix}$ respectively. 
    \item The only $u$-feasible menu for $S_2$ is $\{ 2 \}$. This induces utilities $\begin{pmatrix}*&*&3&*&*&\end{pmatrix}$.
    \item The only $u$-feasible single-item menus for $S_3$ are $\{2\}$ or $\{3\}$. These induce utilities $\begin{pmatrix}*&*&*&2&3\end{pmatrix}$ and $\begin{pmatrix}*&*&*&3&2\end{pmatrix}$ respectively.
    \item The maximal $t$-feasible menus for $\mc{N}$ are $\{2 \}$ and $\{1,3 \}$. These induce utilities $\begin{pmatrix}2&3&3&2&3\end{pmatrix}$ and $\begin{pmatrix}3&2&2&3&2\end{pmatrix}$ respectively. (Non-maximal $t$-feasible menus do not affect the downward closure $\Down\bigl(\mcb{V}(\mc{N})\bigr)$.)
\end{itemize}

Now consider the utility vector $x = \begin{pmatrix}3&2&3&3&2\end{pmatrix}$. Then $\text{Proj}_{S_1}(x) \in S_1$ (same utilities as $\{1 \}$), $\text{Proj}_{S_2}(x) \in S_2$ (same utilities as $\{2 \}$), and $\text{Proj}_{S_3}(x) \in {S_3}$ (same utilities as $\{3 \}$). However, $x \notin \Down\bigl(\mcb{V}(\mc{N})\bigr)$. Thus, this menu selection problem is not balanced.

\end{document}